\documentclass[a4paper,UKenglish,cleveref, autoref, thm-restate,notab, nolineno]{socg-lipics-v2021}
\hideLIPIcs 


\usepackage{cite}
\usepackage{booktabs}
\usepackage{nicematrix}


\newtheorem{problem}[theorem]{Problem}
\newtheorem*{problem*}{Problem}
\newcommand{\R}{\mathbb{R}}
\DeclareMathOperator{\ppolylog}{polylog}
\DeclareMathOperator{\ppoly}{poly}

\renewcommand{\epsilon}{\varepsilon}


\bibliographystyle{plainurl}

\title{Conditional Lower Bounds for Dynamic Geometric Measure Problems}

\author{Justin Dallant}{Université libre de Bruxelles, Belgium} {Justin.Dallant@ulb.be}{https://orcid.org/0000-0001-5539-9037}{Supported by the French Community of Belgium via the funding of a FRIA grant.} 

\author{John Iacono}{Université libre de Bruxelles, Belgium} {john@johniacono.com}{https://orcid.org/0000-0001-8885-8172}{Supported by the Fonds de la Recherche Scientifique-FNRS under Grant no MISU F 6001 1.}

\authorrunning{J. Dallant and J. Iacono}

\Copyright{Justin Dallant and John Iacono}

\ccsdesc[500]{Theory of computation~Computational geometry}
\ccsdesc[500]{Theory of computation~Problems, reductions and completeness}

\keywords{Computational geometry, Fine-grained complexity, Dynamic data structures}

\acknowledgements{We thank Jean Cardinal for helpful discussions about the topic of this paper.}

\begin{document}

\maketitle

\begin{abstract}
    We give new polynomial lower bounds for a number of dynamic measure problems in computational geometry. These lower bounds hold in the Word-RAM model, conditioned on the hardness of either 3SUM, APSP, or the Online Matrix-Vector Multiplication problem [Henzinger et al., STOC 2015]. In particular we get lower bounds in the incremental and fully-dynamic settings for counting maximal or extremal points in $\R^3$, different variants of Klee's Measure Problem, problems related to finding the largest empty disk in a set of points, and querying the size of the $i$'th convex layer in a planar set of points. We also answer a question of Chan et al.\ [SODA 2022] by giving a conditional lower bound for dynamic approximate square set cover. While many conditional lower bounds for dynamic data structures have been proven since the seminal work of P\u{a}tra\c{s}cu [STOC 2010], few of them relate to computational geometry problems. This is the first paper focusing on this topic. Most problems we consider can be solved in $O(n\log n)$ time in the static case and their dynamic versions have only been approached from the perspective of improving known upper bounds. One exception to this is Klee's measure problem in $\R^2$, for which Chan [CGTA 2010] gave an unconditional $\Omega(\sqrt{n})$ lower bound on the worst-case update time. By a similar approach, we show that such a lower bound also holds for an important special case of Klee's measure problem in $\R^3$ known as the Hypervolume Indicator problem, even for amortized runtime in the incremental setting.
\end{abstract}

\section{Introduction}

In 1995, Gajentaan and Overmars \cite{Gajentaan1995} introduced the notion of 3SUM hardness, showing that a number of problems in computational geometry can not be solved in subquadratic time, assuming the so-called 3SUM problem can not be solved in subquadratic time.\footnote{In 2014, Gr{\o}nlund and Pettie \cite{Gronlund2014} showed that the 3SUM problem can be solved in (slightly) subquadratic time. The modern formulation thus replaces ``subquadratic'' with ``truly subquadratic'', i.e.\  $O(n^{2-\epsilon})$ for some constant $\epsilon>0$.}
The general approach of proving polynomial lower bounds based on a few conjectures about key problems has since grown into its own subfield of complexity theory known as \emph{fine-grained complexity}. The most popular of these conjectures concern the aforementioned 3SUM problem, All-Pairs-Shortest-Paths (APSP), Boolean Matrix Multiplication (BMM), Triangle finding in a graph, Boolean Satisfiability (SAT) and the Orthogonal Vectors problem (2OV) (see for example the introductory surveys by Bringmann \cite{Bringmann2019} and V.~V.~Williams \cite{Williams2019}).  Another problem which crops up as a bottleneck in computational geometry is Hopcroft's problem (see the recent paper by Chan and Zheng~\cite{ChanHopcroft2022}).

P\u{a}tra\c{s}cu \cite{Patrascu2010} launched the study of such polynomial lower bounds for dynamic problems, where instead of simply computing a function on a single input, we want to be able to update that input and get the corresponding output of the function without having to recompute it from scratch. In particular, he introduced the Multiphase problem and showed a polynomial lower bound on its complexity, conditioned on the hardness of the 3SUM problem. Using the Multiphase problem as a stepping stone, he showed conditional hardness results for a variety of dynamic problems. Improvements and other conditional lower bounds for dynamic problems (data structure problems) have since appeared in the literature \cite{AbboudW2014, AmirCLL2014_finegrained, Henzinger2015, KarczmarzL2015_finegrained, AbboudD2016, Kopelowitz2016, KopelowitzK2016-finegrained, BaswanaCC02016-finegrained, Dahlgaard2016_finegrained, HenzingerL0W2017_finegrained, AlmanMW2017_finegrained, BerkholzKS2017-finegrained, BerkholzKS2018-finegrained, AbboudWY2018, Probst2018-finegrained, ChenDGWXY2018-finegrained, AmirKLPPS2019-finegrained, BrandNS2019-finegrained}.
Of particular interest for the purpose of this work is a paper by Kopelowitz et al.\ \cite{Kopelowitz2016} where the approach of P\u{a}tra\c{s}cu is improved by showing a tighter reduction from 3SUM to the so-called Set Disjointness problem (an intermediate problem between 3SUM and the Multiphase problem), as well as a paper by V.~V.~Williams and Xu \cite{Williams2020}, which obtains a similar reduction from the so-called Exact Triangle problem. Also particularly relevant here is the work of Henzinger et al.~\cite{Henzinger2015}, who show that many of the known bounds on dynamic problems can be derived (and even strengthened) by basing proofs on a hardness conjecture about the Online Boolean Matrix-Vector Multiplication (OMv) problem which they introduce.

While computational geometry was one of first fields where conditional lower bounds for algorithms were applied, for example by showing that determining if a point set is in general position is 3SUM hard \cite{Gajentaan1995}, the progress in conditional lower bounds for dynamic problems has not found widespread application to computational geometry; recent work has been largely confined to improved upper bounds. The only examples before the first version of this paper\footnote{We exclude from this list examples where (conditional) bounds on the static case trivially imply polynomial bounds on the dynamic case.} relate to (approximate) nearest-neighbor search under different metrics (see the paper by Rubinstein \cite{Rubinstein2018}, the introductory article by Bringmann \cite{Bringmann2021} as well as a preprint by Ko and Song \cite{Ko2020}), a paper by Lau and Ritossa \cite{LauR2021} with results for orthogonal range update on weighted point sets and an (unconditional) lower bound by Chan \cite{Chan2010} for a dynamic version of Klee's Measure Problem (see Sections \ref{subsec:rectangles} and \ref{subsec:hypervolume_indicator} of this paper). After a previous version of the present paper appeared on arXiv, and independent of our work, Jin and Xu \cite{Jin2022} studied generalized versions of the OMv and BMM problems and proved polynomial lower bounds for various dynamic problems based on their hardness, among which Dynamic 2D Orthogonal Range Color
Counting, Counting Maximal Points, Dynamic Klee's measure problem for unit hypercubes and Chan’s Halfspace Problem.

In this work, we exploit the results of P\u{a}tra\c{s}cu, Kopelowitz et al., V.~V.~Williams and Xu, and Henzinger et al.\ to give conditional polynomial lower bounds for a variety of dynamic problems in computational geometry, based on the hardness of 3SUM, APSP and Online Boolean Matrix-Vector Multiplication. Almost all the problems we study here share the common characteristic of being about computing a single global metric for a set of objects in space subject to updates. Moreover, in the static case (where there are no updates) most of these metrics can be computed in worst-case $O(n\log n)$ time using standard computational geometry results. In particular, we show conditional hardness results for orthogonal range marking, maintaining the number of maximal or extremal points in a set of points in $\R^3$, dynamic approximate square set cover, problems related to Klee's Measure Problem, problems related to finding the largest empty disk in a set of points, testing whether a set of disks covers a given rectangle, and querying for the size of the $i$'th convex layer of a set of points in the plane. 
We also give an unconditional lower bound for the incremental Hypervolume Indicator problem in $\R^3$, where the goal is to maintain the volume of the union of a set of axis-aligned boxes which all have the origin as one of their vertices.

The most basic of these problems, and the one we present first, is Square Range Marking: given a set of $n$ initially unmarked points in the plane, preprocess them to allow marking of the points in any given axis-aligned square and testing if there is any unmarked point. This encompasses the idea of augmenting a range query structure where augmentations can be applied to all data in a query range; a mark is the simplest such augmentation. While many variants of augmented orthogonal range queries have been studied (especially in the static case) \cite{Gupta2018_coloredSearching,JanardanL1993_coloredSearching, GuptaJS1995_coloredSearching,AgarwalGM2002_coloredSearching, Mortensen2003_coloredSearching,ShiJ2005_coloredSearching, LarsenW2013_coloredSearching,Nekrich2014_coloredSearching,ChanN2020_coloredSearching, ChanH2021_coloredSearching}, this natural variant has been given little attention. This is perhaps no coincidence, as we show that the straightforward polynomial-time solution based on kd-trees is likely almost optimal, in contrast to standard 1-D range marking and other augmentation problems which are easily handled by suitable variants of BSTs \cite[Ch.~14]{CLRS}.

Lau and Ritossa \cite{LauR2021} give similar lower bounds for data structures on weighted points, conditioned on the hardness of Online Boolean Matrix-Vector Multiplication, but explicitly leave open questions on points which have a color or a ``category.'' They show for example a lower bound for a data structure which allows to increment the weight of all points in an orthogonal range and to query the sum of weights for all points in a given range, as well as for variants of this problem.

\subsection{Setting and computational model}

We work in the standard Word RAM model, with words of $w=\Theta(\log n)$ bits unless otherwise stated, and for randomized algorithms we assume access to a perfect source of randomness. We will base our conditional lower bounds on the following well known hardness conjectures.
\begin{conjecture}
[3SUM conjecture]
The following problem (3SUM) requires $n^{2-o(1)}$ expected time to solve:
given a set of $n$ integers in $\{-n^3,\ldots n^3\}$, decide if three of them sum up to $0$.\footnote{The assumption that the integers are in $\{-n^3,\ldots n^3\}$ is done without loss of generality. In the model we consider one can always reduce the problem to this setting while preserving the expected run-time, via known hashing methods \cite{Baran2008}.}
\end{conjecture}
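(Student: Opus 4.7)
The statement is a widely-believed \emph{conjecture}, not a theorem: no proof is currently known. Any honest proposal must therefore describe both what a hypothetical proof would look like and why one has resisted decades of effort. A first sanity check on the target exponent is on the upper-bound side: Gr{\o}nlund and Pettie showed that 3SUM is solvable in $O(n^2 (\log\log n/\log n)^{2/3})$ expected time in the Word RAM. So any proof must be compatible with such polylogarithmic savings; one cannot hope for the cleaner $\Omega(n^2)$ bound, and the specific ``$n^{2-o(1)}$'' shape of the conjecture is calibrated precisely to allow these.

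A natural plan of attack would proceed by strengthening the model hierarchy. First, I would formalize 3SUM in the linear and $k$-linear decision tree models, where unconditional lower bounds in the $\Omega(n^{3/2})$ range exist and where the best decision-tree upper bound is $O(n^{3/2}\sqrt{\log n})$, due to Gr{\o}nlund and Pettie; the expectation here is \emph{not} to match the conjecture, since known decision-tree algorithms already beat $n^{2-\epsilon}$, but to quantify the gap between algebraic and full-computational power. Second, I would attempt to lift such bounds to the Word RAM by carefully accounting for the bit-level structure of the input, the bounded integer range $\{-n^3,\ldots,n^3\}$, and the hashing primitives invoked via the footnote.

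The second step is where the plan breaks, and this is the main obstacle: no superlinear unconditional lower bound is known for \emph{any} natural decision problem in the Word RAM model with $w=\Theta(\log n)$. Address arithmetic, bit manipulation and hashing systematically defeat the algebraic and communication-complexity techniques currently available. An alternative route, reducing 3SUM from a more foundational assumption such as $\mathsf{SETH}$ or an explicit circuit lower bound, is equally unavailable and would itself constitute a major breakthrough, since 3SUM lies in $\mathsf{P}$ and any unconditional $n^{2-o(1)}$ bound would be the first of its kind for a polynomial-time problem. The operative justification for the conjecture, implicitly relied upon by the present paper and the fine-grained complexity program at large, is therefore empirical: decades of algorithmic effort have not broken the $n^{2-\epsilon}$ threshold, many otherwise distinct problems reduce tightly to 3SUM, and every known lower-bound technique in restricted models is consistent with the conjecture. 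Accordingly, the statement is adopted as a hypothesis, and the rest of the paper conditions its lower bounds on it rather than attempting to prove it.
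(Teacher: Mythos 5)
You have correctly identified that this is a hardness conjecture, not a theorem, and that the paper offers no proof of it—only a statement of the assumption on which the subsequent conditional lower bounds rest. Your discussion of why no proof is available (no superlinear unconditional Word-RAM bounds for any problem in $\mathsf{P}$, the gap between decision-tree and full-RAM power, the absence of a reduction from a more foundational hypothesis) is accurate in substance and matches the paper's implicit stance, though one peripheral detail is shaky: I am not aware of an $\Omega(n^{3/2})$ lower bound for 3SUM in the $k$-linear decision-tree model for general $k$—the clean $\Omega(n^2)$ bound of Erickson is for the restrictive $3$-linear model, and in more expressive decision-tree models the known lower bounds are far weaker (indeed Kane, Lovett and Moran later gave an $O(n\log^2 n)$ decision tree, which rules out any polynomial decision-tree lower bound in that model). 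This does not affect your conclusion, which is correct: the statement is adopted as a hypothesis and cannot currently be proven.
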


\begin{conjecture}[APSP conjecture]
The following problem (APSP) requires $n^{3-o(1)}$ expected time to solve:
given an integer-weighted directed graph $G$ on $n$ vertices with no negative cycles, compute the distance between every pair of vertices in $G$.
\end{conjecture}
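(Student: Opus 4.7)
The plan is to establish that APSP on integer-weighted digraphs requires $n^{3-o(1)}$ expected time in the Word-RAM model. I should be upfront, however, that this is a major open conjecture in fine-grained complexity rather than a statement with a known proof, so my ``proposal'' is really a sketch of the evidence one marshals to justify taking it as a hypothesis, together with an honest account of where any serious proof attempt breaks down.

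First I would review the algorithmic landscape. The classical Floyd--Warshall algorithm achieves $O(n^3)$, and despite decades of refinement the best known bound is only $n^3/2^{\Omega(\sqrt{\log n})}$ due to R.~Williams, saving merely sub-polynomial factors via table-lookup and word-packing tricks. No algorithm is known that breaks the cubic barrier by a polynomial factor, and this long-standing algorithmic status quo is the primary heuristic evidence behind the conjecture.

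Second I would invoke the rich network of \emph{subcubic equivalences} surrounding APSP. Through the work of V.~V.~Williams and R.~Williams (building on Fischer--Meyer, Munro, and others), APSP is fine-grained equivalent to Min-Plus Matrix Multiplication, Negative Triangle Detection, Second Shortest Paths, Replacement Paths, and graph Radius/Diameter computation. A truly subcubic algorithm for any one of these would refute the conjecture for all, which makes the hypothesis structurally robust: it is not a statement about an isolated problem but about the frontier of an entire class of algebraic path problems, each of which has resisted subcubic algorithms for decades.

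The hard part---indeed the insurmountable part with present techniques---is producing an unconditional polynomial lower bound. We do not know how to prove even super-linear lower bounds on the time complexity of any explicit problem in the Word-RAM model, and ruling out $O(n^{3-\epsilon})$ algorithms for APSP would constitute a breakthrough far beyond what current complexity-theoretic tools can deliver. For this reason APSP appears here as a \emph{conjecture}: the present paper, following standard fine-grained complexity practice, adopts it as a working hypothesis from which to derive conditional lower bounds for dynamic geometric problems, rather than attempting the (with current tools, apparently hopeless) task of proving it outright.
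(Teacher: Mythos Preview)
Your assessment is correct: the APSP conjecture is stated in the paper as an unproven hypothesis, not as a theorem with a proof, and the paper offers no argument for it beyond citing the best known upper bound \cite{Williams2018}. Your discussion of the supporting evidence is more extensive than anything the paper provides, but the essential point---that this is a standing conjecture adopted as a working assumption rather than something to be established---matches the paper exactly.
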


The 3SUM problem can easily be solved in $O(n^2)$ time, while APSP can be solved in cubic time by the Floyd–Warshall algorithm, for example. The best known methods improve these runtimes by subpolynomial factors \cite{Chan2020-3SUM, Williams2018}.

In addition to being the basis for these standard conjectures in fine-grained complexity, the 3SUM problem and the APSP problem are related in other ways (see \cite{Williams2020}). In particular, they both fine-grained reduce to the Exact Triangle problem, meaning that if either the 3SUM conjecture or the APSP conjecture is true, then the following conjecture is true.
\begin{conjecture}[Exact Triangle conjecture]
The following problem (Exact Triangle) requires $n^{3-o(1)}$ expected time to solve:
given an integer-weighted graph $G$ and a target weight $T$, determine if there is a triangle in $G$ whose edge weights sum to $T$. 
\end{conjecture}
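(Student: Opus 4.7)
The statement to establish is that a truly-subcubic algorithm for Exact Triangle would refute both the 3SUM and APSP conjectures, i.e., one has to exhibit two fine-grained reductions: $\text{3SUM} \leq_{\mathrm{fg}} \text{Exact Triangle}$ and $\text{APSP} \leq_{\mathrm{fg}} \text{Exact Triangle}$. I would treat these two directions separately, since they rely on very different techniques, and then combine them via the trivial observation that a hypothetical $O(n^{3-\epsilon})$ algorithm for Exact Triangle would be invoked by both.

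The APSP side is the easier of the two and I would handle it first. Here I would appeal to the classical subcubic-equivalence chain due to V.~V.~Williams and R.~Williams: $\text{APSP} \equiv (\min,+)\text{-matrix multiplication} \equiv \text{Negative Triangle}$. Since Negative Triangle is the special case of Exact Triangle where we ask for triangle weight strictly less than $0$, one can reduce Negative Triangle to Exact Triangle by a binary search on the target weight $T$ at a cost of an $O(\log W)$ factor (where $W$ bounds the absolute values of the weights), which is polylogarithmic in our setting because weights are polynomially bounded. This immediately promotes a truly-subcubic Exact Triangle algorithm into a truly-subcubic APSP algorithm.

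The 3SUM side is the main obstacle, because the naive encoding (one vertex per element of $A$ in each of three parts, with edge weights chosen so triangle weights equal triple sums) produces a graph on $\Theta(n)$ vertices, and a subcubic Exact Triangle solver only yields an $O(n^{3-\epsilon})$ algorithm for 3SUM, which is worse than the trivial quadratic bound. To beat this, I would import the hashing/bucketing strategy of Kopelowitz--Pettie--Porat as refined by V.~V.~Williams and Xu: pick an almost-linear hash $h$ mapping $A$ to roughly $n^{2/3}$ buckets of expected size $\tilde O(n^{1/3})$, so that any 3SUM solution $a+b+c=0$ satisfies $h(a)+h(b)+h(c)\in\{-1,0,1\}$. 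Intra-bucket and degenerate triples are handled by brute force in $\tilde O(n^{4/3})$ total, while the inter-bucket search reduces to solving $O(1)$ Exact Triangle instances on tripartite graphs with $O(n^{2/3})$ vertices per part, where each vertex represents a bucket and edge weights aggregate the contributions of the bucket pair. A truly-subcubic Exact Triangle algorithm then solves each such instance in $O(n^{2-\epsilon'})$ time, yielding a truly-subquadratic 3SUM algorithm.

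The delicate step, and where I expect most of the work to concentrate, is the analysis of the hashing phase: one needs almost-linearity of $h$ to control the set of ``compatible'' bucket-triples, combined with concentration bounds ensuring that very few buckets are much larger than the expected $\tilde O(n^{1/3})$, with appropriate re-hashing if this fails. The hash family of Dietzfelbinger is the natural choice here. Putting all the pieces together gives the required reductions with only subpolynomial overhead, and hence the Exact Triangle conjecture stated above follows from either the 3SUM or the APSP conjecture.
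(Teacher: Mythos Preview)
The statement is a conjecture, and the paper does not prove it or the implication preceding it; the claim that 3SUM and APSP fine-grained reduce to Exact Triangle is simply attributed to~\cite{Williams2020}. There is thus no proof in the paper to compare your proposal against --- you are sketching the cited result, not anything the authors establish themselves.

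Both halves of your sketch have real gaps. On the APSP side, Negative Triangle is \emph{not} a special case of Exact Triangle (one asks for weight $<0$, the other for weight $=T$), and binary search on $T$ does not work: the predicate ``some triangle has weight exactly $T$'' is not monotone in $T$, so a negative answer gives no information about whether the minimum triangle weight lies above or below $T$; enumerating all negative targets costs $\Theta(W)$ oracle calls, not $O(\log W)$. On the 3SUM side, ``edge weights aggregate the contributions of the bucket pair'' cannot be made precise: a pair of buckets of size $\tilde O(n^{1/3})$ determines $\tilde O(n^{2/3})$ candidate pairwise sums, not one, and Exact Triangle admits a single weight per edge; expanding buckets back to elements recreates the trivial $\Theta(n)$-vertex instance. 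The known route passes through Convolution-3SUM, where the index constraint $i+j=k$ is precisely what enables an encoding into Exact-Triangle instances on a sublinear number of vertices via index-splitting --- structure that an almost-linear hash on the \emph{values} does not provide.
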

Thus, any bound conditioned on this conjecture also holds conditioned on the 3SUM conjecture or the APSP conjecture.

We also consider a conjecture introduced by Henzinger et al.\ \cite{Henzinger2015}, which can be thought of as a weakening of the informal conjecture which says that ``combinatorial'' matrix multiplication on $n\times n$ matrices requires essentially cubic time (note that the term ``combinatorial'' is not well defined).
\begin{problem}[Online Boolean Matrix-Vector Multiplication (OMv) \cite{Henzinger2015}]
We are given a $n\times n$ boolean matrix $M$. We can preprocess this matrix, after which we are given a sequence of $n$ boolean column-vectors of size $n$ denoted by $v_1,...,v_n$, one by one. After seeing each vector $v_i$, we must output the product $M v_i$ before seeing $v_{i+1}$.
\end{problem}
\begin{conjecture}[OMv conjecture]
Solving OMv requires $n^{3-o(1)}$ expected time in the worst case.
\end{conjecture}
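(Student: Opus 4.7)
This is a conjecture, not a theorem to be proven within the paper; like the 3SUM and APSP conjectures above, the OMv conjecture is a working hypothesis and no unconditional proof is known. Nevertheless, let me outline what a plausible attack would look like and where it runs aground.

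The natural starting point is the observation that OMv is at least as hard as so-called ``combinatorial'' Boolean matrix multiplication (BMM): any online algorithm for OMv running in total time $O(n^{3-\epsilon})$ immediately yields a BMM algorithm of the same runtime, obtained by feeding the columns of the second matrix as the queries $v_1,\ldots,v_n$. So the first step of a proof plan would be to fix some formalisation of ``combinatorial'' — for instance a restricted circuit model in which only monotone Boolean combinations of entries are produced, or a cell-probe model with limited word size — and establish an $n^{3-o(1)}$ lower bound against combinatorial BMM in that model. The second step would be to promote this lower bound from the restricted model to the full Word-RAM by exploiting the \emph{online} nature of OMv: because each $v_i$ must be answered before $v_{i+1}$ is revealed, batching tricks such as four-Russians or fast matrix multiplication applied to the whole query sequence are unavailable, which is precisely the feature that distinguishes OMv from offline BMM. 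One would hope to make this intuition precise via an asymmetric communication-complexity or cell-probe argument showing that $\Omega(n)$ probes per query are necessary on average, even allowing arbitrary polynomial preprocessing of $M$.

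The main obstacle is exactly the barrier that has stalled this line of research for decades: no current technique in the cell-probe model can produce \emph{polynomial} unconditional lower bounds for dynamic data-structure problems with polynomial preprocessing; the state of the art remains polylogarithmic factors per operation. Breaking this barrier for any sufficiently rich natural problem — whether OMv, BMM, or one of their cousins — would constitute a major advance in complexity theory. This is precisely why Henzinger et al.\ introduce OMv as a conjecture, and why the present paper, in the tradition of fine-grained complexity, takes it as a black-box assumption rather than attempting to establish it.
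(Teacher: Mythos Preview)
Your assessment is correct: the OMv conjecture is stated in the paper as an unproven working hypothesis, exactly like the 3SUM and APSP conjectures, and the paper offers no proof of it. Your additional discussion of why such a proof is out of reach is reasonable background but goes beyond anything the paper itself attempts.
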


The OMv problem can be solved in total time $O(n^3)$ by the naive algorithm. Here also the best known method improves this runtime by a subpolynomial factor \cite{Larsen2017}.

The conjecture was originally introduced in the Monte-Carlo setting (i.e.\ algorithms with a deterministic runtime but which are allowed to err with a small enough probability). We state it in the Las Vegas setting for the sake of uniformity of presentation. All the results of Henzinger et al.\ carry over to that setting with no difficulty.

While Henzinger et al.\ showed that most known lower bounds on dynamic problems derived from the 3SUM conjecture can be derived from the OMv conjecture (and often even strengthened), it is not known whether one conjecture implies the other. For most of our problems we derive polynomial lower bounds from both the OMv conjecture and the Exact Triangle conjecture. In such cases, we still get such lower bounds if at least one of the four considered conjectures is true. Moreover, the reductions used here could also give bounds in the case some of these conjecture fail by a small enough polynomial factor (for example if 3SUM requires $\Omega(n^{4/3})$ time).

Note also that recent work by Chan et al.\ \cite{ChanWX2022} directly implies that the lower bounds we obtain from the APSP conjecture also hold in the so-called Real RAM model (conditioned on the analogous Real-APSP conjecture) and in restricted versions of the model. For the real versions of the 3SUM and Exact Triangle conjectures, combining our reductions with theirs would also imply polynomial lower bounds for many of the problems considered here, although weaker than the ones we obtain in the Word RAM model.

\subsection{Main results}

\begin{table}
    \begin{center}
    \begin{tabular}{@{\hspace{-.5\arrayrulewidth}}l@{\hspace{-.5\arrayrulewidth}}ccc}
    \linenumbersep 27pt \socgnl
    &\textbf{Problem} & \textbf{Upper Bound} & \textbf{Lower Bound}\\
    
    \midrule
    
    \linenumbersep 27pt \socgnl
    && & 
                            \multirow{4}{*}{
                                \begin{tabular}{c}
                                    From Exact Triangle:\\
                                    $n^{1/4-o(1)\ \dag}$\medskip\\
                                    From OMv:\\
                                    $n^{1/2-o(1)\ \dag}$
                                \end{tabular}
                            }\\
                            
    \linenumbersep 27pt \socgnl
    &Square Range Marking [\S\ref{sec:square_range_marking}] &  \begin{tabular}{c}
                                $\tilde{O}(n^{1/2})^{\ \dag, \ddag}$ \cite{CardinalIK2021}
                            \end{tabular} &\\
                            
    \linenumbersep 27pt \socgnl
    && &\\
    
    \linenumbersep 27pt \socgnl
    && &\\
    \midrule
    
    \linenumbersep 27pt \socgnl
    &Counting Extremal Points in $\R^3$ [\S\ref{sec:maximal_extremal}]
            & 
            \multirow{4}{*}{
                \begin{tabular}{c}
                    $O^*(n^{7/8})^{\ \dag}$ \cite{Chan2003}\\ \\
                    $O^*(n^{11/12})^{\ \ddag}$ \cite{Chan2020}
                \end{tabular} 
            }
            &
            \multirow{6}{*}[-8pt]{
                \begin{tabular}{c}
                    From Exact Triangle:\\
                    $n^{1/5-o(1)\ \dag,\ddag}$\\
                    $n^{1/4-o(1)\ \ddag,\$}$\medskip\\
                    From OMv:\\
                    $n^{1/2-o(1)\ \dag,\ddag}$
                \end{tabular}
            } \\
            
    \cmidrule{2-2}
    
    \linenumbersep 27pt \socgnl
    &Largest Empty Disk in Query Region [\S\ref{subsec:disks}]& & \\
    
    \cmidrule{2-2}
    
    \linenumbersep 27pt \socgnl
    &Largest Empty Disk in a Set of Disks [\S\ref{subsec:disks}] &  & \\
    
    \cmidrule{2-2}
    
    \linenumbersep 27pt \socgnl
    &Rectangle Covering with Disks [\S\ref{subsec:rectangle_cover_with_disks}] &  & \\
    
    \cmidrule{2-3}
    
    \linenumbersep 27pt \socgnl
    &Square Covering with Squares [\S\ref{subsec:rectangles}] &         \begin{tabular}{c}
                                                $\tilde{O}(n^{1/2})^{\ \ddag}$ \cite{Yildiz2011-prob}
                                            \end{tabular} & \\
    \cmidrule{2-3}
    
    \linenumbersep 27pt \socgnl
    &Convex Layer Size in $\R^2$ [\S\ref{subsec:convex_layer_size}] & &\\
    
    \midrule
    
    \linenumbersep 27pt \socgnl
    && & 
            \multirow{5}{*}[-4pt]{
                \begin{tabular}{c}
                    From Exact Triangle:\\
                    $n^{1/4-o(1)\ \dag}$\\
                    $n^{1/3-o(1)\ \ddag}$ \medskip\\
                    From OMv:\\
                    $n^{1/2-o(1)\ \dag, \ddag}$
                \end{tabular}
            }\\
    
    \linenumbersep 27pt \socgnl
    &Counting Maximal Points in $\R^3$ [\S\ref{sec:maximal_extremal}] & \begin{tabular}{c}
                                            $\tilde{O}(n^{2/3})^{\ \ddag}$ \cite{Chan2020}
                                        \end{tabular} & \\
    \cmidrule{2-3}
    
    \linenumbersep 27pt \socgnl
    &$O(n^\alpha)$-approx.\ Weighted Square Set Cover [\S\ref{sec:square_set_cover}] &  & \\
    
    \cmidrule{2-3}
    
    \linenumbersep 27pt \socgnl
    &Klee's Measure Problem with Squares [\S\ref{subsec:rectangles}] &   \begin{tabular}{c} 
                                                $\tilde{O}(n^{1/2})^{\ \ddag}$ \cite{Yildiz2011-prob}
                                            \end{tabular} & \\
    
    \cmidrule{2-3}
    
    \linenumbersep 27pt \socgnl
    &Discrete KMP with Squares [\S\ref{subsec:rectangles}] &             \begin{tabular}{c}
                                                ${O}(n^{1/2})^{\ \dag,\ddag}$ \cite{Yildiz2011}
                                            \end{tabular} & \\
    
    \linenumbersep 27pt \socgnl                     && & \\
    
    \midrule
    
    \linenumbersep 27pt \socgnl 
    && &     \multirow{4}{*}{
                \begin{tabular}{c}
                    From Exact Triangle:\\
                    $n^{1/3-o(1)\ \dag, \ddag}$\medskip\\
                    From OMv:\\
                    $n^{1/2-o(1)\ \dag, \ddag}$
                \end{tabular}
            }\\
    
    \linenumbersep 27pt \socgnl         
    &Depth Problem with Squares [\S\ref{subsec:rectangles}] &    
            \begin{tabular}{c}
                $\tilde{O}(n^{1/2})^{\ \ddag}$ \cite{Yildiz2011-prob}
            \end{tabular} & \\
            
    \linenumbersep 27pt \socgnl 
    && &\\
    \linenumbersep 27pt \socgnl 
    && &\\
    
    \midrule 
    
    \linenumbersep 27pt \socgnl
    && & 
                            \multirow{3}{*}{
                                \begin{tabular}{c}
                                    From OMv:\\
                                    $n^{1/3-o(1)\ \dag, \ddag}$
                                \end{tabular}
                            }\\
                            
    \linenumbersep 27pt \socgnl
    &$O(1)$-approximate Square Set Cover [\S\ref{sec:square_set_cover}] &  \begin{tabular}{c}
                                $O^*(n^{1/2})^{\ \ddag}$\cite{Chan2022}
                            \end{tabular} &\\
                            
    \linenumbersep 27pt \socgnl
    && &\\

    \midrule
    
    \linenumbersep 27pt \socgnl 
    &Hypervolume Indicator in $\R^3$ [\S\ref{subsec:hypervolume_indicator}] &    
            \begin{tabular}{c}
                $\tilde{O}(n^{2/3})^{\ \ddag}$ \cite{Chan2020}
            \end{tabular} & 
                            \multirow{1}{*}{
                                \begin{tabular}{c}
                                    $\Omega(\sqrt{n})^{\ \#}$
                                \end{tabular}
                            }\\

    \bottomrule
\end{tabular}
\end{center}
    {\linenumbersep 11pt \socgnl 
    $^\dag$ per-operation runtime in the incremental setting.\\
    \socgnl 
    $^\ddag$ amortized runtime in the fully-dynamic setting.\\
    \socgnl 
    $^\$$ assuming $n^{1+o(1)}$ expected preprocessing time.\\
    \socgnl 
    $^\#$ unconditional lower bound in the incremental setting on amortized time, assuming at most polynomial time preprocessing, or on worst-case time without preprocessing assumptions.}\\
    
    \caption{Non-trivial known upper bounds and new (at the time of the first version of this paper being made public) lower bounds on the maximum over update and query time derived from the Exact Triangle conjecture, the OMv conjecture or (in the case of the Hypervolume Indicator problem) unconditionally. The $\tilde{O}$ notation hides polylog factors, while the $O^*$ notation hides factors which are $o(n^\epsilon)$ for an arbitrarily small constant $\epsilon>0$.  All upper bounds are for data structures with at most $O^*(n)$ preprocessing. Note that the lower bounds for Square Range Marking also hold in the case of a static set of points (with some assumptions on preprocessing time) and that the lower bound for the Depth Problem derived from the OMv conjecture also holds for amortized runtime in the incremental setting. The lower bound obtained for counting maximal points has since been superseded by the more general result of Jin and Xu \cite{Jin2022} who obtain lower bounds also in higher dimension.}
    \label{tab:main-bounds}
\end{table}

We obtain (conditional) polynomial lower bounds for a variety of dynamic geometric problems, and an unconditional bound for the incremental Hypervolume Indicator problem in $\R^3$. Our bounds are stated as inequalities which imply trade-offs between achievable update and query times. The lower bounds we get on the maximum of both are summarized in Table \ref{tab:main-bounds}, together with known upper bounds. Note that the bounds we get for squares or square ranges imply the same bounds for rectangles or general orthogonal ranges, although we sometimes get better trade-offs in these cases. 

Some of the lower bounds reveal interesting separations between geometric dynamic problems whose operations can be supported in subpolynomial or $O(n^\epsilon)$ time and similar problems which require polynomial time with a fixed exponent (under the hardness conjectures we consider).
\begin{itemize}
    \item Orthogonal range queries with dynamic updates on single points can be done with $\ppolylog$ time operations, while dynamic updates on orthogonal ranges of points require polynomial time.
    \item Dynamically maintaining maximal points in a point set can be done in $\ppolylog$ time in $\R^2$, while maintaining only their number in $\R^3$ already requires polynomial time.
    \item The same separation between dimensions $2$ and $3$ applies for maintaining (the number of) extremal points.
    \item Related to the previous point, the ability to query for the size of any convex layer on a dynamic set of points in $\R^2$ requires polynomial time (compared to $\ppolylog$ time when we are only interested in the first convex layer, i.e.\ the convex hull).
    \item Maintaining a $O(1)$-approximation for the size of dynamic unit square set cover can be done in $2^{O(\sqrt{\log n})}$ amortized time per update \cite{Chan2022}, while maintaining the size of a $O(n^\alpha)$-approximation (for a constant $0\leq \alpha<1$) requires polynomial time for arbitrarily sized squares (with an exponent dependent on $\alpha$).
    \item In the weighted case of the previous problem, we also get such a separation: $O(1)$-approximate weighted unit square set cover can be done in $O(n^\epsilon)$ time \cite{Chan2022} while $O(n^\alpha)$-approximate weighted dynamic square set cover requires polynomial time, with an exponent independent of $\alpha$.
\end{itemize}

\section{The general approach}
In all the problems we consider, we have a data structure $D$ which maintains a set $S$ of $O(n)$ geometric objects, supporting some form of update and query (a query is any operation which never impacts the result of any subsequent operation). We say that a data structure (or that the set of objects it maintains) is \emph{incremental} when it allows updates which consist of inserting a new object in $S$. We use the term \emph{fully-dynamic} when both insertions and deletions are allowed. The set $S$ can be initialized in a preprocessing phase.

\subsection{General reduction schemes}\label{sec:general_reductions}

All our reductions have the same basic structure based on a geometric view of P\u{a}tra\c{s}cu's Multiphase problem \cite{Patrascu2010}, where we encode a family $\mathcal{F}=\{F_1,\ldots F_k\}$ of subsets of $\{1,\ldots, m\}$  as a grid of objects where the presence (or absence) of an object at the grid coordinates $(x,y)$ encodes $x\in F_y$. We can then select some of the columns $I\in \{1,\ldots,k\}$ and a row $j\in\{1,\ldots, m\}$, allowing us to test if $I\cap F_j \neq \emptyset$ efficiently. We abstract some of the commonalities of the reductions in the following ``general'' reduction schemes, so we can focus on the specifics of each problem and avoid repetitions later on. Rather than give the original definition of the Multiphase problem, let us define what it means for a data structure to solve it, as this will make the statements of reductions easier, more uniform, and makes the required constraints on the data structure we consider explicit.
\begin{definition}[Solving the Multiphase problem]
Let $\mathcal{F}=\{F_1, \ldots, F_k\}$ be a family of $k$ subsets of $\{1,2,\ldots m\}$. Let $s_\mathcal{F} = \sum_{F\in\mathcal{F}}|F|$.
Consider a data structure $D$ with an undo operation\footnote{A data structure is said to have an undo operation if for any update $U$ there is complementary update $U'$ so that if $U$ and $U'$ are executed sequentially the results of subsequent operations are identical to the case where $U$ and $U'$ were never executed. This requirement is easily satisfied in structures that maintain a set and have insertion and deletion update operations.} which maintains a set $S$ of $O(n)$ objects with expected preprocessing time $O(t_p)$, expected amortized update time $O(t_u)$ and expected amortized query time $O(t_q)$. Suppose it allows us to do the following.
\begin{itemize}
    \item (Step 1) First, we read $\mathcal{F}$ and store a set of $n$ objects in $S$ using only the preprocessing operation of $D$.
    \item (Step 2) Then, we receive a subset $J \subset \{1,2,\ldots m\}$ and perform $u_J$ updates on $S$.
    \item (Step 3) Finally, we are given an index $1\leq i\leq k$ and after $O(1)$ updates and queries on $S$ we decide if $J \cap F_{i} \neq \emptyset$.
\end{itemize}
Assume that the time of each of these three steps is dominated by the time of the operations on $D$ and that in each step, the only information available from the previous steps is what is accessible through $D$.
Let $t_{uq} = t_q$ if only queries are performed in Step 3, otherwise let $t_{uq} = t_u+t_q$.

We say that such a data structure solves the Multiphase problem.
\end{definition}
As mentioned in the introduction, P\u{a}tra\c{s}cu gave lower bounds on the time required to solve the Multiphase problem conditioned on the 3SUM conjecture and reduced this problem to various dynamic problems. His reduction from 3SUM has since been tightened by Kopelowitz et al.\ \cite{Kopelowitz2016} and reductions from the Exact Triangle and OMv conjectures have been found by Vassilevska Williams and Xu \cite{Williams2020} and Henzinger et al.\ \cite{Henzinger2015} respectively.

We summarize the implications from these works for different parameters in the following theorems. While this results in somewhat verbose statements, we chose this approach in order to streamline the reductions in this paper and to make the lower bounds we obtain explicit in terms $n$.
\begin{theorem}\label{thm:general_3sum}
Let $D$ be a data structure which solves the Multiphase problem.
If the Exact Triangle conjecture is true (or in particular if either the 3SUM or APSP conjecture is true), then for any $0<\gamma<1$:
\begin{itemize}
    \item{} (Scenario 1) If $n = O(m\cdot k)$ and $u_J = O(m)$, we have
    \[t_p + t_u\cdot n + t_{uq}\cdot n^\frac{1+\gamma}{3-2\gamma} = \Omega\left(n^{\frac{2}{3-2\gamma}-o(1)}\right).\]
    
    \item{} (Scenario 2) If $n = O(m\cdot k)$ and $u_J = O(|J|)$, we have 
     \[t_p + t_u\cdot n^\frac{2-\gamma}{3-2\gamma} + t_{uq}\cdot n^\frac{1+\gamma}{3-2\gamma}  
     = \Omega\left(n^{\frac{2}{3-2\gamma}-o(1)}\right).\]
     
    \item{} (Scenario 3) If $n = O(s_\mathcal{F})$ and $u_J = O(m)$, we have 
    \[t_p + t_u\cdot n^\frac{3-2\gamma}{2-\gamma} + t_{uq}\cdot n^\frac{1+\gamma}{2-\gamma} 
     = \Omega\left(n^{\frac{2}{2-\gamma}-o(1)}\right).\]
    
    \item{} (Scenario 4) If $n = O(s_\mathcal{F})$ and $u_J = O(|J|)$, we have
    \[t_p + t_u\cdot n + t_{uq}\cdot n^\frac{1+\gamma}{2-\gamma} 
     = \Omega\left(n^{\frac{2}{2-\gamma}-o(1)}\right).\]
\end{itemize}
\end{theorem}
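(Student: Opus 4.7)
The plan is to invoke the Multiphase lower-bound reductions of Kopelowitz et al.\ from 3SUM and of Vassilevska Williams and Xu from Exact Triangle, and to translate them into the language of the grid encoding assumed in the theorem. Start from an $N$-vertex Exact Triangle instance and partition the vertices into blocks whose relative sizes are controlled by the free exponent $\gamma$. Step 1 encodes a family $\mathcal{F}$ of $k$ subsets of $[m]$ that records, for each combination of block and candidate residual weight, whether a matching edge exists; Step 2 uses the set $J$ to fix one side of a hypothetical triangle together with the required target weight; Step 3 iterates over the $O(k)$ candidate third vertices and needs only $O(1)$ updates and queries on $D$ per candidate to decide $J\cap F_i\neq \emptyset$. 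Repeating Steps 2 and 3 the appropriate number of times then solves the Exact Triangle instance.

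With the block sizes balanced so that the total cost of this simulation must be $\Omega(N^{3-o(1)})$, one derives a master inequality of the form
\[
t_p + (\text{\# Step 2's})\cdot u_J\cdot t_u + (\text{\# Step 3's})\cdot t_{uq} \;=\; \Omega\bigl(N^{3-o(1)}\bigr).
\]
The four scenarios correspond to the four combinations of (a)~whether the grid encoding is dense, giving $n=\Theta(mk)$, or only its occupied cells are stored as objects, giving $n=\Theta(s_{\mathcal F})$; and (b)~whether realising $J$ in $D$ costs $u_J=O(m)$ operations (one per universe coordinate, e.g.\ an entire row of the grid) or $u_J=O(|J|)$ operations (one per active coordinate). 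For each combination, one substitutes the relation between $n$, $m$, $k$, $|J|$ into the master inequality, re-expresses $N$ in terms of $n$ and $\gamma$ via $N^{3}=n^{2/(3-2\gamma)}$ in Scenarios 1--2 (where $n=\Theta(mk)$) or via $N^{3}=n^{2/(2-\gamma)}$ in Scenarios 3--4 (where $n=\Theta(s_{\mathcal F})$), and rearranges to isolate $t_p$, $t_u$, and $t_{uq}$.

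The main obstacle is the bookkeeping rather than any single deep step: one must verify that the published 3SUM and Exact Triangle reductions genuinely fit the Multiphase interface stated in the paper, namely the $O(1)$-operations-per-candidate constraint in Step 3 and the restriction that information flow between steps only through $D$, and then check that the counts $\#\,\text{Step 2's}$ and $\#\,\text{Step 3's}$ produced by those reductions are exactly the coefficients appearing beside $t_u$ and $t_{uq}$ in each scenario. Both are standard consequences of the block decomposition once the dictionary between $(m,k,|J|,u_J)$ and the reduction's block sizes is pinned down, after which the four scenarios become four parallel instances of the same elementary algebraic manipulation. A final minor point is that the 3SUM-based branch of the reduction is randomised via the hashing of Baran, Demaine, and P\u{a}tra\c{s}cu, so the lower bound is naturally phrased for expected time, in line with the conjectures.
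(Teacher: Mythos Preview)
Your proposal is correct and matches the paper's own treatment: the paper does not give a detailed proof of this theorem but simply states that the bounds ``follow from straightforward adaptations of P\u{a}tra\c{s}cu's proofs together with the more recent results from Williams and Xu and Henzinger et al., and are implicit in the two latter papers.'' Your sketch identifies the same sources, the same Multiphase/Set-Disjointness reduction pipeline, and the correct reading of the four scenarios as the two binary choices (dense $n=\Theta(mk)$ versus sparse $n=\Theta(s_{\mathcal F})$; $u_J=O(m)$ versus $u_J=O(|J|)$), so you are on exactly the track the authors intend.
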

Note that for incremental (or fully-dynamic) data structures where we can insert objects, we can always assume $t_p = O(t_u\cdot n)$ by inserting the $O(n)$ initial objects individually.

The results of Henzinger et al.\ \cite{Henzinger2015} imply that whenever we have such lower bounds from the hardness of Exact Triangle, we can get stronger bounds if we assume hardness of the OMv problem instead.

\begin{theorem}\label{thm:general_omv}
Let $D$ be a data structure which solves the Multiphase problem.
Assume $n = O\left(m^{c_1}\cdot k^{c_2}\right)$ for some constants $c_1, c_2 > 0$, $u_J=O(m)$, and the expected preprocessing time $t_p$ is at most polynomial in $n$. If the OMv conjecture is true, then for any $0<\gamma<1$,
\[t_u\cdot n^{\gamma} + t_{uq} \cdot n^{\frac{1-c_1\gamma}{c_2}} = \Omega\left(n^{\frac{1+(c_2-c_1)\gamma}{c_2}-o(1)}\right).\]

In particular if $n=O(m\cdot k)$ (as is the case in the four scenarios of Theorem \ref{thm:general_3sum}), for any $0<\gamma<1$ we have
\[t_u\cdot n^{\gamma} + t_{uq}\cdot n^{1-\gamma} 
 = \Omega\left(n^{1-o(1)}\right).\]
For $\gamma = 1/2$, we thus have
$t_u + t_q  = \Omega\left(n^{1/2-o(1)}\right)$.
\end{theorem}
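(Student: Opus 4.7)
The plan is to reduce OMv to the Multiphase problem. Given an OMv instance consisting of an $N\times N$ Boolean matrix $M$ and an online stream of $N$ column vectors $v_1,\ldots,v_N$, pick integer parameters $m$ and $k$ and tile $M$ into a grid of $\lceil N/k\rceil\cdot\lceil N/m\rceil$ blocks of size $k\times m$. For each block we instantiate an independent copy of $D$ whose preprocessing (Step~1) encodes the $k$ rows of that block as a family $\mathcal{F}\subseteq 2^{\{1,\ldots,m\}}$. The key identity is that the entry $(Mv_\ell)_i$ equals the OR, over the $\lceil N/m\rceil$ blocks in the row-block containing~$i$, of the Multiphase predicate ``$F_i\cap J\neq\emptyset$'', where $J$ is the support of the corresponding length-$m$ chunk of $v_\ell$.

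On receiving $v_\ell$ we iterate over every block $(r,b)$: run Step~2 with $J^{(b)}_\ell$ at cost $O(u_J\cdot t_u)=O(m\,t_u)$ (using the hypothesis $u_J=O(m)$), run Step~3 once per row of the block at total cost $O(k\,t_{uq})$, then undo Step~2. OR-ing the outputs across the blocks in each row-block yields $Mv_\ell$ before $v_{\ell+1}$ is revealed, so the online requirement of OMv is respected. Including the one-shot preprocessing of every block, the total expected running time is
\[
O\!\left(\frac{N^2}{mk}\,t_p\;+\;\frac{N^3}{k}\,t_u\;+\;\frac{N^3}{m}\,t_{uq}\right).
\]

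Now write $m=N^a$ and $k=N^b$, so that $n=\Theta(N^{ac_1+bc_2})$. Because $t_p$ is polynomial in $n$, for any fixed ratio $a{:}b$ we can take the common scale of $a$ and $b$ small enough that the preprocessing term is absorbed into the $N^{-o(1)}$ slack. The OMv conjecture then forces $N^3 t_u/k+N^3 t_{uq}/m\ge N^{3-o(1)}$, which rearranges to $N^a\,t_u+N^b\,t_{uq}\ge N^{a+b-o(1)}$. Setting $\gamma=a/(ac_1+bc_2)$ gives $N^a=n^\gamma$, and a short calculation shows $N^b=n^{(1-c_1\gamma)/c_2}$ and $N^{a+b}=n^{(1+(c_2-c_1)\gamma)/c_2}$, producing the claimed inequality. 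Sweeping $b/a$ realises every $\gamma$ in $(0,\min(1,1/c_1))$; when $n=O(mk)$, so $c_1=c_2=1$, this covers all of $(0,1)$, and $\gamma=1/2$ specialises to the ``in particular'' consequence $t_u+t_q=\Omega(n^{1/2-o(1)})$.

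The main obstacle is the preprocessing accounting: because we keep one copy of $D$ per block, the preprocessing is inflated by a factor of $N^2/(mk)$, and a careless choice of $N$ would let this term swamp the OMv lower bound. The remedy is to keep the ratio $a/b$ fixed (so $\gamma$ is unchanged) while shrinking the common scale until the preprocessing exponent drops strictly below $3$; this is always possible under the polynomial hypothesis on $t_p$. Everything else—the block decomposition, the online simulation, and the algebraic translation from the $N$-based bound to the $n$-based one—is routine bookkeeping.
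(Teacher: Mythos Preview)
Your proposal is correct and is precisely the kind of argument the paper has in mind: the paper does not spell out a proof of this theorem but simply remarks that it follows from straightforward adaptations of P\u{a}tra\c{s}cu's Multiphase argument combined with the OMv framework of Henzinger et al., and your block-decomposition of the $N\times N$ OMv instance into $k\times m$ subproblems, each handled by an independent copy of $D$ with undo between phases, is exactly that adaptation. Your cost accounting, the parametrisation $m=N^a$, $k=N^b$, and the translation to the $n$-based bound via $\gamma=a/(ac_1+bc_2)$ are all correct, as is your observation that the polynomial bound on $t_p$ lets you shrink $a,b$ (keeping their ratio fixed) until preprocessing becomes negligible; the only caveat, which you already flag, is that this realises $\gamma\in(0,\min(1,1/c_1))$ rather than all of $(0,1)$, but since every application in the paper has $c_1=1$ this is immaterial.
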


In some cases, the methods of Henzinger et al.\ \cite{Henzinger2015} allow us to get a better bound on the relation between update and query time. We will use this in the following form.
\begin{theorem}\label{thm:general_oumv}
Let $\mathcal{F}=\{F_1, \ldots, F_N\}$ be a family of $N$ subsets of $\{1,2,\ldots N\}$. Let $s_\mathcal{F} = \sum_{F\in\mathcal{F}}|F|$.
Consider a data structure $D$ with an undo operation which maintains a set $S$ of $O(n)$ objects with expected preprocessing time $O(t_p)$, expected amortized update time $O(t_u)$ and expected amortized query time $O(t_q)$. Suppose it allows us to do the following:
\begin{itemize}
    \item (Step 1) First, we read $\mathcal{F}$ and store a set of $n= O(N^2)$ objects in $S$ in a preprocessing phase.
    \item (Step 2) Then, we receive two subsets $I,J \subset \{1,2,\ldots N\}$ and perform $O(N)$ updates on $S$.
    \item (Step 3) Finally, after a constant number of queries, we decide if there exists $i\in I$ and $j\in J$ such that $j \in F_{i}$.
\end{itemize}
We assume that the time of each of these three steps is dominated by the time of the operations on $D$.
If the OMv conjecture is true and and the preprocessing time $t_p$ is at most polynomial then
$t_u\cdot \sqrt{n} + t_q  = \Omega\left(n^{1-o(1)}\right)$.
In other words, either $t_u = \Omega\left(n^{1/2-o(1)}\right)$ or $t_q = \Omega\left(n^{1-o(1)}\right)$.
\end{theorem}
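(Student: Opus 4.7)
The plan is a direct reduction from OuMv (Online Vector-Matrix-Vector) to the abstract interface the theorem imposes on $D$. By the results of Henzinger et al., the OMv conjecture implies that any algorithm that preprocesses an $N \times N$ boolean matrix $M$ in time polynomial in $N$ and then, for $N$ online rounds, receives a pair of boolean vectors $(u_t, v_t)$ and must output $u_t^\top M v_t$ before seeing the next pair, requires total online time $N^{3-o(1)}$.

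Given such an OuMv instance, I would encode $M$ via the family $\mathcal{F} = \{F_1, \ldots, F_N\}$ with $F_i = \{j : M_{ij} = 1\}$, and invoke Step 1 of $D$ on this family; this costs $t_p$, which is polynomial in $n = O(N^2)$ by hypothesis. For each round $t = 1, \ldots, N$, set $I_t = \{i : u_t[i] = 1\}$ and $J_t = \{j : v_t[j] = 1\}$, invoke Step 2 on $(I_t, J_t)$ using its $O(N)$ updates at total cost $O(N t_u)$, and then invoke Step 3 at cost $O(t_q)$ to decide whether there exist $i \in I_t$ and $j \in J_t$ with $j \in F_i$. Since $j \in F_i$ is equivalent to $M_{ij} = 1$, this decision bit is exactly $u_t^\top M v_t$. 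Then apply the guaranteed undo operation to roll back the $O(N)$ updates of Step 2 in another $O(N t_u)$ time, returning $D$ to its post-Step-1 state before round $t+1$ begins.

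Summing over the $N$ rounds gives a total online cost of $O(N^2 t_u + N t_q)$, which by OuMv hardness must be $\Omega(N^{3-o(1)})$; dividing by $N = \Theta(\sqrt{n})$ yields $\sqrt{n}\cdot t_u + t_q = \Omega(n^{1-o(1)})$, as claimed. There is no real algorithmic obstacle here: the only delicate point is to confirm that the three-step interface truly simulates an OuMv round, which reduces to checking that (i) the undo primitive really does let Step 2/Step 3 be replayed round after round without disturbing the preprocessing, so that $t_p$ is amortized only once across all $N$ rounds, and (ii) the polynomial bound on $t_p$ is within the preprocessing regime for which the OuMv lower bound applies. Both conditions are explicitly granted in the hypothesis of the theorem.
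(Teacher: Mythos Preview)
Your proposal is correct and takes essentially the same approach as the paper: the paper does not spell out a proof but states that the theorem follows directly from the OuMv formulation of Henzinger et al., which is exactly the reduction you describe. Your handling of the undo operation and the resulting arithmetic $O(N^2 t_u + N t_q) = \Omega(N^{3-o(1)})$, rewritten via $N = \Theta(\sqrt{n})$, matches the intended argument.
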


All these results follow from straightforward adaptations of P\u{a}tra\c{s}cu's proofs \cite{Patrascu2010} together with the more recent results from Williams and Xu \cite{Williams2020} and Henzinger et al.\ \cite{Henzinger2015}, and are implicit in the two latter papers. 
To apply these theorems, we need data structures with an undo operation. When considering structures in the fully-dynamic setting where updates consist of inserting or deleting an object, then this requirement is automatically satisfied. For structures with guarantees on the runtime per operation (rather than amortized guarantees), we can use the following standard technique (see for example \cite[Theorem 2.1]{OvermarsPast1981}).

\begin{lemma}
Any data structure with guarantees on the runtime per operation (non-amortized) can be augmented to support an undo operation with the same guarantees.
\end{lemma}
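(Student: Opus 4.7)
The plan is to instrument the data structure to log every memory cell it modifies, then undo by replaying the log in reverse. In the Word RAM model, an operation that runs in worst-case time $T$ performs at most $O(T)$ memory writes. I would augment the structure with an auxiliary stack and replace every write instruction ``store $v$ at address $a$'' by the two-step sequence ``push $(a, \text{current contents of } a)$ onto the stack; then store $v$ at $a$''. At the beginning of every operation, a delimiter is also pushed onto the stack. Since this adds only $O(1)$ overhead per original write, each operation still runs in $O(T)$ worst-case time.

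To implement the undo of the most recently executed operation $U$, pop entries from the stack until the delimiter associated with $U$ is reached, writing the stored old value back to the recorded address at each step. At most $O(T)$ entries were pushed during $U$, so the undo performs $O(T)$ writes and runs in $O(T)$ worst-case time, matching the original guarantee. After this rollback, every memory cell of the structure holds exactly the value it held before $U$ was executed, so subsequent operations behave identically to the case where $U$ and its undo never happened; this is precisely the semantics required by the definition of an undo operation. The same construction works inductively for undoing any previously pushed operation provided undos are performed in LIFO order, which is all that the reductions in this paper require.

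The main point to check carefully is that the log itself does not silently break the worst-case time bound or the space bound. This is handled by observing that the stack admits pushes in constant time, that only $O(T)$ pushes occur per operation, and that the bookkeeping is entirely local to the instrumentation; in particular, we never need to walk the stack except during an undo, where the $O(T)$ cost is charged to the operation being undone. The step that deserves the most care is simply verifying that no access pattern of the original structure escapes the instrumentation, which follows from the fact that in the Word RAM model the state of the structure is fully captured by the contents of the cells it touches.
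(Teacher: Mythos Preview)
Your argument is correct and is exactly the standard rollback-via-logging technique. The paper does not actually give its own proof of this lemma; it simply cites it as a standard result (referring to Overmars, \emph{Theorem 2.1} in the cited work), so there is nothing further to compare.
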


From now on, whenever we consider a structure with per-operation runtime guarantees, we assume (without loss of generality) that it has been augmented to support undo.

\subsection{An example: Square Range Marking}\label{sec:square_range_marking}

We illustrate the use of these theorems on the following problem.

\subparagraph*{Square Range Marking} Preprocess a static set of $n$ initially unmarked points, where an update consists of marking all points in a given axis-aligned square range and a query returns if there is any unmarked point in the set.

\medskip
Here the dynamic part of the problem is rather limited as only the markings of the points can change after an update, the set of points itself is static. The updates are even monotone in the sense that once a point has been marked it is never unmarked (in particular, the number of unmarked points can never increase). Even for this seemingly simple problem, we can use Theorems \ref{thm:general_3sum} and \ref{thm:general_omv} to get the following (conditional) polynomial lower bounds.
\begin{theorem}
Let $D$ be a data structure for Square Range Marking with $t_p$ expected preprocessing time and $t_u$ expected time per update (i.e.\  non-amortized). If the Exact Triangle conjecture holds, then
\[t_p + t_u\cdot(n^\frac{3-2\gamma}{2-\gamma} + n^\frac{1+\gamma}{2-\gamma}) + t_q \cdot n^\frac{1+\gamma}{2-\gamma} = \Omega\left( n^{\frac{2}{2-\gamma}-o(1)}\right).\]
If the OMv conjecture holds and $t_p$ is at most polynomial then for any $0<\gamma<1$
\[t_u\cdot (n^{1-\gamma} + n^{\gamma}) + t_{q}\cdot n^{1-\gamma} = \Omega\left(n^{1-o(1)}\right).\]
In particular, by setting $\gamma = 1/2$, we have
$t_u + t_q = \Omega\left(n^{1/2 - o(1)}\right)$.
\end{theorem}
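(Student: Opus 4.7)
The plan is to reduce the Multiphase problem to Square Range Marking so that the parameters fit Scenario 3 of Theorem \ref{thm:general_3sum} and the hypotheses of Theorem \ref{thm:general_omv}. I aim for $n = s_\mathcal{F} = O(mk)$ points, $u_J = O(m)$ marking updates in Step 2, and $O(1)$ updates plus a single query in Step 3. Given $\mathcal{F} = \{F_1,\ldots,F_k\}$ with $F_y \subseteq \{1,\ldots,m\}$, during preprocessing I place a point at $(xk, y)$ for every pair with $x \in F_y$. Spacing columns by $k$ (rather than by $1$) is the crucial slack that lets a single marking square cover one column without touching any neighbor.

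In Step 2, for each $x \in \{1,\ldots,m\}\setminus J$ I issue the marking square $[xk,\; xk+k-1]\times[1,k]$: it has side $k-1$, covers every point of column $x$, and stops strictly before $x$-coordinate $(x+1)k$, so $m - |J| = O(m)$ updates mark exactly the columns outside $J$. After Step 2 the surviving unmarked points are the $(xk,y)$ with $x \in J \cap F_y$. In Step 3, given $i$, I use at most two large squares to isolate row $i$: the square $[0,mk]\times[i-1-mk,\; i-1]$ of side $mk$ marks everything strictly below row $i$, and the symmetric square $[0,mk]\times[i+1,\; i+1+mk]$ marks everything strictly above, while row $y=i$ is untouched by both. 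A single global query then reports whether any unmarked point remains, which happens if and only if $J \cap F_i \neq \emptyset$.

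The reduction thus satisfies $n = O(s_\mathcal{F}) = O(mk)$ and $u_J = O(m)$, and since Step 3 issues updates we have $t_{uq} = t_u + t_q$. Plugging these parameters into Scenario 3 of Theorem \ref{thm:general_3sum} and distributing $t_{uq}$ across $t_u$ and $t_q$ yields the first displayed inequality; applying Theorem \ref{thm:general_omv} with $c_1 = c_2 = 1$ gives the second, and the specialization $\gamma = 1/2$ gives $t_u + t_q = \Omega(n^{1/2-o(1)})$. Because the theorem concerns worst-case per-operation runtime, the undo-augmentation lemma stated just above Section \ref{sec:square_range_marking} lets me assume without loss of generality that $D$ supports an undo operation, which is the only extra structural requirement the Multiphase definition imposes. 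The only subtle point I expect to have to verify carefully is the geometry itself: the column spacing must equal $k$ (not smaller) so that a Step-2 square cleanly isolates one column, and the Step-3 squares must be positioned so that their vertical ranges stop strictly before $y=i$ or start strictly after it, so the target row is preserved.
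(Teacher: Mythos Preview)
Your proposal is correct and follows essentially the same approach as the paper: encode $\mathcal{F}$ as a grid of points with columns spaced far enough apart that a single square can mark one column, mark the columns outside $J$ in Step~2, then in Step~3 mark everything above and below the target row with two large squares and query for any remaining unmarked point. The paper uses spacing $k+2$ (with points at $((k+2)j+1,\,i+1)$) and half-integer offsets for the Step~3 squares to avoid any boundary ambiguity, whereas you use spacing $k$ and integer endpoints, but this is a cosmetic difference only.
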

\begin{proof}
It suffices to show that such a data structure fits the conditions of Scenario 3 in Theorem \ref{thm:general_3sum}. Let $\mathcal{F}=\{F_1, \ldots, F_k\}$ be a family of $k$ subsets of $\{1,2,\ldots m\}$. 

We perform Step 1 by initializing $D$ with the following points: for each $1\leq i\leq k$ and $1\leq j \leq m$ for which $j \in F_i$, we put a point $p_{i,j}$ at coordinates $((k+2)j+1,i+1)$.
The total number of points is $n = s_\mathcal{F}$. 

To perform Step 2 when given $J \subset \{1,2,\ldots m\}$, we mark the points inside a square range of side-length $k+2$ whose lower-left corner has coordinates $((k+2)j,1)$, for all $j \not\in J$. This requires $O(m)$ updates on $D$. The unmarked points are exactly the $p_{i,j}$'s such that $j\in J$.

In Step 3, when given an index $1\leq i'\leq k$, we mark the points inside the two squares of side-length $(k+2)m$ whose lower-left corners lie at coordinates $(0,i'+ 1/2)$ and $(0,i'-(k+2)m-1/2)$ respectively.
Now there is an unmarked point if and only if there is some point $p_{i,j}$ such that $j\in J$ and $p_{i,j}$ was not marked by these two last updates. This is the case if and only if $i=i'$. By construction, such a point exists if and only if there is some $j\in J$ such that $j\in F_{i'}$ (i.e.\  $J\cap F_{i'} \neq \emptyset$). Thus, we can answer a Step 3 query after two more updates to $D$.

By applying Theorems \ref{thm:general_3sum} and \ref{thm:general_omv} we get the result.
\end{proof}

If we assume truly subquadratic expected preprocessing time we get polynomial lower bounds on $t_u$ from the Exact Triangle conjecture. In particular we have the following.

\begin{corollary}
Let $D$ be a data structure for Square Range Marking with $t_p$ expected preprocessing time, $t_u$ expected time per update (i.e.\  non-amortized) and $t_q$ expected time per query. If the Exact Triangle conjecture is true and $t_p = o(n^{3/2})$ then
$t_u + t_q = \Omega\left(n^{1/4-o(1)}\right)$.
\end{corollary}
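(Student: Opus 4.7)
The plan is to read off this corollary directly from the preceding theorem by choosing $\gamma$ so that the right-hand side exceeds $n^{3/2}$; the hypothesis $t_p = o(n^{3/2})$ will then render the preprocessing term negligible, leaving the constraint entirely on $t_u + t_q$.

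First I observe that $\frac{2}{2-\gamma} = 3/2$ exactly at $\gamma = 2/3$, where the two exponents $\frac{3-2\gamma}{2-\gamma}$ and $\frac{1+\gamma}{2-\gamma}$ coincide at $5/4$ and $\frac{1-\gamma}{2-\gamma}$ evaluates to $1/4$. To create a positive gap between the right-hand side exponent and $3/2$, I would take $\gamma = 2/3 + \delta$ for an arbitrarily small fixed $\delta > 0$, so that $\frac{2}{2-\gamma} > 3/2$ by a constant amount $c_\delta > 0$.

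Substituting this choice into the preceding theorem yields
\[
t_p + t_u \cdot n^{\frac{3-2\gamma}{2-\gamma}} + (t_u + t_q) \cdot n^{\frac{1+\gamma}{2-\gamma}} = \Omega\Bigl(n^{\frac{2}{2-\gamma} - o(1)}\Bigr).
\]
Since the right-hand side exponent is $3/2 + c_\delta - o(1) > 3/2$ for large $n$, and $t_p = o(n^{3/2})$, the preprocessing term is asymptotically dominated by the right-hand side and can be dropped. Consequently one of the two remaining terms must carry the lower bound; in either case this forces $t_u + t_q = \Omega(n^{\rho(\delta) - o(1)})$ with $\rho(\delta) \to 1/4$ as $\delta \to 0$ (note that the second term yields a bound on $t_u$ alone, which is stronger). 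A standard argument (supposing $t_u + t_q = o(n^{1/4 - \epsilon'})$ for some fixed $\epsilon' > 0$ and choosing $\delta$ small enough to derive a contradiction for large $n$) then upgrades this to the claimed $t_u + t_q = \Omega(n^{1/4 - o(1)})$.

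There is no substantive obstacle; the corollary is pure parameter-tuning on top of the previous theorem. The only mild care needed is fixing $\delta$ before taking $n \to \infty$, so that the constant gap $c_\delta$ between $\frac{2}{2-\gamma}$ and $3/2$ comfortably survives the $n^{-o(1)}$ slack on the right-hand side, after which the small $\delta$-loss in the output exponent is folded into the outer $o(1)$.
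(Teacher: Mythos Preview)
Your proposal is correct and is exactly the intended argument: the paper states this corollary without proof, indicating only that it follows from the preceding theorem, and your choice $\gamma = 2/3 + \delta$ (so that the right-hand side exponent strictly exceeds $3/2$ and the hypothesis $t_p = o(n^{3/2})$ kills the preprocessing term) together with letting $\delta \to 0$ is precisely the parameter-tuning needed.
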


Consider now the analogous problem with arbitrary axis-aligned rectangles instead of squares (which we will call Rectangle Range Covering). For this problem, we can get a stronger condition on the relation between update and query time using Theorem \ref{thm:general_oumv}.

\begin{theorem}
Let $D$ be a data structure for Rectangle Range Marking with $t_p$ expected preprocessing time, $t_u$ expected time per update (i.e.\  non-amortized) and $t_q$ expected time per query. If the OMv conjecture is true and $t_p$ is at most polynomial then
\[t_u\cdot \sqrt{n} + t_q = \Omega\left( n^{1-o(1)}\right).\]
\end{theorem}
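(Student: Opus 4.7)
The plan is to apply Theorem \ref{thm:general_oumv}. Given a family $\mathcal{F}=\{F_1,\ldots,F_N\}$ of subsets of $\{1,\ldots,N\}$, which I may take to be dense (i.e.\ with $s_\mathcal{F}=\Theta(N^2)$) since OuMv hardness is unaffected, I will encode membership as a grid of points: in Step 1, for each pair $(i,j)$ with $j\in F_i$ insert the point $p_{i,j}$ at coordinates $(i,j)$. This loads $n=\Theta(N^2)$ objects into $S$, matching the size hypothesis $n=O(N^2)$ of the theorem.

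For Step 2, on input $I,J\subseteq\{1,\ldots,N\}$, I will mark one thin rectangle per ``forbidden'' row or column: the rectangle $\{i\}\times[1,N]$ for every $i\notin I$, and the rectangle $[1,N]\times\{j\}$ for every $j\notin J$. This costs at most $2N=O(N)$ updates, as required. A stored point $p_{i,j}$ then remains unmarked after Step 2 if and only if $i\in I$ and $j\in J$, and by construction such a point exists in $S$ exactly when $j\in F_i$. A single ``is there any unmarked point?'' query in Step 3 therefore decides whether there exist $i\in I$ and $j\in J$ with $j\in F_i$, which is precisely the question Theorem \ref{thm:general_oumv} asks.

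There is essentially no hard step: unlike the square version treated earlier in Section \ref{sec:square_range_marking}, I do not need the coordinate-spacing trick, because arbitrary rectangles let a degenerate slab isolate a single row or column directly. The only subtlety worth flagging is the undo requirement built into Theorem \ref{thm:general_oumv}, which is what allows the preprocessed structure to be reused across the phases implicit in the OuMv reduction; since the statement concerns worst-case per-operation time $t_u$, undo comes for free via the augmentation lemma stated just before Section \ref{sec:square_range_marking}. Plugging everything into Theorem \ref{thm:general_oumv} then yields $t_u\cdot\sqrt{n}+t_q=\Omega(n^{1-o(1)})$, as claimed.
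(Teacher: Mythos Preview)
Your proposal is correct and follows essentially the same approach as the paper: encode $\mathcal{F}$ as a grid of points, mark out the complements of $I$ and $J$ with $O(N)$ thin axis-aligned rectangles, and finish with a single unmarked-point query, then invoke Theorem~\ref{thm:general_oumv}. Your explicit remarks about density and the undo augmentation are more careful than the paper's presentation but do not change the argument.
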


The result essentially states that if we want something faster than the trivial linear query time, we need almost $\sqrt{n}$ update time. This bound is most relevant in situations where updates happen more frequently than queries and we might have hoped to obtain $O(n^{1-\epsilon})$ query time together with very fast update times.

\begin{proof}
We show that such a data structure fits the conditions of Theorem \ref{thm:general_oumv}. Let $\mathcal{F}=\{F_1, \ldots, F_N\}$ be a family of $N$ subsets of $\{1,2,\ldots N\}$.

We perform Step 1 by initializing $D$ as follows: for each $1\leq i\leq N$ and $1\leq j \leq N$ for which $j \in F_i$, we put a point $p_{i,j}$ at coordinates $(j,i)$.
The number of points is $n = O(N^2)$. 

To perform Step 2 when given subsets  $I,J \subset \{1,2,\ldots N\}$, we mark the points $p_{i,j}$, for all $j \not\in J$ and all $i \not\in J$. This can easily be done using $O(m)$ updates on $D$. 

Now the unmarked points are exactly the points $p_{i,j}$ such that $i\in I$ and $j\in J$. By construction, such a point exists if and only if there is $i\in I$ and $j\in J$ such that $j\in F_{i}$. Thus, we can answer such a Step 3 query after a single query to $D$.

By applying Theorem \ref{thm:general_oumv} we get the result.
\end{proof}

The bounds on $t_u+t_q$ obtained for Square (or Rectangle) Range Marking from the OMv conjecture are almost tight, as an upper bound can easily be obtained by taking a two-dimensional kd-tree \cite{Bentley2015} and augmenting it by adding markers to the nodes indicating if the points in the corresponding subtrees are marked. We then get a data structure with $O(n\log n)$ worst-case preprocessing time and $O(\sqrt{n})$ worst-case time per update. As noted by Cardinal et al.\ \cite{CardinalIK2021}, using standard dynamization techniques such a data structure can even be made to support insertion and deletion of points in $O(\log^2 n)$ worst-case time.

\section{Counting the number of maximal or extremal points in \texorpdfstring{$\R^3$}{R3}}\label{sec:maximal_extremal}

We show polynomial bounds for the problems of counting the number of maximal or extremal points in a dynamic set $S$ of points in $\R^3$. Recall that a point $(x,y,z) \in S$ is maximal if there is no other distinct point $(x',y', z')\in S$ with $x'\geq x$, $y'\geq y$ and $z'\geq z$. A point is extremal if there is a plane passing through that point such that all other points lie strictly on one side of the plane. Note that in the plane these problems can be solved in polylog worst-case time per operation by known techniques \cite{Overmars1981}. There is thus a clear separation here between dimensions $2$ and $3$.

After a previous version of the present paper appeared on arXiv, Jin and Xu \cite{Jin2022} independently gave lower bounds for counting maximal points in odd dimensional space, conditioned on a generalization of the OMv conjecture. The lower bound we get here from the OMv problem is a special case of their result for dimension $3$.

\subparagraph*{Counting Maximal Points in $\R^3$} Maintain a dynamic set of $O(n)$ points in $\R^3$ and allow queries that return the number of maximal points in the set.

\begin{theorem}\label{thm:maximal_pts_ds}
Let $D$ be a fully-dynamic data structure for Counting Maximal Points in $\R^3$ with $t_p$ expected preprocessing time, $t_u$ expected amortized update time and $t_q$ expected amortized query time. We can assume  $t_p = O(t_u\cdot n)$. If the Exact Triangle conjecture holds, then
\[t_u\cdot(n + n^\frac{1+\gamma}{2-\gamma})+t_q\cdot n^\frac{1+\gamma}{2-\gamma} = \Omega\left( n^{\frac{2}{2-\gamma}-o(1)}\right).\]
If the OMv conjecture holds, then for any $0 < \gamma < 1$
\[t_u\cdot (n^{1 - \gamma} + n^{\gamma}) + t_q \cdot n^{1-\gamma} = \Omega\left(n^{1-o(1)}\right).\]
In particular, for $\gamma = 1/2$ we have
$t_u + t_q = \Omega\left( n^{1/2 - o(1)}\right)$. 
\end{theorem}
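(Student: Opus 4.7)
The plan is to show that a fully-dynamic data structure $D$ for Counting Maximal Points in $\R^3$ solves the Multiphase problem both in the regime of Scenario 4 of Theorem \ref{thm:general_3sum} (with $n = O(s_{\mathcal{F}})$ and $u_J = O(|J|)$) and in the regime of Theorem \ref{thm:general_omv} (with $n = O(mk)$ and $u_J = O(m)$, so $c_1 = c_2 = 1$). Since Step 3 of the reduction involves both a constant number of updates and a query, we will have $t_{uq} = t_u + t_q$, and the two displayed inequalities in the theorem then follow by plugging in; the $n^{1/2-o(1)}$ corollary is recovered by taking $\gamma = 1/2$ in the OMv bound.

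For the geometric embedding of a family $\mathcal{F}=\{F_1,\ldots, F_k\}$ of subsets of $\{1,\ldots,m\}$, I would place, in Step 1, a point $p_{i,j} = (j, i, -i) \in \R^3$ for every $(i,j)$ with $j \in F_i$, using $n = O(s_{\mathcal{F}})$ points. The key feature of this embedding is that every row sits on the convex curve $\{(t,-t) : t \in \R\}$ in the $(y,z)$-plane, so any chosen row $i^*$ can be isolated in Step 3 by inserting just two ``row-killer'' points, one at $(m, i^* - 1, -1)$ and one at $(m, k, -(i^*+1))$: the first dominates all $p_{i,j}$ with $i \leq i^* - 1$, the second dominates all $p_{i,j}$ with $i \geq i^* + 1$, and neither dominates any point with $i = i^*$. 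These $O(1)$ updates followed by a single count query give the required Step 3.

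For Step 2, given $J \subseteq \{1,\ldots,m\}$, I would insert one ``activator'' per element of $J$, together with a small number of anchor points preprocessed in Step 1. The activators are positioned so that after Step 3's row isolation, exactly one activator-plus-anchor survives as a maximal point precisely when $J \cap F_{i^*} \neq \emptyset$, while a fixed (case-independent) number of maximal points come from the surviving row-$i^*$ point, the anchors, and the Step 3 row-killers. Concretely, each activator is given a distinct $z$-coordinate lying outside the lower orthants defined by the row-killers, so that the count of maximal points returned by $D$ after Step 3 differs by exactly one between the cases $J \cap F_{i^*} = \emptyset$ and $J \cap F_{i^*} \neq \emptyset$, letting us read off the Multiphase answer.

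The main obstacle is making Step 2 actually clean: 3D dominance kills only lower orthants, so a naive activator at $(j,\cdot,\cdot)$ would also dominate every smaller-$x$ column, collapsing the per-column bookkeeping. I resolve this by routing each activator's ``splash'' into a set of preprocessed anchor points (rather than into the $p_{i,j}$ grid), using the freedom in the $z$-coordinate to make each activator selective, and by padding $\mathcal{F}$ with a constant number of dummy sets to sidestep boundary cases such as $i^* = 1$, $i^* = k$, and empty $F_{i^*}$. Once this Step 2/Step 3 interaction is verified by a short case analysis, the reduction satisfies the hypotheses of both Theorem \ref{thm:general_3sum} (Scenario 4) and Theorem \ref{thm:general_omv}, and the stated inequalities follow.
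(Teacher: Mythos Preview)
Your high-level plan (reduce from the Multiphase problem, invoke Scenario 4 of Theorem~\ref{thm:general_3sum} and Theorem~\ref{thm:general_omv}) matches the paper, but the concrete embedding has a real gap. With $p_{i,j}=(j,i,-i)$, two points in the same row are comparable: for $j>j'$ we have $(j,i,-i)\geq (j',i,-i)$ coordinatewise, so within each row only the rightmost $p_{i,j}$ survives as maximal. After your Step~3 row-killers this leaves at most one $p$-point from row $i^*$, and which one it is does not depend on $J$ at all. You acknowledge this (``a naive activator \ldots would also dominate every smaller-$x$ column'') but the proposed fix via ``activators'' and ``anchor points'' is never specified; in particular, you do not say what coordinates the activators get, what the anchors are, or how an activator inserted for $j\in J$ can interact with the $\mathcal{F}$-encoding to reveal whether $j\in F_{i^*}$ without also splashing onto smaller columns. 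Insertions can only \emph{remove} points from the maximal set, so an ``activator'' that is supposed to make column $j$ visible is going against the grain of the operation.

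The paper sidesteps all of this with a single coordinate trick: set $p_{i,j}=(j,i,-(k+2)j-i)$. This choice makes \emph{every} pair of grid points incomparable (if $j>j'$ then the $z$-coordinate is strictly smaller), so all $p_{i,j}$ start out maximal. One then pre-inserts a blocker $b_j=(j,k+1,-(k+2)j+1)$ per column, which dominates exactly the points $p_{\bullet,j}$ and nothing else; Step~2 is then simply \emph{deleting} $b_j$ for $j\in J$ (this is where full dynamicity is used, and why $u_J=O(|J|)$). Step~3 inserts two points $t,t'$ that successively kill rows $<i'$ and $\leq i'$, and the difference of the two counts is exactly $|J\cap F_{i'}|$. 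No anchors, no case analysis. Your row-killer idea is fine, but the missing piece is the column-selective $z$-coordinate and the use of deletion rather than insertion in Step~2.
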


\begin{proof}
It suffices to show that such a data structure fits the conditions of Scenario 4 in Theorem \ref{thm:general_3sum}. Let $\mathcal{F}=\{F_1, \ldots, F_k\}$ be a family of $k$ subsets of $\{1,2,\ldots m\}$. Suppose without loss of generality that no set in $\mathcal{F}$ is empty and every element in $\{1,2,\ldots m\}$ appears in at least one set.
We perform Step 1 by initializing $D$ as follows.
\begin{itemize}
    \item For all $1\leq i \leq k$ and all $j\in F_i$, we put a point $p_{i,j}$ at coordinates $(j,i,-(k+2)j-i)$.
    \item For all $1\leq j \leq m$ we put a point $b_j$ at coordinate $(j, k+1, -(k+2)j+1)$.
\end{itemize}
This costs $t_u$ expected time, for a total number of points $n=\Theta(s_\mathcal{F})$.

Note that without the points $b_\bullet$, all other points would start out as maximal. The purpose of the point $b_j$ is to prevent exactly all points of the form $p_{\bullet, j}$ from being maximal.

To perform Step 2 when given $J \subset \{1,2,\ldots m\}$, we delete the points $b_{j}$ for all $j\in J$. This requires $O(|J|)$ updates on $D$.

In Step 3, we are given some index $1\leq i' \leq k$ and want to know if the set $F_{i'}$ intersects $J$. Because at this step the only maximal points of the form $p_{i',j}$ are those for which $j\in J$, this is equivalent to asking whether there is such a maximal point $p_{i',\bullet}$.
We start by inserting a point $t$ at coordinates $(m+1, i'-1/2, 1)$, which dominates exactly all points of the form $p_{i,\bullet}$ for $i<i'$. Then we get the number $c$ of maximal points from the data structure. We again insert a point $t'$ at coordinates $(m+2, i'+1/2, 2)$, which dominates exactly $t$ and all points of the form $p_{i,\bullet}$ for $i<i'+1$. We get the new number $c'$ of maximal points. We have $c'<c$ if and only if at least one maximal point of the form $p_{i',\bullet}$ got lost between the first and the second count, that is, if and only if $S_{i'}$ intersects $J$. Thus, we can get the answer to this intersection query after a constant number of updates to the data structure in Step 3. 

By applying Theorem \ref{thm:general_3sum} and Theorem \ref{thm:general_omv} we then get the result.
\end{proof}

For purely incremental data structures with per-update runtime guarantees (rather than amortized), we can adapt the proof of Theorem \ref{thm:maximal_pts_ds} so that in Step 1 we do not insert the points $b_\bullet$ and in Step 2 we insert the points $b_{j}$ for all $j \not\in J$ (instead of deleting those for which $j \in J$). This leads to the same bounds from the OMv problem and slightly worse lower bounds from the Exact Triangle problem (corresponding to Scenario 3 in Theorem \ref{thm:general_3sum}).

Because we can assume $t_p = O(t_u\cdot n)$ in both cases, we get the following corollaries:
\begin{corollary}
Let $D$ be a fully-dynamic data structure for Counting Maximal Points in $\R^3$ with $t_u$ expected amortized update time and $t_q$ amortized query time. If the Exact Triangle conjecture is true then
$t_u + t_q = \Omega\left(n^{1/3-o(1)}\right)$.
\end{corollary}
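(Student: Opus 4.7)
The plan is to derive this corollary as a direct consequence of the bound from the Exact Triangle conjecture in Theorem \ref{thm:maximal_pts_ds}, by choosing the parameter $\gamma$ that optimizes the trade-off between the two terms on the left-hand side.

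First I would recall the stated inequality: for every $0 < \gamma < 1$,
\[
t_u\cdot\bigl(n + n^{\frac{1+\gamma}{2-\gamma}}\bigr) + t_q \cdot n^{\frac{1+\gamma}{2-\gamma}} = \Omega\bigl(n^{\frac{2}{2-\gamma}-o(1)}\bigr).
\]
Since $t_p = O(t_u\cdot n)$ can be assumed in the fully-dynamic setting (the $O(n)$ initial points can be inserted one by one), no separate preprocessing term needs to be carried through. The coefficient multiplying $(t_u + t_q)$ on the left is essentially $\max\bigl(n,\; n^{(1+\gamma)/(2-\gamma)}\bigr)$, and the two factors in this max coincide precisely at $\gamma = 1/2$, where both equal $n$ and where $2/(2-\gamma) = 4/3$. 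For $\gamma < 1/2$ the exponent $\gamma/(2-\gamma)$ one would obtain is smaller than $1/3$, and for $\gamma > 1/2$ the exponent $(1-\gamma)/(2-\gamma)$ is smaller than $1/3$, so $\gamma = 1/2$ is the optimal choice.

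Then I would substitute $\gamma = 1/2$ directly: the inequality becomes
\[
t_u \cdot (n + n) + t_q \cdot n \;=\; \Omega\bigl(n^{4/3-o(1)}\bigr),
\]
which gives $(2t_u + t_q)\cdot n = \Omega(n^{4/3-o(1)})$ and hence $t_u + t_q = \Omega(n^{1/3-o(1)})$.

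There is essentially no obstacle here; the only thing to be careful about is confirming that $\gamma = 1/2$ actually minimizes the ratio between the right-hand side exponent $2/(2-\gamma)$ and the left-hand side exponent $\max(1, (1+\gamma)/(2-\gamma))$. A one-line case analysis (splitting on whether $\gamma \le 1/2$ or $\gamma \ge 1/2$) suffices to verify that $1/3$ is indeed the maximum achievable exponent from this family of inequalities, so the corollary follows immediately from Theorem \ref{thm:maximal_pts_ds} without any further reduction.
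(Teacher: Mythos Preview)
Your proposal is correct and matches the paper's (implicit) argument: the paper simply notes that since $t_p = O(t_u\cdot n)$, the corollary follows from Theorem~\ref{thm:maximal_pts_ds}, and your substitution $\gamma = 1/2$ is exactly what is needed to extract the $n^{1/3-o(1)}$ bound. Your additional verification that $\gamma = 1/2$ is the optimal choice is a nice touch but not strictly required, since the corollary only claims this specific bound.
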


\begin{corollary}
Let $D$ be an incremental data structure for Counting Maximal Points in $\R^3$ with $t_u$ expected time per update. If the Exact Triangle conjecture is true then
\[t_u + t_q = \Omega\left( n^{1/4-o(1)}\right).\]
If the OMv conjecture is true then
\[t_u + t_q = \Omega\left(n^{1/2-o(1)}\right).\]
\end{corollary}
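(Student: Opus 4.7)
The plan is to follow the blueprint sketched in the paragraph immediately preceding the corollary: adapt the reduction in Theorem \ref{thm:maximal_pts_ds} so that every update in Steps 1 and 2 is an insertion, making the construction run on a purely incremental structure. Concretely, in Step 1 I would read $\mathcal{F}$ and insert only the points $p_{i,j}$ at coordinates $(j,i,-(k+2)j-i)$ for $j\in F_i$, giving $n=\Theta(s_\mathcal{F})$ points and $t_p = O(t_u \cdot n)$ since insertions are the only preprocessing primitive. In Step 2, instead of deleting the $b_j$ with $j\in J$, I would insert the points $b_j$ at $(j,k+1,-(k+2)j+1)$ for every $j\notin J$; this costs $O(m)$ insertions and reaches exactly the same configuration as in Theorem \ref{thm:maximal_pts_ds} at the end of Step 2. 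Step 3 can be executed verbatim: insert the two dominators $t$ and $t'$, query the number of maximal points before and after, and conclude $J\cap F_{i'}\neq\emptyset$ iff the count strictly decreases between the two queries. Thus no deletion is ever used.

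With the adapted reduction, $n=O(s_\mathcal{F})$ and $u_J=O(m)$, so the conditions of Scenario 3 in Theorem \ref{thm:general_3sum} are met. Plugging in $t_p = O(t_u\cdot n)$, that scenario yields
\[
t_u\cdot\bigl(n + n^{\frac{3-2\gamma}{2-\gamma}}\bigr) + t_q\cdot n^{\frac{1+\gamma}{2-\gamma}} \;=\; \Omega\!\left(n^{\frac{2}{2-\gamma}-o(1)}\right)
\]
for every $0<\gamma<1$. Setting $t=t_u+t_q$, the left side is $\Theta\!\bigl(t\cdot n^{\max((3-2\gamma)/(2-\gamma),(1+\gamma)/(2-\gamma))}\bigr)$, and a short computation shows the balance point is $\gamma=2/3$, at which both exponents equal $5/4$ while the right side is $n^{3/2-o(1)}$. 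This gives $t = \Omega(n^{1/4-o(1)})$, as claimed.

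For the OMv bound, the same reduction satisfies the hypotheses of Theorem \ref{thm:general_omv} with $c_1=c_2=1$ (using $n = O(m\cdot k)$, $u_J=O(m)$, and $t_p$ polynomial). The theorem gives $t_u\cdot n^\gamma + t_{uq}\cdot n^{1-\gamma} = \Omega(n^{1-o(1)})$, and choosing $\gamma=1/2$ yields $t_u + t_q = \Omega(n^{1/2-o(1)})$.

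The only subtlety is checking that dropping the $b_\bullet$'s from Step 1 and inserting them in Step 2 does not break the correctness argument of Theorem \ref{thm:maximal_pts_ds} (the $b_j$'s must be present exactly for $j\notin J$ at the start of Step 3, and they are); beyond that, the work is the standard $\gamma$-optimization done above. I expect no obstacles, since the reduction is monotone (insertion only) by construction and the Step 3 accounting is identical to the fully-dynamic case.
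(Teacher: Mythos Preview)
Your proposal is correct and follows exactly the approach the paper indicates: drop the $b_\bullet$ points from Step~1, insert $b_j$ for $j\notin J$ in Step~2, and invoke Scenario~3 of Theorem~\ref{thm:general_3sum} together with $t_p=O(t_u\cdot n)$; the OMv bound via Theorem~\ref{thm:general_omv} with $\gamma=1/2$ is likewise identical. The only cosmetic slip is that the $n^{(1+\gamma)/(2-\gamma)}$ term should carry $t_{uq}=t_u+t_q$ rather than just $t_q$ (since Step~3 performs insertions), but this does not affect your balancing at $\gamma=2/3$ or the final bound.
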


In the fully-dynamic setting, Chan \cite{Chan2020} gives a data structure for this problem with $O(n\cdot\ppolylog n)$ preprocessing time and $O(n^{2/3}\ppolylog n)$ amortized update and query time.

\subparagraph*{Counting Extremal Points in $\R^3$} Maintain a dynamic set of $O(n)$ points in $\R^3$ and allow for queries  counting the number of extremal points in the set.

\begin{theorem}\label{thm:extremal_pts_ds}
Let $D$ be a fully-dynamic data structure for Extremal Points in $\R^3$ with $t_p$ expected preprocessing time, $t_u$ expected amortized update time and $t_q$ expected amortized query time. If the Exact Triangle conjecture holds, then
\[t_p + t_u \cdot \left(n^{\frac{2-\gamma}{3-2\gamma}}+n^{\frac{1+\gamma}{3-2\gamma}}\right) + t_q \cdot n^{\frac{1+\gamma}{3-2\gamma}} = 
\Omega\left(n^{\frac{2}{3-2\gamma}-o(1)}\right).\]
Because we can assume $t_p = O(t_u\cdot n)$, if the OMv conjecture holds, then for any $0 < \gamma < 1$
\[t_u\cdot (n^{1 - \gamma} + n^{\gamma}) + t_q \cdot n^{1-\gamma} = \Omega\left(n^{1-o(1)}\right).\]
In particular, by setting $\gamma = 1/2$, we have
$t_u + t_q = \Omega\left(n^{1/2 - o(1)}\right)$.
\end{theorem}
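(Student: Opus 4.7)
The plan is to mirror the proof of Theorem~\ref{thm:maximal_pts_ds}, showing that any fully-dynamic data structure for Counting Extremal Points in $\R^3$ can be used to solve the Multiphase problem under the conditions of Scenario~2 of Theorem~\ref{thm:general_3sum}; the claimed OMv bound follows by the same argument combined with Theorem~\ref{thm:general_omv}, and $t_p = O(t_u\cdot n)$ comes for free by inserting the initial objects one at a time. The exponents in the statement match Scenario~2 (rather than Scenario~4 as in the maximal case), so the target is a reduction with $n = \Theta(m\cdot k)$ initial points and $u_J = O(|J|)$ updates in Step~2, ending with $O(1)$ updates and queries in Step~3 that decide whether $J \cap F_{i'} \neq \emptyset$.

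For Step~1, for each pair $(i,j) \in \{1,\ldots,k\} \times \{1,\ldots,m\}$, I would place a point $p_{i,j}$ on a strictly convex surface (for instance a paraboloid) at horizontal coordinates determined by $(i,j)$, offsetting it slightly outward along the surface normal whenever $j \in F_i$ and leaving it on (or slightly inside) the surface otherwise. For each column $j$, I would additionally include a single blocker $B_j$, together with a fixed $O(1)$-sized outer cage of auxiliary points, arranged so that the whole column $\{p_{i,j} : 1\le i\le k\}$ lies strictly inside the current convex hull as long as $B_j$ is present. Thus initially none of the $p_{i,j}$ is extremal, and the total number of points is $\Theta(mk) = \Theta(n)$. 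In Step~2, for each $j \in J$ I would delete $B_j$; the geometry is set up so that this exposes precisely the $p_{i,j}$ with $j \in F_i$ as new vertices of the convex hull. In Step~3, given $i'$, I would insert two slicer points in succession, chosen so that the first renders any currently extremal $p_{i,j}$ with $i < i'$ non-extremal, and the second does the same for $i \le i'$. Two extremal-count queries then suffice to recover the number of exposed row-$i'$ points, which is nonzero iff $J \cap F_{i'} \neq \emptyset$, and Theorems~\ref{thm:general_3sum} and~\ref{thm:general_omv} deliver the stated bounds.

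The main obstacle is the geometric construction itself. Unlike the maximal case, where a single axis-aligned corner point $b_j$ dominates an entire column for free, extremality is a global property of the convex hull, so the single blocker $B_j$ must anchor a face of the hull large enough to contain all $k$ column points in its relative interior. The natural candidate is to place the column $\{p_{i,j}\}_i$ along a strictly convex arc and let $B_j$, together with two fixed auxiliary anchors, form a triangle whose relative interior on the hull contains the arc; lifting this two-dimensional picture to $\R^3$ on a paraboloid and assigning well-separated scales to rows, columns, and the in/out offset keeps different columns from interfering. A similar but more delicate analysis is needed for the Step~3 slicers, since inserting a point in $\R^3$ can only remove existing points from the extremal set by pushing them into the interior of the new hull; separating rows along a long axis and pushing each slicer far enough out in the appropriate direction should produce the desired selective effect, with a generic perturbation to rule out any accidental coplanarities.
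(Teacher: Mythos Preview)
Your high-level plan matches the paper's: place a $\Theta(mk)$ grid of points on a strictly convex surface, add one blocker $b_j$ per column, delete the blockers for $j\in J$ in Step~2, and in Step~3 insert two axial ``slicer'' points (the paper's $t_{i'+1}$ and $t_{i'}$) whose successive counts isolate the contribution of row $i'$. So the reduction targets Scenario~2 exactly as you say.

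There is, however, a real gap in the specific encoding you propose. You want the points with $j\notin F_i$ (placed on or slightly inside the surface) to remain non-extremal after $B_j$ is deleted, while only the outward-offset $F_i$ points pop out. With an $O(1)$ cage this cannot be arranged uniformly: on a strictly convex surface, an on-surface point is hidden by a chord between two neighbours only if those neighbours are themselves pushed outward, and nothing guarantees that a non-$F_i$ point has $F_i$ neighbours on both sides (consider a column with $j\notin F_i$ for every $i$). Conversely, if you let all $p_{i,j}$ become extremal once $B_j$ is removed, the row-$i'$ count in Step~3 is just $|J|$ regardless of $F_{i'}$, and you learn nothing. The paper resolves this by \emph{duplicating} the grid: for every $(i,j)$ it places a reference point $q_{i,j}$ on the surface (always extremal when its column is exposed), and in addition a slightly higher point $p_{i,j}$ only when $j\in F_i$. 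Then the slicer difference in Step~3 equals $|J|+|J\cap F_{i'}|$, and one tests whether it exceeds the known value $|J|$. The reference layer is also what anchors the slicers: each $t_{i}$ forms, together with the full ring $q_{i-1,\bullet}$, a cap whose interior contains all higher rows, so the slicing effect is independent of $\mathcal{F}$. Finally, the paper does not rely on a ``generic perturbation'' argument; it gives explicit cylindrical coordinates and proves a quantitative Lemma~\ref{lemma:extremal} showing that all the required containments survive rounding to an $O(\log n)$-bit grid, which is needed for the Word-RAM bound.
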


\begin{figure}
    \centering
    \includegraphics[scale=0.63]{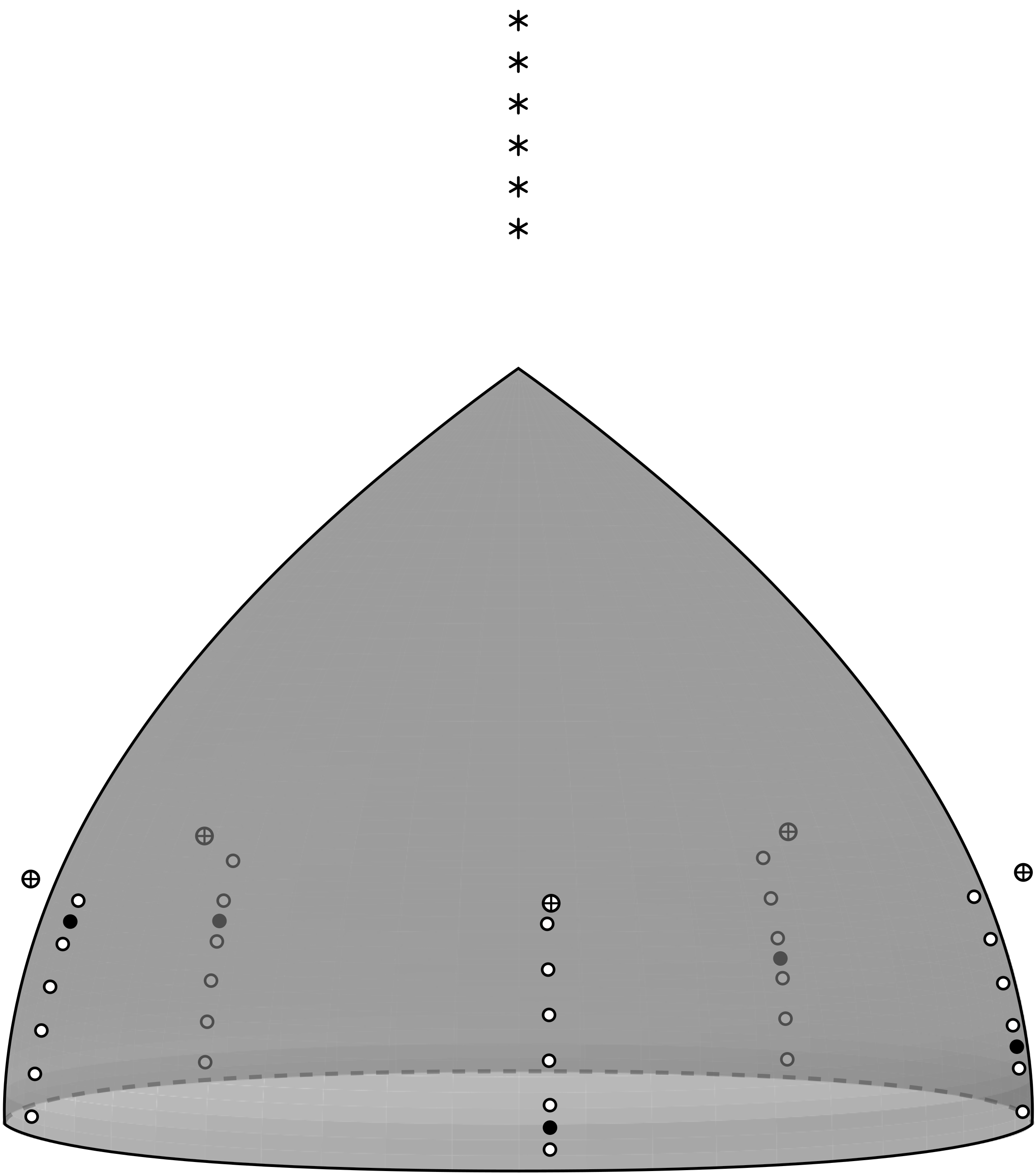}
    \caption{Illustration (not to scale) of the set of points obtained for $m=k=5$ and the family of sets $\mathcal{F} = \{\{1\}, \{2\}, \ldots, \{5\} \}$. The points $q_{\bullet, \bullet}$ are represented in white, the points $p_{\bullet, \bullet}$ in black and all these points lie on the translucent gray surface. The points $b_{\bullet}$ are represented with a cross in a white circle and are above the translucent gray surface. The asterisks represent the points of the form $t_\bullet$ and lie on the axis of rotational symmetry of the translucent gray surface.}
    \label{fig:extremal}
\end{figure}

Before proving this theorem, let us introduce some notation. We let $\mathcal{F} = \{F_1, \ldots, F_k\}$ be a family of $k$ subsets of $\{1,\ldots m\}$, where $m,k\geq 5$. Here we work in cylindrical coordinates $(r,\theta, z)$ (where this would denote the point $(r\cos\theta, r\sin\theta, z)$ in Cartesian coordinates).

Let $R = 4k^2m^2$. For all $0 \leq i \leq k$ and $1\leq j \leq m$, let $q_{i,j}$ be the point with cylindrical coordinates $(R-(2i+1)^2, \frac{2\pi}{m}j, 2i+1)$. Similarly, for all $1\leq j \leq m$ and $1 \leq i \leq k$ such that $j \in F_i$, let $p_{i,j}$ be the point with cylindrical coordinates $(R-(2i)^2, \frac{2\pi}{m}j, 2i)$. We let $S_\mathcal{F}$ denote the set consisting of all these points.

For all $1\leq j \leq m$, we let $b_j$ denote the point with cylindrical coordinates $(R-1, \frac{2\pi}{m}j, 2k+2)$. For all $1\leq i \leq k+1$, we let $t_i$ denote the point lying on the longitudinal axis (the $z$-axis) at height $\frac{R+(2i-1)^2}{2(2i-1)}$. Note that for all $1\leq i_1 \leq i_2 \leq k+1$, the point $t_{i_1}$ is higher on the $z$-axis than $t_{i_2}$ and that all points $t_\bullet$ have a larger $z$-coordinate than all other previously defined points.
See Figure \ref{fig:extremal} for an illustration.

\begin{lemma}\label{lemma:extremal}
Let $S$ be a set of points such that $S_\mathcal{F} \subset S \subset S_\mathcal{F} \cup \{b_j\mid 1\leq j \leq m\} \cup \{t_i\mid 1\leq i \leq k\}$. Let $1\leq j' \leq m$ and $1\leq i' \leq k$.
Then:
\begin{itemize}
    \item The point $q_{0,j'}$ is extremal.
    \item The point $q_{i',j'}$ is extremal if and only if $b_{j'} \not\in S$ and for all $1\leq i \leq i'$, $t_{i} \not\in S$. If $p_{i',j'} \in S$ (i.e.\ $j' \in F_{i'}$), then the same holds for $p_{i',j'}$.
    \item If $t_{i'}\in S$, then $t_{i'}$ is extremal if and only if for all $1\leq i < i'$, $t_{i}\not\in S$.
\end{itemize}

Moreover, this remains true even if all points are arbitrarily perturbed by moving them a distance of at most $1/R^2$.
\end{lemma}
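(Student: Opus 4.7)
The plan is to handle the three bullets separately, by constructing explicit convex combinations (for non-extremality) or explicit supporting hyperplanes (for extremality). Throughout, I exploit the following geometric picture: all points of $S_\mathcal{F}$ lie on the convex surface of revolution $r = R - z^2$; each $t_i$ is the apex on the $z$-axis of the tangent cone to this surface touching along the horizontal circle $z=2i-1$ (one checks directly that its height $(R+(2i-1)^2)/(2(2i-1))$ has exactly this property); and each $b_j$ lies on the vertical line at radius $R-1$ directly above $q_{0,j}$.

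For the non-extremality direction for $q_{i',j'}$ (with $i'\ge 1$), I exhibit an explicit subset of $S$ whose convex hull contains $q_{i',j'}$ in its interior. If $t_i\in S$ with $i\le i'$, take the polyhedral cone $\mathrm{conv}(\{t_i\}\cup\{q_{i-1,1},\ldots,q_{i-1,m}\})\subseteq S$: a direct computation shows that its horizontal cross-section at $z=2i'+1$ is a regular $m$-gon centred at the origin with a vertex at angle $\theta_{j'}$ sitting at distance $R-(2i'+1)^2+(2(i-i'-1))^2\ge R-(2i'+1)^2+4$ from the origin, so $q_{i',j'}$ at angle $\theta_{j'}$ and the strictly smaller distance $R-(2i'+1)^2$ lies strictly on the radial segment from the origin to that vertex, hence interior to the cross-section. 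If $b_{j'}\in S$, the symmetric argument uses the oblique cone $\mathrm{conv}(\{b_{j'},q_{0,1},\ldots,q_{0,m}\})$: at $z=2i'+1$ the cross-section is an $m$-gon centred at $a\cdot(\cos\theta_{j'},\sin\theta_{j'})$ (with $a=(R-1)\cdot 2i'/(2k+1)$) having a vertex at angle $\theta_{j'}$ that lies at radius $R-1$ from the origin, and $q_{i',j'}$ lies strictly on the radial segment from the centre to this vertex. The same arguments handle $p_{i',j'}$ after replacing $2i'+1$ by $2i'$.

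For the extremality direction, I take the tangent plane to the paraboloid at the point in question, namely $\cos\theta_{j'}\,x+\sin\theta_{j'}\,y+2z_0\,z=R+z_0^2$, where $z_0$ is its $z$-coordinate. Expanding the dot products and completing squares gives that every other point of $S_\mathcal{F}$ lies strictly on one side with constant margin; every $b_j$ with $j\ne j'$ lies on the same side with margin $\Omega(R(1-\cos(2\pi/m)))=\Omega(R/m^2)$, which dominates the additive $O(k^2)$ coming from the $z$-term; and every $t_i$ with $i\ge i'+2$ lies on the same side with a large polynomial margin. The one obstruction is that $t_{i'+1}$ lies \emph{exactly} on the tangent plane at $q_{i',j'}$, because the segment $\overline{t_{i'+1}q_{i',j'}}$ is by construction a generator of the tangent cone from $t_{i'+1}$. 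I would resolve this by tilting the normal from $(\cos\theta_{j'},\sin\theta_{j'},2(2i'+1))$ to $(\cos\theta_{j'},\sin\theta_{j'},2(2i'+1)-\varepsilon)$ for a suitably small $\varepsilon>0$; this pushes $t_{i'+1}$ strictly to the correct side by $\varepsilon(H_{i'+1}-(2i'+1))>0$ while leaving every other strict inequality intact. No tilt is needed for $p_{i',j'}$, whose tangent plane meets no $t_i$.

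The remaining two claims are easier. The point $q_{0,j'}$ is extremal via the nearly radial plane with normal $(\cos\theta_{j'},\sin\theta_{j'},-\delta)$ for small $\delta>0$: the radial part isolates $q_{0,j'}$ from every other point except $b_{j'}$ (which lies on the unperturbed plane), and the $-\delta z$ term then separates $b_{j'}$. For $t_{i'}$, the horizontal plane $z=H_{i'}$ works whenever no $t_i$ with $i<i'$ lies in $S$, since $H_{i'}$ exceeds every other relevant $z$-coordinate (namely $2k+2$ for the $b$'s, $2k+1$ for the paraboloid points, and $H_i<H_{i'}$ for $i>i'$); conversely, if $t_i\in S$ with $i<i'$, then $t_{i'}$ lies on the $z$-axis strictly between $t_i$ and the ring $\{q_{i-1,\cdot}\}$, hence interior to the corresponding polyhedral cone. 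Finally, the \emph{moreover} clause on perturbations follows because every margin used above is either a constant or polynomial in $R$, while a perturbation of magnitude $1/R^2$ changes every relevant linear functional by at most $O(k/R^2)=o(1)$. I expect the main technical delicacy to be the degenerate tangency at $t_{i'+1}$, resolved by the $\varepsilon$-tilt above; all other cases reduce to routine computations with the explicit coordinates.
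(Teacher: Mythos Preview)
Your overall strategy (tangent planes for extremality, explicit polytopes for non-extremality, working in 3D rather than projecting as the paper does) is sound and in several places cleaner than the paper's computation. However, there is a genuine gap in the non-extremality argument when $b_{j'}\in S$.

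You claim that $q_{i',j'}$ lies in the interior of the oblique cone $\mathrm{conv}\bigl(\{b_{j'}\}\cup\{q_{0,1},\ldots,q_{0,m}\}\bigr)$, and more precisely that at height $2i'+1$ it lies on the radial segment from the centre of the cross-section $m$-gon (at radius $a=(R-1)\cdot 2i'/(2k+1)$) to the vertex at radius $R-1$. This fails when $i'$ is comparable to $k$ and $m$ is small. Concretely, take $m=5$, $k=100$, $i'=k$; then $R=10^6$, the cross-section at height $201$ has $\lambda=200/201$, centre at radius $\approx 995024$, circumradius $\approx 4975$, while $q_{k,j'}$ sits at radius $R-201^2=959599$. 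The point is at distance $\approx 35425$ from the centre, far outside the polygon. The underlying reason is that near the apex $b_{j'}$ the cone is extremely thin, whereas $q_{i',j'}$ has already receded by $(2i'+1)^2$ in radius from the outer cylinder $r=R-1$. In general the inequality $a<R-(2i'+1)^2$ you need rearranges to $R>(2k+1)^3-2k$ when $i'=k$, which is false once $k\gtrsim m^2$.

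The paper circumvents this by using a different witness set for the $b_{j'}$ case: instead of the cone over the bottom ring, it takes the tetrahedron $\{b_{j'},\,q_{0,j'},\,q_{k,j'-1},\,q_{k,j'+1}\}$, i.e.\ it brings in two points from the \emph{top} ring at the neighbouring angles. This provides the needed width at large heights. You could repair your argument in the same spirit, or alternatively use the pencil $\mathrm{conv}\bigl(\{b_{j'},q_{0,j'}\}\cup\{q_{k,1},\ldots,q_{k,m}\}\bigr)$; either way, the base of the witness polytope must reach up near height $2k+1$, not just height $1$. The remaining parts of your proposal (the $t_i$ cone for the other non-extremality case, the tangent-plane-plus-tilt for extremality, and the perturbation analysis) are essentially correct, though the constant comparison ``$\Omega(R/m^2)$ dominates $O(k^2)$'' for $b_j$ with $j\neq j'$ deserves one explicit line, since both quantities are $\Theta(k^2)$ and only the numerical constants decide it.
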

\begin{proof}
By rotational symmetry considerations around the $z$-axis, it is enough to consider the case $j'=m$. To show that some point is extremal, it is enough to show that it is extremal in the set of points obtained by projecting $S$ orthogonally on the plane of Cartesian equation $y=0$. We let $S'$ denote this planar point set and identify the points in $S'$ with the corresponding points in $S$.

To show that $q_{0,m}$ is extremal in $S'$ (and thus in $S$), it suffices to notice that line $\ell_1$ of slope $1$ passing through $q_{0,m}$ has all other points of $S'$ lying strictly above it. Moreover, every other point is at distance at least $2\sqrt{2}$ from $\ell_1$, so this remains true even under a perturbation of the point set of magnitude at most $1/R^2 < \sqrt{2}$.

If for all $1\leq i \leq i'$, $t_{i}\not\in S'$, then the horizontal line $\ell_2$ passing through $t_{i'}$ has all other points of $S'$ lying strictly below it. Moreover, every other point is at distance greater than $2$ from $\ell_2$, so this remains true even under a perturbation of magnitude $1/R^2 < 1$. However, if there is some $1\leq i < i'$ such that $t_i \in S$, then the convex hull $H_1$ of the points $\{t_i, q_{0,1}, q_{0,2}, \ldots q_{0,m}\} \subset S$ contains $t_{i'}$ in its interior. Moreover, the distance from $t_{i'}$ to any face of $H_1$ at least \begin{align*}
&(R-1)\cos\left(\frac{\pi}{m}\right)
\frac{
\frac{R+(2i-1)^2}{2(2i-1)} - \frac{R+(2i'-1)^2}{2(2i'-1)}
}{
\sqrt{\left(\frac{R+(2i-3)(2j-1)}{2(2i-1)}\right)^2+(R-1)^2 \cos^2(\frac{\pi}{m})}
} \\
& \geq (R-1)\cos\left(\frac{\pi}{5}\right)
\frac{
1
}{
\sqrt{\left(\frac{R}{2}\right)^2+(R-1)^2}
} \\
& \geq (R-1)\cos\left(\frac{\pi}{5}\right)
\frac{
1
}{
\sqrt{2}(R-1)
}\\
&> 2/R^2.
\end{align*}
Thus, $t_{i'}$ remains inside $H_1$ even under an arbitrary perturbation of the points of magnitude at most $1/R^2$.

Now suppose that $b_{m} \not\in S'$, $t_{i} \in S'$ for all $i' < i \leq k$ and for all $1\leq i \leq i'$ $t_{i} \not\in S'$. Consider the line $\ell_3$ passing through $q_{i',m}$ and $t_{i'+1}$. One can show that all other points lie strictly below $\ell_3$, so $q_{i',m}$ is extremal. Moreover, it is easy to see that the point in $S'$ closest to $\ell$ (apart from $q_{i',m}$ and $t_{i'+1}$) is necessarily one among $p_{i',m}$, $p_{i'+1,m}$, $t_{i'+2}$, $b_{m-1}$ or $b_1$. One can check that all these points are at distance greater than $2/R^2$ from $\ell_3$. Thus, $q_{i',m}$ remains extremal even under an arbitrary perturbation of the points of magnitude at most $1/R^2$.

On the other hand, if $b_m \in S$, then the convex hull $H_2$ of the points $\{b_m, q_{0,m},q_{k, m-1}, q_{k,1}\}$ contains the points $q_{i',m}$ and $p_{i',m}$ in its interior. Here again a mechanical (but tedious) computation shows that these two points are at distance greater than $2/R^2$ to all faces of $H_2$.

Similarly, if $t_{i} \in S$ for some $1\leq i \leq i'$, then $q_{i',m}$ and $p_{i',m}$ are contained in the convex hull $H_3$ of $\{t_{i}, q_{i-1,1}, q_{i-1,2}, \ldots q_{i-1,m}\}$ and are at distance greater than $2/R^2$ to all faces of $H_3$.
\end{proof}

\begin{proof}[Proof of Theorem \ref{thm:extremal_pts_ds}]
It suffices to show that such a data structure fits the conditions of Scenario 2 in Theorem \ref{thm:general_3sum}. Let $\mathcal{F}=\{F_1, \ldots, F_k\}$ be a family of $k$ subsets of $\{1,2,\ldots m\}$. We use the notation of Lemma \ref{lemma:extremal}. We first describe the procedure without discussing issues of finite precision and later show how this can be carried out on a Word RAM machine with words of $O(\log n)$ bits.

We perform Step 1 by initializing $D$ with all points of the form $q_{\bullet,\bullet}$, $p_{\bullet,\bullet}$ and $b_\bullet$. This costs $t_u$ expected time, for a total number of points $n=\Theta(m\cdot k)$.

To perform Step 2 when given $J \subset \{1,2,\ldots m\}$, we delete the points $b_{j}$ for all $j\in J$. This requires $O(|J|)$ updates on $D$.

In Step 3, when given an index $1\leq i' \leq k$ we start by inserting the point $t_{i'+1}$ to $D$ and getting the count $c$ of extremal points. By Lemma \ref{lemma:extremal}, the extremal points of $S$ at this point are exactly those of the following $6$ types:
\begin{enumerate}
    \item the point $t_{i'+1}$,
    \item the points $b_j$ for all $j$ such that $j \not\in J$,
    \item the points $q_{i,j}$ for all $i,j$ such that $j \in J$ and $i < i'$,
    \item the points $p_{i,j}$ for all $i,j$ such that $j\in F_i$, $j \in J$ and $i < i'$.
    \item the points $q_{i',j}$ for all $j$ such that $j\in J$,
    \item the points $p_{i',j}$ for all $j$ such that $j\in F_{i'}$ and $j \in J$.
\end{enumerate}
To answer the query, we want to know if the number of points of the last type is greater than $0$. We know that the number of points of the fifth category is exactly $|J|$.
Notice that if we now insert $t_{i'}$ to $D$ and get the new count $c'$ of extremal points, we are counting exactly the first four categories of points, where we have replaced $t_{i'+1}$ with $t_{i'}$. Thus, we can test if the number of points of the last category is $0$ simply by testing if $c-c' = |J|$. We can thus perform Step 3 with $O(1)$ updates.

Let us now adapt this to work on a Word RAM machine with words of length $w \geq \log n$. Let $G$ be a $3$-dimensional orthogonal grid with uniform spacing $1/(R^2\sqrt{3})$. Whenever we say we insert some point to $D$, we actually place it at the closest point which lies on a vertex of the grid $G$. Thus, all points lie at a distance of at most $1/R^2$ from their originally intended location. By Lemma \ref{lemma:extremal} this doesn't change which points are extremal at any stage of the procedure. As all points have coordinates with absolute value bounded by $R$ and lie on the vertices of $G$, we can assume by some appropriate scaling that all coordinates are integers with absolute value bounded by $O(R^3) = O(k^6\cdot m^6) = O(n^6)$. This means that all coordinates we consider each can be described with $O(\log n)$ bits and thus fit in a constant number of machine words. There is a final issue which is that we have originally defined the points in cylindrical coordinates. To convert them to Cartesian coordinates requires the computation of $r\cdot \cos(\theta)$ and $r\cdot \sin(\theta)$ for $0\leq \theta \leq 2\pi$ and $0\leq r \leq R$ up to $O(\log n)$ bits of precision. This can be done in $O(\ppolylog(n))$ time. Precomputing this for every point thus costs an additional  $O(n\cdot \ppolylog(n))$ time, which does not affect our result here.

By applying Theorem \ref{thm:general_3sum} and Theorem \ref{thm:general_omv} we then get the result.
\end{proof}
As is the case for maximal points, for purely incremental data structures with per-update runtime guarantees (rather than amortized), we can adapt the proof of Theorem \ref{thm:extremal_pts_ds} so that in Step 1 we do not insert the points $b_\bullet$ and in Step 2 we insert the points $b_{j}$ for all $j \not\in J$ (instead of deleting those for which $j \in J$). This again leads to the same bounds from the OMv problem and slightly worse lower bounds from the Exact Triangle problem (corresponding to Scenario 1 in Theorem \ref{thm:general_3sum}).

Because we can assume $t_p = O(t_u\cdot n)$ in both cases, we get the following corollaries.
\begin{corollary}
Let $D$ be a fully-dynamic data structure for Counting Extremal Points in $\R^3$ with $t_u$ expected amortized update time. If the Exact Triangle conjecture is true then
$t_u + t_q = \Omega\left(n^{1/5-o(1)}\right)$.
\end{corollary}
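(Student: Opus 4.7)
The plan is to derive this corollary directly from Theorem \ref{thm:extremal_pts_ds}, which establishes the inequality
\[t_p + t_u \cdot \left(n^{\frac{2-\gamma}{3-2\gamma}}+n^{\frac{1+\gamma}{3-2\gamma}}\right) + t_q \cdot n^{\frac{1+\gamma}{3-2\gamma}} = \Omega\left(n^{\frac{2}{3-2\gamma}-o(1)}\right)\]
for every $0<\gamma<1$. First I would invoke the preceding observation that in the fully-dynamic setting we can always assume $t_p = O(t_u \cdot n)$, obtained by running the $n$ insertions of the preprocessing phase one after another. Since for $\gamma<1$ we have $\frac{2-\gamma}{3-2\gamma}<1$, the term $t_u\cdot n^{\frac{2-\gamma}{3-2\gamma}}$ is absorbed into $t_p = O(t_u\cdot n)$, so the inequality collapses to
\[t_u\cdot n + t_q\cdot n^{\frac{1+\gamma}{3-2\gamma}} = \Omega\left(n^{\frac{2}{3-2\gamma}-o(1)}\right).\]

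Next, I would set $T = t_u + t_q$ and try to pick $\gamma$ so as to maximize the resulting lower bound on $T$. Restricting to $\gamma \leq 2/3$ so that $\frac{1+\gamma}{3-2\gamma}\leq 1$, the left-hand side is at most $2T\cdot n$, yielding
\[T = \Omega\left(n^{\frac{2\gamma-1}{3-2\gamma}-o(1)}\right).\]
A quick derivative check shows that $\frac{2\gamma-1}{3-2\gamma}$ is increasing in $\gamma$, so it is maximized on the admissible range at $\gamma = 2/3$, where it evaluates to $1/5$. (One also checks that the complementary range $\gamma>2/3$ yields a strictly smaller exponent, via $T\cdot n^{\frac{1+\gamma}{3-2\gamma}} = \Omega(n^{\frac{2}{3-2\gamma}-o(1)})$, which gives $\frac{1-\gamma}{3-2\gamma}$, a decreasing function of $\gamma$.) Plugging $\gamma = 2/3$ into the inequality above yields $t_u + t_q = \Omega(n^{1/5 - o(1)})$, as claimed.

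There is no real obstacle here: the corollary is essentially an optimization over $\gamma$ in the Scenario~2 bound of Theorem \ref{thm:general_3sum} already applied inside Theorem \ref{thm:extremal_pts_ds}. The only thing to be slightly careful about is the use of the assumption $t_p = O(t_u\cdot n)$, which is legitimate because deletions are available and so preprocessing can be simulated by $n$ updates without asymptotically changing the amortized update cost.
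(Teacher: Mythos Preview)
Your argument is correct and is exactly the derivation the paper leaves implicit (the paper only states ``Because we can assume $t_p = O(t_u\cdot n)$ \ldots\ we get the following corollaries''). One minor wording issue: when you ``collapse'' the inequality you silently drop the term $t_u\cdot n^{\frac{1+\gamma}{3-2\gamma}}$, which is only dominated by $t_u\cdot n$ once you restrict to $\gamma\le 2/3$; since you do make that restriction in the next step, the final computation at $\gamma=2/3$ is unaffected.
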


\begin{corollary}
Let $D$ be an incremental data structure for Counting Extremal Points in $\R^3$ with $t_u$ expected time per update. If the Exact Triangle conjecture is true then
$t_u + t_q = \Omega\left(n^{1/5-o(1)}\right)$.
If the OMv conjecture is true then
$t_u + t_q = \Omega\left(n^{1/2-o(1)}\right)$.
\end{corollary}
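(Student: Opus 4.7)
The plan is to carry out the adaptation sketched in the paragraph preceding the corollary, reusing the geometric construction from the proof of Theorem~\ref{thm:extremal_pts_ds} almost verbatim and changing only which points are inserted when. Since we are purely incremental, deletions must be avoided, so in Step~1 we initialize $D$ with the $(k+1)m$ points $q_{i,j}$ together with the $s_\mathcal{F}$ points $p_{i,j}$ but none of the $b_j$; we still have $n=\Theta(m\cdot k)$ objects. In Step~2, given $J\subset\{1,\ldots,m\}$, we insert $b_j$ for every $j\notin J$, costing $u_J=O(m)$ incremental updates. The resulting set $S$ is identical to the one obtained at the end of Step~2 in the proof of Theorem~\ref{thm:extremal_pts_ds}, so Lemma~\ref{lemma:extremal} still governs which points are extremal. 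Step~3 is then unchanged: insert $t_{i'+1}$ and read the count $c$, then insert $t_{i'}$ and read the new count $c'$, deciding $F_{i'}\cap J\neq\emptyset$ by checking whether $c-c'=|J|$, all with $O(1)$ additional insertions and queries.

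With this encoding we now fit Scenario~1 of Theorem~\ref{thm:general_3sum} (rather than Scenario~2 as in the fully-dynamic case), since $n=O(m\cdot k)$ and $u_J=O(m)$. An incremental structure always satisfies $t_p=O(t_u\cdot n)$ by inserting its initial points one at a time, and because Step~3 performs updates we have $t_{uq}=t_u+t_q$. Writing $T=t_u+t_q$, the theorem yields, for every $0<\gamma<1$,
\[
T\cdot n + T\cdot n^{\frac{1+\gamma}{3-2\gamma}} = \Omega\!\left(n^{\frac{2}{3-2\gamma}-o(1)}\right),
\]
which forces either $T=\Omega(n^{(2\gamma-1)/(3-2\gamma)-o(1)})$ or $T=\Omega(n^{(1-\gamma)/(3-2\gamma)-o(1)})$. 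These two exponents coincide at $\gamma=2/3$, where both equal $1/5$, yielding $t_u+t_q=\Omega(n^{1/5-o(1)})$ and establishing the Exact Triangle part of the corollary.

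For the OMv part, the same reduction satisfies the hypotheses of Theorem~\ref{thm:general_omv} with $c_1=c_2=1$: $n=O(m\cdot k)$, $u_J=O(m)$, and the preprocessing time is polynomial. Plugging in $\gamma=1/2$ directly gives $t_u+t_q=\Omega(n^{1/2-o(1)})$. The only non-routine thing to verify is that the reduction is genuinely insertion-only, but this is immediate: no deletions occur in Step~2 (we only add $b_\bullet$ points) nor in Step~3 (we only add $t_{i'+1}$ and $t_{i'}$), so the construction works in the purely incremental setting without any additional machinery and no step of the argument is any more delicate than in the proof of Theorem~\ref{thm:extremal_pts_ds}.
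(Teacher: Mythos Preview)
Your proposal is correct and follows exactly the approach the paper sketches: omit the $b_\bullet$ points in Step~1, insert $b_j$ for $j\notin J$ in Step~2, leave Step~3 unchanged, and invoke Scenario~1 of Theorem~\ref{thm:general_3sum} (optimizing at $\gamma=2/3$) together with Theorem~\ref{thm:general_omv}. Your computation of the exponents is accurate and matches what the paper obtains.
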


If we assume almost linear $n^{1+o(1)}$ preprocessing time then if the Exact Triangle conjecture holds we have 
\[n^{1+o(1)} + t_u\cdot\left(n^{\frac{2-\gamma}{3-2\gamma}}+n^{\frac{1+\gamma}{3-2\gamma}}\right)  + t_q \cdot n^{\frac{1+\gamma}{3-2\gamma}} = \Omega\left(
n^{\frac{2}{3-2\gamma}-o(1)}\right).\]
In particular by letting $\gamma$ approach $1/2$ from above we get the following.
\begin{corollary}
Let $D$ be a fully-dynamic data structure for Counting Extremal Points in $\R^3$ with $n^{1+o(1)}$ expected preprocessing time and $t_u$ expected amortized update time. If the Exact Triangle conjecture is true then
$t_u +t_q = \Omega\left( n^{1/4-o(1)}\right)$.
\end{corollary}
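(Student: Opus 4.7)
The approach is to plug the preprocessing bound $t_p = n^{1+o(1)}$ directly into the trade-off given by Theorem \ref{thm:extremal_pts_ds}, namely
\[
t_p + t_u \cdot \left(n^{\frac{2-\gamma}{3-2\gamma}}+n^{\frac{1+\gamma}{3-2\gamma}}\right) + t_q \cdot n^{\frac{1+\gamma}{3-2\gamma}} = \Omega\!\left(n^{\frac{2}{3-2\gamma}-o(1)}\right),
\]
and then to choose the free parameter $\gamma$ just above $1/2$. The key observation is that the exponent on the right, $\frac{2}{3-2\gamma}$, exceeds $1$ precisely when $\gamma > 1/2$; for any such fixed $\gamma$, the right-hand side strictly dominates $n^{1+o(1)}$, and so the $t_p$ contribution on the left is absorbed into the right-hand bound.

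Once the preprocessing term is absorbed, the remaining trade-off simplifies nicely. For any $\gamma \ge 1/2$ one has $\frac{2-\gamma}{3-2\gamma} \le \frac{1+\gamma}{3-2\gamma}$, so the factor $n^{\frac{2-\gamma}{3-2\gamma}}+n^{\frac{1+\gamma}{3-2\gamma}}$ multiplying $t_u$ is, up to a constant, simply $n^{\frac{1+\gamma}{3-2\gamma}}$, matching the coefficient of $t_q$. Combining these two contributions I would derive
\[
(t_u + t_q)\cdot n^{\frac{1+\gamma}{3-2\gamma}} = \Omega\!\left(n^{\frac{2}{3-2\gamma}-o(1)}\right),
\]
which rearranges to $t_u + t_q = \Omega\!\left(n^{\frac{1-\gamma}{3-2\gamma}-o(1)}\right)$.

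Finally, I would let $\gamma \to 1/2^+$: the exponent $\frac{1-\gamma}{3-2\gamma}$ is continuous in $\gamma$ and tends to $\frac{1/2}{2}=1/4$. Since the bound holds for every $\gamma$ strictly greater than $1/2$, picking $\gamma = 1/2 + \eta$ with $\eta$ shrinking slowly and folding it into the existing $o(1)$ slack yields $t_u + t_q = \Omega(n^{1/4-o(1)})$, as claimed. The only point needing a moment's care is verifying that the two $o(1)$ contributions --- one coming from Theorem \ref{thm:extremal_pts_ds}, the other from the choice $\gamma = 1/2+\eta$ --- combine cleanly; but this is routine given the continuity of the exponents in $\gamma$ and the fact that the preprocessing term is dominated with polynomial slack for any fixed $\eta > 0$.
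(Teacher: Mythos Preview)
Your proposal is correct and follows essentially the same approach as the paper: substitute $t_p=n^{1+o(1)}$ into the trade-off of Theorem~\ref{thm:extremal_pts_ds} and let $\gamma$ approach $1/2$ from above. The paper states this in one line without the intermediate simplifications, but your added details (why the preprocessing term is absorbed for $\gamma>1/2$, why the two exponents on $t_u$ collapse to $\frac{1+\gamma}{3-2\gamma}$, and the resulting exponent $\frac{1-\gamma}{3-2\gamma}\to 1/4$) are all correct and make the argument explicit.
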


In the fully-dynamic setting, Chan \cite{Chan2020} gives a data structure for this problem with $O(n^{1+\epsilon})$ preprocessing time and $O(n^{11/12+\epsilon})$ amortized update and query time, for an arbitrary $\epsilon>0$. In the more restricted semi-online setting (which generalizes the incremental case), another paper by the same author \cite{Chan2003} gives a data structure with $O(n^{1+\epsilon})$ preprocessing time and $O(n^{7/8+\epsilon})$ worst-case time per operation.

\section{Dynamic geometric Set Cover with squares}\label{sec:square_set_cover}
In this section we answer a question by Chan et al.\ \cite{Chan2022}, by giving a conditional polynomial lower bound on the time required to approximately maintain (the size of) a dynamic square set cover in the plane under range updates.

\subsection{The unweighted case}
\subparagraph*{Dynamic Square Set Cover:} Maintain a set $S$ of $n$ points and axis-aligned squares in the plane to support queries asking for the size of the smallest subset of squares which covers all points.

Even the static version of this problem with unit squares is $\mathrm{NP}$-complete \cite{Fowler1981}, thus the focus on approximations. Chan et al.\ \cite{Chan2022} recently gave a $O(1)$-approximate solution in the fully dynamic case where both squares and points may be inserted or deleted. This (Monte Carlo randomized) solution achieves $O(n^{1/2+\epsilon})$ amortized update and query time. The authors ask if there is a conditional polynomial lower bound for this problem. We show the following.
\begin{theorem}\label{thm:square_set_cover}
Let $0\leq \alpha< 1$ be an efficiently computable\footnote{We say that a number $\alpha$ is efficiently computable if there is an algorithm which, for any $k\geq 0$, can output the first $k$ bits of $\alpha$ in $O(\ppoly(k))$ time.} constant. If there is a fully-dynamic data structure for $O(n^\alpha)$-approximate Dynamic Square Set Cover with $t_u$ expected amortized update time and $t_q$ expected query time, then the Multiphase problem can be solved with $n=O\left(k^{1/(1-\alpha)}\cdot m^{2/(1-\alpha)^2}\right)$, $t_{uq} = t_u+t_q$ and $u_J = O(m)$. 
\end{theorem}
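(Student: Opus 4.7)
The plan is to reduce the Multiphase problem to $O(n^\alpha)$-approximate Dynamic Square Set Cover, respecting the claimed bounds $u_J = O(m)$ and $t_{uq} = t_u + t_q$; applying Theorem~\ref{thm:general_3sum} and Theorem~\ref{thm:general_omv} to that reduction will then give the lower bounds the theorem advertises. Given a family $\mathcal{F}$, the Step~2 subset $J$, and the Step~3 index $i'$, I would encode these data geometrically so that the approximate cardinality of a minimum square cover lets us decide whether $J \cap F_{i'} \neq \emptyset$. Because the oracle is only $O(n^\alpha)$-approximate, the OPT values in the two Multiphase outcomes must differ by a multiplicative factor strictly greater than $c \cdot n^\alpha$ for some constant $c$, and this is the constraint that drives the numerology of the stated $n$.

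The construction I have in mind is built from ``column gadgets'' amplified by a parameter $L$. For each $j \in \{1,\ldots,m\}$ I would place a cluster of $L$ closely spaced points, together with a single ``column super-square'' $C_j$ covering the whole cluster and $L$ singleton squares covering them one by one. Covering the cluster costs $1$ if $C_j$ is available and $L$ otherwise. In Step~1 all clusters and all squares are inserted via the preprocessing of $D$. In Step~2, for every $j \notin J$ I delete $C_j$, so after $O(m)$ updates the cheap column cover survives exactly on $J$. The family $\mathcal{F}$ is further encoded by ``row-witness'' points and per-pair squares arranged so that row $i$ is linked to column $j$ precisely when $j \in F_i$. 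Finally, in Step~3, given $i'$, I perform a constant number of updates (e.g., a single insertion/deletion that activates the row-$i'$ witness) and then issue the approximate cover-size query.

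The separation argument would follow the standard template: if $J \cap F_{i'} \neq \emptyset$, some $C_j$ with $j \in F_{i'}$ is still present and absorbs the $i'$-specific witness cheaply, keeping OPT at a low baseline; if $J \cap F_{i'} = \emptyset$, no surviving column super-square lies over any cluster attached to $F_{i'}$, and covering the extra witness structure is forced to pay $L$ singletons. Choosing $L$ so that the resulting gap strictly exceeds $c \cdot n^\alpha$, and then writing $n$ as a polynomial in $k$, $m$, and $L$, gives an inequality whose solution, after balancing the row- and column-side amplification needed to respect both the $u_J = O(m)$ budget and the $O(1)$ Step~3 budget, produces the claimed $n = O(k^{1/(1-\alpha)} \cdot m^{2/(1-\alpha)^2})$.

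The main obstacle I expect is making a single per-column Step~2 deletion simultaneously disable every $(i,j)$-specific cheap cover for $j \notin J$, even though the sets $\{i : j \in F_i\}$ are arbitrary while axis-aligned squares only cover contiguous rectangles; a naive pairwise encoding would use $\Omega(s_{\mathcal{F}})$ Step~2 updates and blow the $O(m)$ budget. Working around this forces the encoding to carry extra redundancy per column, which is what ultimately inflates the $m$-exponent to $2/(1-\alpha)^2$. Verifying that this redundancy does not accidentally furnish a cheap cover in the negative case, and checking that the $i'$-dependent linkage can indeed be triggered by $O(1)$ Step~3 operations rather than by a linear-size change, are the delicate points the full proof will have to nail down.
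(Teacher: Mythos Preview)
Your construction, as described, does not produce the multiplicative gap you need. After Step~2 every column $j\notin J$ has lost its super-square, so covering its $L$ points already costs $L$; the baseline optimum is therefore at least $(m-|J|)\,L$. In Step~3 the two outcomes you describe differ by only an additive $O(L)$ (``forced to pay $L$ singletons'' versus not), so the ratio between the two optima is $1+O(1/m)$, nowhere near $c\cdot n^\alpha$. No choice of $L$ rescues this: inflating $L$ inflates both sides equally. The same problem persists if you flip Step~2 to insertions, because your clusters of $L$ points are present in \emph{every} column and must always be covered.

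The paper's construction sidesteps this by putting the amplification on the \emph{points that encode~$\mathcal F$}, not on generic column gadgets. One places a point $p_{i,j}^a$ at $(c\cdot j+a,i)$ for every pair $j\in F_i$ and every $0\le a<c$, together with $c$ thin vertical strips per column as the only ``expensive'' covers. Step~2 \emph{inserts} one wide square per $j\notin J$ that blankets all copies in column~$j$; Step~3 inserts two wide horizontal squares that blanket all rows except~$i'$. After these $m+2$ insertions, the only possibly uncovered points are the $c$ copies $p_{i',j}^a$ with $j\in J\cap F_{i'}$. If $J\cap F_{i'}=\emptyset$ the $m+2$ big squares already cover everything, so $\mathrm{OPT}\le m+2$; if $J\cap F_{i'}\neq\emptyset$ those $c$ copies lie in a column with no wide square and must be covered by $c$ thin strips, so $\mathrm{OPT}\ge c$. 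Choosing $c$ just above $\beta\,(n_1+m+2)^\alpha(m+2)$ gives the gap, and solving this for $c$ yields $n=O\!\left(k^{1/(1-\alpha)}m^{2/(1-\alpha)^2}\right)$. Note in particular that the ``obstacle'' you anticipate (one Step~2 operation per column having to affect all rows $i$ with $j\in F_i$) never arises: a single wide square per column covers all rows at once, precisely because the information ``$j\in F_i$'' lives in the point set, not in the set of squares.
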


Together with Theorem \ref{thm:general_omv} this implies the following.
\begin{corollary}
Let $0\leq \alpha < 1$ be an efficiently computable constant.\footnote{In this result and in Theorem \ref{thm:weigted_square_set_cover}, we could get rid of the assumption that $\alpha$ is efficiently computable with some more work. This does however not seem worth the trouble, as we can always replace $\alpha$ with a slightly larger but arbitrarily close efficiently computable number and get an almost identical bound.} Let $D$ be a fully dynamic data structure for $O(n^\alpha)$-approximate Dynamic Square Set Cover with $t_u$ expected amortized update time and $t_q$ expected query time. If the OMv-conjecture is true then for any $0<\gamma<1$
\[t_u\cdot n^\gamma + (t_u+t_q) \cdot n^{1-\alpha-\frac{2\gamma}{1-\alpha}} = \Omega\left(n^{1-\alpha-\gamma\frac{1+\alpha}{1-\alpha}-o(1)}\right).\]
In particular, for $\alpha = 0$ (i.e. for a constant approximation factor) and $\gamma = 1/3$ we have
$t_u + t_q = \Omega\left(n^{1/3-o(1)}\right)$.
\end{corollary}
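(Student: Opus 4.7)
The plan is to reduce the Multiphase problem to approximate Dynamic Square Set Cover by engineering a set cover instance whose optimum $\mathrm{OPT}$ differs multiplicatively by more than the approximation factor $\Theta(n^\alpha)$ between the YES case ($J \cap F_i \neq \emptyset$) and the NO case. Once such a multiplicative gap is in place, a single call to $D$ returns a value whose position relative to a precomputed threshold decides the Multiphase query, and the whole construction fits the template of the earlier Multiphase reductions in this paper.

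I would begin with a basic encoding in the spirit of the reductions used for Square Range Marking and the maximal/extremal point problems. In Step 1, for each pair $(i,j)$ with $j \in F_i$, I would place a point $p_{i,j}$ at scaled coordinates encoding $(j,i)$, along with auxiliary ``row'' squares and ``witness'' squares arranged so that each row $i$ can be covered cheaply by a single row-square precisely when a witness column for that row is present. In Step 2, one update per column $j \notin J$ (totalling $O(m)$) inserts a blocker that removes the row-squares' ability to serve column $j$. In Step 3, given $i$, a constant number of updates isolates row $i$, and the residual cover size changes by a small amount depending on whether $J \cap F_i \neq \emptyset$. By itself, this basic construction yields only an \emph{additive} discrepancy in $\mathrm{OPT}$, which is invisible to an $O(n^\alpha)$-approximate oracle for any $\alpha > 0$.

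The main technical obstacle, and the heart of the proof, is \emph{gap amplification}. I would replicate the basic gadget into $L$ scaled copies stacked along the $y$-axis at geometrically growing spacings, chosen large enough that no axis-aligned square can straddle two copies. The copies are coupled through a shared ``witness hub'' whose role is to let a single successful witness (YES) cheaply cover one point in every layer with one square, while in the absence of any witness (NO) every layer must be paid for independently. This turns the additive discrepancy into a multiplicative ratio of $\Omega(L)$. Setting $L = \Theta(n^\alpha)$ and solving the fixed-point equation that arises from $n = \Theta(L \cdot (\text{per-layer size}))$, with per-layer size $\Theta(m^2)$ and baseline cover cost $\Theta(k)$, yields the claimed $n = O(k^{1/(1-\alpha)} m^{2/(1-\alpha)^2})$; the separate exponents on $k$ and $m$ arise because the baseline cover cost depends on $k$ but the amplified gadget size grows with $m$. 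The delicate part of the write-up will be the geometric feasibility check using only axis-aligned squares (as opposed to rectangles): the shared hub must cover exactly one point per layer and nothing else, and one must rule out alternative covers that could collapse the intended gap in the NO case.

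Finally, I would verify the bookkeeping — that Step 2 uses $O(m)$ updates, Step 3 uses $O(1)$, all coordinates fit in $O(\log n)$ bits (using that $\alpha$ is efficiently computable to compute the separating threshold), and the comparison between the returned value $X$ and the precomputed threshold correctly distinguishes YES from NO — which completes the reduction with the claimed Multiphase parameters $n = O(k^{1/(1-\alpha)} m^{2/(1-\alpha)^2})$, $t_{uq} = t_u + t_q$ and $u_J = O(m)$.
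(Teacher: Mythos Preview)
Your high-level plan---reduce from the Multiphase problem and engineer a multiplicative gap in $\mathrm{OPT}$ large enough to survive an $O(n^\alpha)$-approximation---is exactly the right framework, and the final plug into the general OMv reduction is correct. However, your gap-amplification mechanism diverges substantially from the paper's and, as stated, has a real geometric obstruction. Your ``witness hub'' is supposed to be a single axis-aligned square that touches one designated point in each of $L$ vertically stacked layers placed at geometrically growing heights. But a square spanning that entire vertical extent has the same horizontal extent, so it cannot hit one point per layer ``and nothing else''; conversely, if you push the hub points far off to the side to avoid collateral coverage, the hub square is available regardless of whether $J\cap F_i$ is empty, so it no longer discriminates YES from NO. You acknowledge this is ``the delicate part,'' but no concrete mechanism is given for why the hub is usable exactly in the YES case, nor for why alternative covers cannot collapse the gap in the NO case. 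Your size accounting (``per-layer size $\Theta(m^2)$, baseline cost $\Theta(k)$'') is also not derived from the construction and looks reverse-engineered from the target exponent.

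The paper's amplification is much simpler and avoids all of this. It works first with rectangles: replace each encoded point $p_{i,j}$ by $c$ horizontally spread copies $p_{i,j}^0,\dots,p_{i,j}^{c-1}$, and in Step~1 add, for each column $j$ and each $0\le a<c$, one thin vertical rectangle covering exactly the points $p_{\bullet,j}^a$. In Step~2 insert one wide rectangle per $j\notin J$ covering all $c$ copies in that column; in Step~3 insert two rectangles covering rows below and above $i'$. If $J\cap F_{i'}=\emptyset$, the $\le m$ Step-2 rectangles plus the two Step-3 rectangles cover everything, so $\mathrm{OPT}\le m+2$. If some $j\in J\cap F_{i'}$, the $c$ copies $p_{i',j}^\bullet$ are covered by nothing except their $c$ thin Step-1 rectangles, forcing $\mathrm{OPT}\ge c$. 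Choosing $c$ as the least integer exceeding $\beta(n_1+m+2)^\alpha(m+2)$ (a fixed-point in $c$, since $n_1$ grows linearly in $c$) gives the required $c/(m+2)>\beta n^\alpha$ gap and the stated bound on $n$. Finally, because every rectangle's height can be increased freely without changing which points it covers, one stretches the whole picture horizontally until the thin verticals become squares and enlarges the others to squares as well. There is no layering, no hub, and the YES case is the \emph{expensive} one---the opposite of your convention.
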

Note that for any $0\leq \alpha < 1$ we can get a polynomial lower bound (whose exponent depends on $\alpha$).

\begin{figure}
    \centering
    \includegraphics[width=\textwidth]{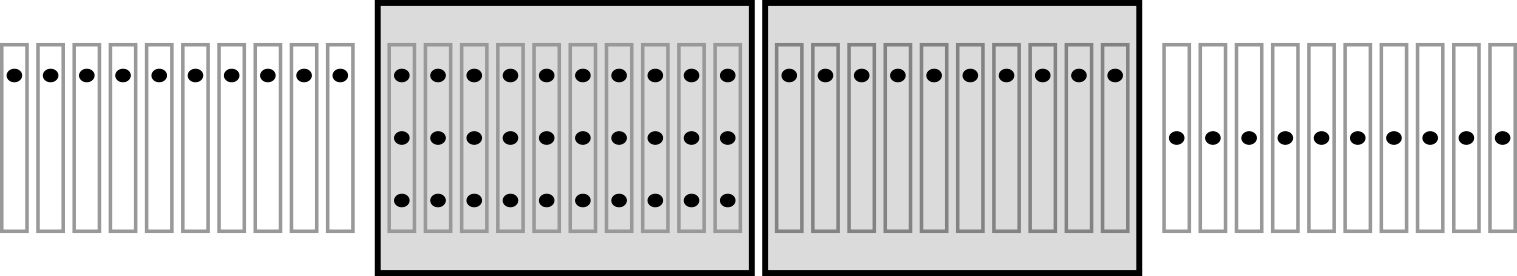}
    \caption{Illustration (not to scale) of a set of points and rectangles obtained after Step 2 in the proof of Theorem \ref{thm:square_set_cover}, for a data structure with a constant approximation ratio of $\beta = 3/2$. The illustrated instance has $m=4$, $k=3$, $\mathcal{F} = \{\{1,2,3\}, \{2,4\}, \{2\} \}$, $J = \{1,4\}$ and $c=10$.}
    \label{fig:square_set_cover}
\end{figure}

\begin{proof}[Proof of Theorem \ref{thm:square_set_cover}]
We first prove the result in the case of arbitrary axis-aligned rectangles and then show how to adapt it to use only squares.
Let $\mathcal{F}=\{F_1, \ldots, F_k\}$ be a family of $k$ subsets of $\{1,2,\ldots m\}$. Suppose $D$ achieves an approximation ratio of at most $\beta\cdot n^\alpha$ for some constant integer $\beta > 1$ (we can assume this without loss of generality). Let $c$ be an integer value which we specify later.

We perform Step 1 by initializing $D$ with the following points and rectangles: 
\begin{itemize}
    \item for each $1\leq i\leq k$ and $1\leq j \leq m$ for which $j \in F_i$, we put a point $p_{i,j}^a$ at coordinates $(c\cdot j+a,i)$ for each $0 \leq a < c$;
    \item for each $1\leq j \leq m$ and each $0 \leq a < c$, we put a thin vertical rectangle which covers exactly the points of the form $p_{\bullet,j}^a$.
\end{itemize}

The total number of points and rectangles at this point is $n_1 = s_\mathcal{F}+m\cdot c \leq m\cdot (k+1) \cdot c$.
Set $c$ to be the smallest integer such that $c > \beta (n_1+m+2)^\alpha \cdot (m+2)$ (note that the value of $n_1$ depends on $c$, but $c$ is nonetheless well defined and can be computed in $O(\ppolylog(m\cdot k))$ time). We then have $n_1 = O\left(k^{1/(1-\alpha)}\cdot m^{2/(1-\alpha)^2}\right)$.

To perform Step 2 when given $J \subset \{1,2,\ldots m\}$, we insert a rectangle covering all points of the form $p_{\bullet,j}^\bullet$ for each $j \not\in J$. This requires $O(m)$ updates on $D$. See Figure \ref{fig:square_set_cover} for an illustration of the first two steps.

In Step 3, when given an index $1\leq i'\leq k$, we insert two rectangles: the first covers all points $p_{i,\bullet}^\bullet$ with $i<i'$ and the second covers all points $p_{i,\bullet}^\bullet$ with $i>i'$. The total number of points and rectangles at this point is $n \leq n_1+m+2$. Now, if $J\cap F_{i'} = \emptyset$ then all points can be covered with at most $m+2$ rectangles: the (at most) $m$ rectangles inserted in step 2 together with the two rectangles inserted in Step 3. On the other hand, if there is some $j \in J\cap F_{i'}$, then the points of the form $p_{i,j}^\bullet$ can only be covered by choosing $c$ thin rectangles created in Step 1. Thus, any approximation of the set cover with a ratio better than $\frac{c}{m+2}$ suffices to distinguish between the two cases. Moreover, we have $\frac{c}{m+2} > \beta\cdot (n_1+m+2)^\alpha\frac{m+2}{m+2} \geq \beta\cdot n^\alpha$. We can then answer a Step 3 query by asking the data structure for a $\beta\cdot n^\alpha$ approximation of the size of the minimum set cover.

To see how to adapt this reduction using only squares, notice that we can increase the height of any rectangle without affecting the results. Thus, we can stretch the whole configuration of points and rectangles horizontally until the thinnest vertical rectangles become squares, and adjust the heights of the other rectangles to make them squares as well.
\end{proof}

Notice that this proof only requires ranges to be inserted or deleted and still works if the set of points is static. It also applies to incremental data structures with per-operation runtime guarantees. We can get the same bound when replacing the squares with orthogonal slabs of the plane (``rectangles'' which are unbounded in two opposite directions).

Contrast this lower bound with the case of \emph{unit} axis-aligned squares, for which Chan et al. \cite{Chan2022} give a data structure for $O(1)$-approximation achieving $2^{O(\sqrt{\log n})}$ amortized update and query time.

\subsection{The weighted case}
In the case where we associate to each square a polynomially bounded integer weight and ask for an approximate weight for the minimum weight set cover, we can modify the reduction as follows.
\begin{itemize}
    \item Replace the points $p_{i,j}^a$ for $0 \leq a < c$ with a single point $p_{i,j}$.
    \item In Step 2, for each $j\not\in J$, replace the corresponding $c$ thin vertical rectangles with a single rectangle covering all points of the form $p_{\bullet,j}$ and weight $c$.
\end{itemize}
In the case of a fully-dynamic data structure, we can also pre-insert rectangles in Step 1 and delete some of them in Step 2, as we have done for previous reductions. Then by a similar argument, we get the following.
\begin{theorem}
Let $0\leq \alpha < 1$ be an efficiently computable constant. If there is an incremental data structure for $O(n^\alpha)$-approximate Weighted Dynamic Square Set Cover with $t_u$ expected time per update and $t_q$ expected time per query, then Scenario 3 in Theorem \ref{thm:general_3sum} applies (with $t_{uq} = t_u + t_q$).

If the data structure is fully-dynamic then Scenario 4 applies,  even for amortized runtime guarantees.
\end{theorem}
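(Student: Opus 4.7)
The plan is to take the modifications listed just before the theorem statement and plug them into the template of the proof of Theorem~\ref{thm:square_set_cover}, while checking that the resulting instance has total size $n = O(s_\mathcal{F})$ and that Step 2 uses $O(m)$ updates in the incremental variant (matching Scenario 3) or $O(|J|)$ updates in the fully-dynamic variant (matching Scenario 4). The key conceptual point is that the weighted setting lets us replace the $c$-fold blow-up from the thin vertical rectangles with a single rectangle of weight $c$, so $c$ is now a (polynomially bounded) weight rather than a multiplicative factor on $n$.

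For both variants, I would first set $c$ to be the smallest integer with $c > \beta (s_\mathcal{F})^{\alpha}\,(m+2)$ where $\beta \cdot n^\alpha$ is the assumed approximation ratio, using the fact that $\alpha$ is efficiently computable to compute $c$ in $O(\ppolylog(m k))$ time, and check that $c$ is polynomial in the input size. In Step 1, for each $j\in F_i$ I would insert a point $p_{i,j}$ at $(j,i)$ together with a small singleton square of weight $c$ covering only $p_{i,j}$; these singletons play the role of the thin vertical rectangles in the unweighted reduction. In the fully-dynamic variant I would additionally pre-insert, for every $1\le j \le m$, a tall weight-$1$ rectangle $W_j$ covering all points of the form $p_{\bullet,j}$. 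In Step 2, the incremental variant inserts $W_j$ for each $j\notin J$ ($O(m)$ updates), while the fully-dynamic variant deletes $W_j$ for each $j\in J$ ($O(|J|)$ updates). Step 3 is unchanged: insert two weight-$1$ rectangles covering rows $i<i'$ and rows $i>i'$ respectively.

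The separation argument I would then carry out is that after Step 3 a point $p_{i,j}$ is covered by a weight-$1$ rectangle iff $j\notin J$ or $i\neq i'$; thus if $J\cap F_{i'}=\emptyset$ the cover has total weight at most $m+2$, and otherwise every optimal cover must include the weight-$c$ singleton of some $p_{i',j}$ with $j\in J\cap F_{i'}$, giving total weight at least $c$. By the choice of $c$ the gap exceeds $\beta n^\alpha$, so one query to the approximate data structure solves the Step 3 decision problem. The object count is $n = s_\mathcal{F}$ points $+\ s_\mathcal{F}$ singletons $+\ O(m) = O(s_\mathcal{F})$, as required for Scenarios 3 and 4. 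The squaring step (horizontally stretching and extending rectangle heights until every rectangle becomes a square) is identical to the unweighted proof and does not affect the cover structure.

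The main obstacle I anticipate is purely bookkeeping: one must verify that the chosen $c$ simultaneously (i)~beats the approximation ratio even after $n$ has been updated to include the new singletons and wide rectangles, and (ii)~stays polynomially bounded so that weights fit in $O(\log n)$ bits. This is handled exactly as in the unweighted proof by letting $c$ depend only on $s_\mathcal{F}$ and $m$ and then observing that $n = \Theta(s_\mathcal{F})$, so the bound $c > \beta n^\alpha(m+2)$ is self-consistent. Once this is in place, applying Theorem~\ref{thm:general_3sum} yields Scenarios 3 (incremental) and 4 (fully-dynamic) directly.
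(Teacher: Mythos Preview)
Your proposal is correct and follows essentially the same approach as the paper: replace the $c$-fold blow-up by a single weight-$c$ object, keep weight-$1$ column and row rectangles, and observe that the resulting instance has size $O(s_\mathcal{F})$ with $u_J=O(m)$ (incremental) or $u_J=O(|J|)$ (fully-dynamic via pre-insertion and deletion). The only inessential difference is that you use one weight-$c$ singleton per point $p_{i,j}$, whereas the paper's sketch replaces the $c$ thin vertical rectangles by a single weight-$c$ column rectangle per $j$; both choices give $n=\Theta(s_\mathcal{F})$ and the same separation argument, so the scenarios match.
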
\label{thm:weigted_square_set_cover}
In the fully dynamic case for example, Theorem \ref{thm:general_3sum} and Theorem \ref{thm:general_omv} thus imply the following lower bounds.
\begin{theorem}
Let $0\leq \alpha < 1$ be an efficiently computable constant. Let $D$ be a fully-dynamic data structure for $O(n^\alpha)$-approximate Weighted Dynamic Square Set Cover with $t_u$ expected amortized update time and $t_q$ expected amortized query time. If the Exact Triangle conjecture holds, then for any $0 < \gamma < 1$
\[t_u\cdot(n + n^\frac{1+\gamma}{2-\gamma})+t_q\cdot n^\frac{1+\gamma}{2-\gamma} = \Omega\left( n^{\frac{2}{2-\gamma}-o(1)}\right).\]
If the OMv conjecture holds, then for any $0 < \gamma < 1$
\[t_u\cdot (n^{1 - \gamma} + n^{\gamma}) + t_q \cdot n^{1-\gamma} = \Omega\left(n^{1-o(1)}\right).\]
In particular, for $\gamma = 1/2$ we have
$t_u + t_q = \Omega\left( n^{1/2 - o(1)}\right)$. 
\end{theorem}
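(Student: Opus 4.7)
The plan is to treat this theorem as a direct corollary of the preceding Theorem \ref{thm:weigted_square_set_cover} together with the two master reduction theorems (Theorems \ref{thm:general_3sum} and \ref{thm:general_omv}). First I would invoke Theorem \ref{thm:weigted_square_set_cover} to observe that any fully-dynamic $O(n^\alpha)$-approximate Weighted Dynamic Square Set Cover structure solves the Multiphase problem in Scenario 4 with $t_{uq} = t_u + t_q$, so no new geometric construction is needed; the work was already done in the previous theorem.

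For the Exact Triangle part, I would plug the Scenario 4 parameters into Theorem \ref{thm:general_3sum}, which gives
\[
t_p + t_u \cdot n + t_{uq} \cdot n^{\frac{1+\gamma}{2-\gamma}} = \Omega\!\left(n^{\frac{2}{2-\gamma}-o(1)}\right).
\]
Because $D$ is fully dynamic, the preprocessing phase can be emulated by inserting the $O(n)$ initial objects one at a time, so we may assume $t_p = O(t_u \cdot n)$ and absorb $t_p$ into the $t_u \cdot n$ term. Substituting $t_{uq} = t_u + t_q$ and regrouping the two $t_u$ contributions then yields exactly the stated Exact Triangle inequality.

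For the OMv part, I would use the specialization of Theorem \ref{thm:general_omv} that applies whenever $n = O(m \cdot k)$; this hypothesis is met in Scenario 4 since $n = O(s_\mathcal{F}) \leq O(m \cdot k)$, and the polynomial-preprocessing hypothesis is satisfied because $t_p = O(t_u \cdot n)$ with $t_u$ polynomial. The theorem supplies $t_u \cdot n^{\gamma} + t_{uq} \cdot n^{1-\gamma} = \Omega(n^{1-o(1)})$, and substituting $t_{uq} = t_u + t_q$ and collecting the $t_u$ terms produces the stated OMv bound. Setting $\gamma = 1/2$ and dividing through by $\sqrt{n}$ then gives the ``in particular'' statement $t_u + t_q = \Omega(n^{1/2-o(1)})$.

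There is no real obstacle to overcome: the theorem is a bookkeeping corollary whose technical content is entirely contained in Theorem \ref{thm:weigted_square_set_cover}. The only points that require care are (i) correctly eliminating the $t_p$ term via the fully-dynamic emulation, and (ii) verifying that Scenario 4's parameter regime indeed fits the ``$n = O(m \cdot k)$'' specialization of Theorem \ref{thm:general_omv}, which it does since $u_J = O(|J|) = O(m)$ and $n = O(s_\mathcal{F}) = O(m \cdot k)$.
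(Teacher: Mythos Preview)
Your proposal is correct and matches the paper's approach exactly: the paper states this theorem as an immediate consequence of the preceding theorem (that Scenario~4 applies in the fully-dynamic case) together with Theorems~\ref{thm:general_3sum} and~\ref{thm:general_omv}, with the $t_p$ term absorbed via $t_p = O(t_u\cdot n)$. Your write-up is in fact more detailed than the paper's one-line justification, and the two caveats you flag (eliminating $t_p$ and checking $n=O(s_\mathcal{F})\le O(m\cdot k)$) are precisely the points one must verify.
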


Contrast this lower bound with the weighted case of \emph{unit} axis-aligned squares, for which Chan et al. \cite{Chan2022} give a data structure for $O(1)$-approximation achieving $O(n^{\epsilon})$ amortized update and query time for an arbitrarily small constant $\epsilon > 0$.

\section{An unconditional lower bound for incremental Hypervolume Indicator Problem in \texorpdfstring{$\R^3$}{R3}} \label{subsec:hypervolume_indicator}

Here we give an unconditional lower bound for the incremental variant of Klee's Measure Problem in $\R^3$ where all boxes are in the positive orthant and have one vertex lying on the origin. This special case is known as the Hypervolume Indicator Problem (in $\R^3$). It has applications to the evaluation of multiobjective optimization algorithms and has been the object of many papers (see the survey on the topic by Guerreiro et al.\ \cite{GuerreiroFP2021}). We show the lower bound by a reduction from the Dynamic Matrix–Vector Multiplication Problem \cite{Frandsen2001}. In this problem, we want to maintain a dynamic matrix $M$ of size $N\times N$ and a vector $v$ of size $N$ with updates consisting of changing an entry of $M$ or $v$ and queries consisting of computing a given entry of the product $Mv$. Frandsen et al.\ showed a $\Omega(N)$ lower bound on the worst-case update time per operation in various general general models of computation such as history dependent algebraic computation trees and the Real RAM. In particular, they showed that in the case of non-negative integer inputs bounded by $2^w - 1$, the lower bound holds on the Word RAM model with words of size $w\geq \log n$ (and holds even if we are given arbitrary preprocessing time for the initial state of $M$ and $v$).\footnote{Here, contrary to everywhere else in this paper, we assume only $w\geq\log n$ and not $w=O(\log n)$, which is essential to apply the result of Frandsen et al.} 
A closer inspection of their proof reveals that they show the following stronger statement.
\begin{theorem}[{\cite{Frandsen2001}}] \label{thm:matrix_vector_mult}
Consider the following problem in the Word RAM model with $w\geq \log n$:
\begin{itemize}
    \item First, we are given an $N\times N$ matrix and $O(\ppoly(N))$ time to read and preprocess its entries.
    \item After this is done, we are given a vector $v$ of size $N$ and return the product $Mv$ after $O(t)$ time.
\end{itemize}
Then for any procedure solving this problem we have $t = \Omega(N^2)$ in the worst case.
\end{theorem}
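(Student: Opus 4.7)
The plan is to prove this lower bound by an algebraic potential argument in the spirit of Frandsen, Hansen, and Miltersen. The hard instance is a matrix $M$ drawn uniformly at random from $\mathbb{F}_p^{N \times N}$ for a prime $p$ with $2^{w-1} < p \leq 2^w$; each row is then an independent uniform vector in $\mathbb{F}_p^N$, so the $N$ entries of $Mv$ are $N$ linearly independent forms in $v$ with dense support.

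I would assign to each memory cell the linear form in $v$ that its current content represents and define a potential measuring how far the memory state is from containing the target list $\{(Mv)_1,\ldots,(Mv)_N\}$. Since preprocessing happens before $v$ arrives, none of the preprocessed cells can depend on $v$ at the start of the query phase; thus only operations performed after $v$ arrives can move the potential, and each Word RAM instruction (which reads and writes $O(1)$ cells) can move it by only a constant amount. The crux would then be to show that producing each of the $N$ target forms requires $\Omega(N)$ instructions and that these costs cannot be amortized across the $N$ outputs. The non-amortization step rests on the rigidity of a random matrix over a large field: for typical $M$, any $o(N^2)$-instruction straight-line program computing $Mv$ would induce a non-trivial linear dependence among the partial sums reused across different rows, an event which holds with exponentially small probability in $p$. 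Summing the $\Omega(N)$ per-output bound over $N$ outputs then yields $t = \Omega(N^2)$.

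The main obstacle I anticipate is handling non-algebraic Word RAM instructions (bit manipulations, data-dependent branching) and the unrestricted polynomial preprocessing. I would address the first by restricting $M$ and $v$ to entries small enough that all arithmetic performed by the algorithm fits inside $[0, p)$ without overflow, so that the Word RAM computation can be simulated by an arithmetic circuit over $\mathbb{F}_p$ at constant-factor overhead, with branching absorbed into a bounded decision tree over the random choice of $M$. The unrestricted preprocessing is handled by the fact that the potential counts only cells touched after $v$ arrives: the preprocessed state can be arbitrarily informative about $M$, but it cannot substitute for the online work needed to combine $M$'s rows with the then-unknown $v$, which is precisely what forces the $\Omega(N^2)$ bound.
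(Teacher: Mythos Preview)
The paper does not contain a proof of this theorem. It is stated as a result imported from Frandsen, Hansen, and Miltersen \cite{Frandsen2001}; the paper only remarks that ``a closer inspection of their proof reveals that they show the following stronger statement'' and then uses the theorem as a black box in the reduction to the Hypervolume Indicator problem. There is therefore no proof in the paper to compare your proposal against.

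That said, a brief comment on the proposal itself. Your outline gestures at two distinct proof strategies and conflates them. The potential-function argument (each instruction changes $O(1)$ cells, so each output costs $\Omega(N)$) is in the right spirit for the algebraic-computation-tree part of Frandsen et al., but the step where you invoke matrix rigidity to rule out amortization across outputs is not how their argument works and is much shakier than you suggest: explicit rigidity bounds strong enough to force $\Omega(N^2)$ linear-circuit size are a famous open problem, and even for random matrices the known rigidity lower bounds do not reach the parameters you would need here. The actual Frandsen--Hansen--Miltersen argument for the Word RAM proceeds differently, via an encoding/information argument that counts how many distinct output vectors the query phase can produce from a fixed preprocessed state as $v$ varies, and compares this to the number of distinct values $Mv$ can take. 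Your handling of bit operations and branching is also too optimistic: restricting the input range does not by itself collapse arbitrary Word RAM computation to arithmetic over $\mathbb{F}_p$, since the algorithm may still use shifts, masks, and table lookups in ways that do not correspond to field operations. If you want to reconstruct a proof, the cleaner route is the counting argument rather than rigidity.
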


Our reduction will be from this form of the problem, and is inspired by a reduction by Chan \cite{Chan2010}, from the original form of dynamic Matrix–Vector Multiplication to dynamic Klee's Measure Problem with arbitrary axis-aligned rectangles in the plane.
We show the following.
\begin{theorem}\label{thm:hypervolume_indicator}
Let $D$ be an incremental data structure, in the Word RAM model with $w\geq \log n$, which maintains a set of axis-aligned boxes in $\R^3$ with non-negative integer vertex-coordinates which all have one vertex at the origin, together with the volume of their union. If the preprocessing time for $D$ is at most polynomial in $n$, then the amortized update time for $D$ is $\Omega(\sqrt{n})$.
\end{theorem}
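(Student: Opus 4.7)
I would prove the theorem by reducing the preprocessing-plus-online Matrix--Vector Multiplication problem of Theorem~\ref{thm:matrix_vector_mult} to incremental Hypervolume Indicator, in the spirit of Chan's~\cite{Chan2010} reduction to planar dynamic Klee's Measure but using the third dimension to overcome the restriction that every box must contain the origin. Concretely, I want to show that any incremental hypervolume data structure $D$ with polynomial preprocessing time and amortized update time $t_u$ yields a procedure that preprocesses an $N\times N$ matrix $M$ in polynomial time---by inserting $\Theta(N^2)$ origin-boxes into $D$---and then, given the vector $v$, computes $Mv$ using $O(N)$ further insertions together with $O(N)$ volume reads. Since the data structure explicitly maintains the volume of the union, each read costs $O(1)$, so the online phase runs in $O(N\cdot t_u)$ time. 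By Theorem~\ref{thm:matrix_vector_mult} this must be $\Omega(N^2)$, hence $t_u = \Omega(N) = \Omega(\sqrt{n})$ where $n=\Theta(N^2)$.

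For the encoding, I would assign each column $j$ of $M$ its own range of $x$-coordinates and each row $i$ its own range of $z$-coordinates, using the $y$-axis to carry numerical magnitudes. For every entry $M_{ij}$ the preprocessing inserts a constant number of origin-boxes, chosen so that the cell $(i,j)$ is ``pre-filled'' up to a $y$-height reflecting $M_{ij}$; the tunable $z$-extent of each origin-box is used to confine its effect to the appropriate row stratum despite the box being anchored at the origin. When $v$ arrives, for each column $j$ I would insert one vector-box whose $(x,z)$-footprint targets column $j$'s slab and whose $y$-extent is controlled by $v_j$, set up so that the marginal volume change caused by inserting that box is an affine function of $v_j$ and of column $j$ of $M$. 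A final pass of $O(N)$ volume reads, interleaved with $O(N)$ row-activation insertions, then extracts the entries $(Mv)_1,\ldots,(Mv)_N$ one by one as differences of consecutive volumes---or, if the word size $w$ is large enough to afford row-weights that are powers of a base $B>N\cdot 2^w$, a single volume read already suffices, with all entries read off as distinct digits of a base-$B$ expansion.

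The main obstacle is arranging the 3D layout so that the marginal volume contributions decouple in the intended way, despite the fact that every box shares the origin and the boxes therefore overlap heavily near it. I expect to handle this by (a) stratifying columns in $z$ so that the vector-box for column $j$ reaches only cells in column $j$'s slab, and (b) ``saturating'' every $(i,j)$-cell with $M_{ij}=0$ to a $y$-height exceeding the largest possible $v_j$ during preprocessing, so that such cells contribute nothing new when the vector-box is inserted. A secondary subtlety is the insertion-only nature of the data structure: once a box is added it cannot be removed, so the information needed to recover $Mv$ must survive all subsequent insertions, which is what the differences-of-volumes scheme (or the digit-encoding variant) guarantees. Verifying that the construction stays within the Word~RAM model with integer coordinates of magnitude polynomial in $n$ and $2^w$ is then routine.
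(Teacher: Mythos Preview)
Your high-level plan---reduce from the preprocess-then-multiply form of Matrix--Vector Multiplication (Theorem~\ref{thm:matrix_vector_mult}), encode $M$ with $\Theta(N^2)$ origin-boxes, insert $O(N)$ boxes for $v$, and recover the entries of $Mv$ from $O(N)$ successive volume differences after ``row-activation'' insertions---is precisely the paper's strategy, and your complexity accounting is correct.

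The encoding you sketch, however, has a gap at the decisive point. You place both the matrix datum $M_{ij}$ and the vector datum $v_j$ on the \emph{same} axis: cell $(i,j)$ is pre-filled to a $y$-height reflecting $M_{ij}$, and the vector-box for column $j$ has $y$-extent governed by $v_j$. With both magnitudes on one axis, the marginal volume in cell $(i,j)$ is of the form $\max(0,\,v_j-h_{ij})$ times the footprint, a \emph{difference}; but to recover $(Mv)_k=\sum_j M_{kj}v_j$ the volume must somewhere contain the \emph{products} $M_{ij}v_j$. Your saturation trick handles the boolean case $M_{ij}\in\{0,1\}$ (where the product degenerates to $0$ or $v_j$), but Theorem~\ref{thm:matrix_vector_mult} needs integer entries up to $2^w-1$. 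The digit-packing variant you mention would require coordinates of bit-length $\Theta(Nw)$, so neither the coordinates nor the volume fit in $O(1)$ words and the $O(1)$-time read is lost. (There is also a small inconsistency in your axis assignment: you first put columns in $x$ and rows in $z$, then speak of ``stratifying columns in $z$.'')

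The paper gets the products by putting $M_{ij}$ and $v_j$ on \emph{orthogonal} axes and using $z$ not to separate cells but as a total order. Concretely, with $W=2^w-1$, each $M_{ij}$ is encoded by a box with opposite vertex at $(jW,\ (i-1)W+M_{ij},\ N(N+1)-i-jN+1)$, accompanied by a ``floor'' box at $(jW,\ iW,\ N(N+1)-i-jN)$; each $v_j$ is encoded by a box with opposite vertex at $((j-1)W+v_j,\ NW,\ N(N+1)-jN)$. Since every box is $[0,z_{\text{box}}]$ in $z$ and the values $z_{\text{box}}$ form a strict staircase, a $v_j\times M_{ij}$ rectangle shows up in exactly one $z$-slice. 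Row $k$ of $Mv$ is then read by inserting a box with opposite vertex at $(NW,\,kW,\,N(N+1))$ and taking the volume difference $c_k-c_{k-1}$, which the paper checks equals $\sum_j v_j M_{k,j}$ plus terms precomputable from $\sum_j M_{k,j}$ and $\sum_j v_j$. Replace your same-axis encoding with this orthogonal one and the rest of your outline goes through.
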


Bringmann \cite{Bringmann2013} showed that the Hypervolume Indicator Problem reduces efficiently to Klee's Measure Problem for unit (hyper)cubes in the same number of dimensions (and this reduction also works in the dynamic setting we consider here). This former problem is thus the easiest among the most commonly considered special cases of Klee's Measure Problem and we immediately get the same lower bound for these other variants.

\begin{proof}
Consider a matrix $M$ of size $N\times N$ with non-negative integer entries bounded by $W = 2^w-1$. We start by initializing the data structure in a preprocessing phase with $\Theta(N^2)$ boxes. Because we always have one vertex at the origin, we only specify the coordinate of the opposing vertex in what follows.
\begin{itemize}
    \item For all $1 \leq i,j \leq N$, put a box with vertex opposing the origin at coordinates $(jW, iW, N(N+1)- i - jN)$.
    \item For all $1 \leq i,j \leq N$, put a box corresponding to $M_{i,j}$ with vertex opposing the origin at coordinates $(jW, (i-1)W+M_{i,j}, N(N+1)-i - jN+1)$.
\end{itemize}

We also compute the sums $\sum_{j=1}^N M_{k,j}$ for all $1\leq k \leq N$ in $O(N^2)$ time.

When given a vector $v$ of size $N$, we insert a box corresponding to $v_j$ with vertex opposing the origin at coordinates $((j-1)W+v_j, N\cdot W, N(N+1) - j\cdot N)$, for all $1 \leq j \leq N$. This costs $N$ updates. We also compute $\sum_{j=1}^N v_j$ in $O(N)$ time.

The volume $c_0$ of the union of the boxes at this point is 
\begin{align*}
    \sum_{j=1}^N\sum_{i=1}^N (N(N+1)- i - jN)\cdot W^2 + \sum_{j=1}^N\sum_{i=1}^N M_{i,j}\cdot W 
    + \sum_{j=1}^N v_j \sum_{i=1}^N i\cdot W - \sum_{j=1}^N v_j\sum_{i=1}^N M_{i,j}.
\end{align*}

Now, to compute the first entry of $Mv$, insert a box whose vertex opposing the origin lies at coordinate $(NW,W, N(N+1))$ and compute the new total volume $c_1$. This volume is exactly 
\begin{align*}
    c_1 = N^2(N+1)W^2 
    &+ \sum_{j=1}^N\sum_{i=2}^N (N(N+1)- i - jN)\cdot W^2 
    + \sum_{j=1}^N\sum_{i=2}^N M_{i,j}\cdot W \\
    &+ \sum_{j=1}^N v_j \sum_{i=2}^N i\cdot W
    - \sum_{j=1}^N v_j\sum_{i=2}^N M_{i,j}.
\end{align*}

The difference is thus exactly
\begin{align*}
    c_1-c_0 =& 
    N^2(N+1)W^2 
    - \sum_{j=1}^N (N(N+1)- 1 - jN)\cdot W^2 
    - \sum_{j=1}^N M_{1,j}\cdot W\\
    &- \sum_{j=1}^N v_j \cdot W 
    + \sum_{j=1}^N v_j M_{1,j} \\
    =&
    \left(1 +\frac{N(N+1)}{2}\right)NW^2 
    - W\sum_{j=1}^N M_{1,j} 
    - W\sum_{j=1}^N v_j
    + \sum_{j=1}^N v_j M_{1,j}.
\end{align*}

The sums $\sum_{j=1}^N M_{1,j}$ and $\sum_{j=1}^N v_j$ are known. Thus, knowing $c_1$ and $c_0$ we can compute $\sum_{j=1}^N v_j M_{1,j}$ in constant additional time.

To compute the second entry of $Mv$, insert a box whose vertex opposing the origin lies at coordinate $(NW,2W, N(N+1))$, compute the new total volume $c_2$, and proceed similarly. In general, to compute the $k$'th entry of $Mv$, insert a box whose vertex opposing the origin lies at coordinate $(NW,kW, N(N+1))$, compute the new total volume $c_k$.
The difference with the previous volume is then 
\begin{align*}
    c_k-c_{k-1} =
    \left(k +\frac{N(N+1)}{2}\right)NW^2 
    - W\sum_{j=1}^N M_{k,j} 
    - kW\sum_{j=1}^N v_j
    + \sum_{j=1}^N v_j M_{k,j}.
\end{align*}
Knowing this difference we can compute $\sum_{j=1}^N v_j M_{k,j}$ in constant additional time.

Thus, computing $v$ can be done in $O(t_u\cdot N + N)$ time, where $t_u$ is the amortized time to perform an update. Theorem \ref{thm:matrix_vector_mult} then implies $t_u = \Omega(N)$. Written in terms of $n$, the total number of boxes stored, this is $t_u = \Omega(\sqrt{n})$.
\end{proof}
Note that the bound remains valid in the case of worst-case runtime for data structures with no preprocessing operation, by replacing the preprocessing step with insertions. This does not work for amortized runtime however, as the updates performed to compute $Mv$ could be high cost updates amortized against low cost updates performed in the ``preprocessing'' phase.

In a previous version of this paper, we proved the bound only for worst-case time in the fully-dynamic setting. After that version appeared on arXiv, Jin and Xu \cite{Jin2022} independently gave lower bounds for the dynamic Klee's Measure Problem with unit (hyper)cubes in odd dimension, conditioned on a generalization of the OMv conjecture. In dimension $3$, they obtain an $\Omega(n^{1/2-o(1)})$ lower bound on the amortized update time in the semi-online setting. Contrast this with our current bound of $\Omega(n^{1/2})$, which is unconditional and applies already to the easier Hypervolume Indicator problem in the incremental setting. It would be interesting to know if variants of Frandsen et al.'s result could be used to make the lower bounds in higher dimension of Jin and Xu unconditional.

On the positive side, Chan \cite{Chan2003} gives a data structure for this problem in the incremental setting (or the more general semi-online setting) with $O(n^{1+\epsilon})$ preprocessing time and $O(\sqrt{n}\ppolylog n)$ worst-case time per update (even for the more general case of axis-aligned cubes). In the fully-dynamic setting, another paper by the same author \cite{Chan2020} gives a data structure with $O(n\cdot\ppolylog n)$ preprocessing time and $O(n^{2/3}\ppolylog n)$ amortized update time.

\section{Further results}\label{sec:further_results}

The same approach as for counting maximal or extremal points can be used for many other dynamic problems. For the sake of conciseness, we give only the Scenarios in Theorem \ref{thm:general_3sum} which fit the data structures, from which the reader can easily find the implied lower bounds.

\subsection{Measure problems for axis-aligned squares and rectangles} \label{subsec:rectangles}

\subparagraph*{Klee's Measure Problem with Squares:} Maintain a dynamic set of $O(n)$ axis-aligned squares $S$ and support queries that return the area of their union, $\cup S$.

\begin{lemma}\label{lemma:kmp}
For any fully-dynamic data structure $D$ for Klee's Measure Problem with Squares, Scenario 4 in Theorem \ref{thm:general_3sum} applies (with $t_{uq} = t_u + t_q$).
\end{lemma}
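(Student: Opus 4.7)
The plan is to realize Scenario~4 of Theorem~\ref{thm:general_3sum} directly, by encoding an instance $\mathcal{F} = \{F_1, \dots, F_k\}$ of subsets of $\{1, \dots, m\}$ as a set of axis-aligned squares whose union measure encodes the desired intersection queries. Since $D$ is fully-dynamic, it supports undo automatically and we can freely delete what we pre-insert.

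First, in Step~1 I will place an \emph{element square} $s_{i,j}$ of unit side for each pair $(i,j)$ with $j \in F_i$, at horizontal position $jA$ with $A = \Theta(k)$ chosen so that distinct columns are horizontally well-separated, and I will space rows vertically so as to leave a small gap between consecutive rows. For each column $j \in \{1, \ldots, m\}$, I additionally pre-insert a single \emph{column cover} $C_j$: an axis-aligned square of side $\Theta(k)$ which completely contains all element squares of column $j$ and overlaps no other column. The total number of inserted squares is $s_\mathcal{F} + m = O(s_\mathcal{F})$ (w.l.o.g.\ every column is non-empty), matching the Scenario~4 budget. In Step~2, upon receiving $J \subseteq \{1, \ldots, m\}$, I delete $C_j$ for each $j \in J$, which uses exactly $|J|$ updates as required; the surviving union then has measure equal to $(m-|J|)\cdot\mathrm{area}(C_j) + \sum_{j \in J} |\{i : j \in F_i\}|$, since the undeleted column covers are pairwise disjoint and the now-exposed element squares are disjoint from everything else.

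In Step~3, given $i'$, I insert two large ``sandwich'' squares $T_1$ and $T_2$ of side $B = \Theta(mk)$, positioned so that together they cover everything to the left of, below, above, and to the right of the horizontal strip containing row $i'$, while leaving that unit-height strip untouched. The vertical gaps between rows ensure that $T_1$ and $T_2$ can be aligned to avoid row $i'$ while swallowing all other rows. After these two insertions, the total measure is $2B^2$ plus the strip contributions, which consist of (i) the horizontal slices of the surviving column covers, summing to a known constant $(m-|J|)\cdot k$, and (ii) the element squares $s_{i',j}$ with $j \in J$, contributing $|J \cap F_{i'}|$. A single measure query to $D$ therefore yields $|J \cap F_{i'}|$ after subtracting the two known quantities, and is positive iff $J \cap F_{i'} \neq \emptyset$. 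Step~3 thus uses $O(1)$ updates and queries.

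The main obstacle is the geometric bookkeeping: choosing $A$, the row spacing, and $B$ with enough slack so that (a) $C_j$ is genuinely a square that covers its column and no other column or row's element squares, (b) $T_1 \cup T_2$ precisely absorbs everything outside the row-$i'$ strip without clipping row $i'$ itself, and (c) all coordinates remain polynomially bounded integers representable in the Word RAM model. None of this changes the asymptotic counts of squares or updates, so once the geometry is settled Scenario~4 applies with $n = O(s_\mathcal{F})$, $u_J = O(|J|)$, and $t_{uq} = t_u + t_q$, concluding the proof.
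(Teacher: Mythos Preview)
Your proposal is correct and follows essentially the same approach as the paper's proof: a grid of unit ``element'' squares $s_{i,j}$ for $j\in F_i$, pre-inserted column-covering squares of side $\Theta(k)$ that are deleted for each $j\in J$ in Step~2, and two large sandwich squares in Step~3 that isolate row~$i'$ so the measure reveals whether $J\cap F_{i'}\neq\emptyset$. The only differences are cosmetic (you leave explicit gaps between rows while the paper uses contiguous integer heights, and your constants are stated asymptotically rather than exactly), and they do not affect the argument.
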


\begin{figure}
    \centering
    \includegraphics[scale=0.4]{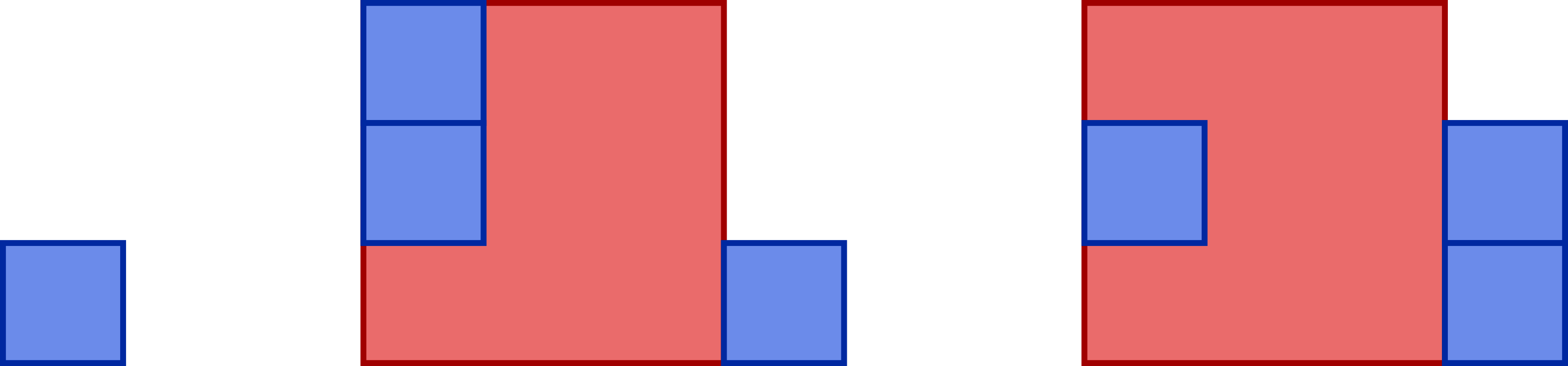}
    \caption{Illustration of a set of rectangles obtained after Step 2 in the proof of Lemma \ref{lemma:kmp}. The illustrated instance has $m=5$, $k=3$, $\mathcal{F} = \{\{1,3,5\}, \{2,4,5\}, \{2\} \}$ and $J = \{1,3,5\}$. The large red squares correspond to $b_{\bullet}$.}
    \label{fig:kmp}
\end{figure}

\begin{proof}
Let $\mathcal{F}=\{F_1, \ldots, F_k\}$ be a family of $k$ subsets of $\{1,2,\ldots m\}$. Suppose without loss of generality that no set in $\mathcal{F}$ is empty and every element in $\{1,2,\ldots m\}$ appears in at least one set.

We perform Step 1 by initializing $D$ with all the following squares.
\begin{itemize}
    \item For each $1\leq i\leq k$ and $1\leq j \leq m$ for which $j \in F_i$, a unit square whose lower-left corner has coordinates $(kj,i)$.
    \item For each $1\leq j \leq m$, a square $b_j$ of side-length $k$ whose lower-left corner has coordinates $(kj,1)$.
\end{itemize}

The total number of squares is $O(s_\mathcal{F})$. It is easy to see that at this point, the area of the union of all squares is $k^2m$.

To perform Step 2 when given $J \subset \{1,2,\ldots m\}$, we delete the squares $b_{j}$ for all $j\in J$. This requires $O(|J|)$ updates on $D$. See Figure \ref{fig:kmp} for an illustration.

In Step 3, when given an index $1\leq i'\leq k$, we insert two squares of side-length $km$ whose lower-left corners lie at coordinates $(0,i'+1)$ and $(0,i'-km)$ respectively. Now the area of the union of all squares in $S$ is strictly larger than $2(km)^2 + (m-|J|)k$ if and only if there is some square which is not contained in the union of these two new squares and the squares $\{b_j|j\not\in J\}$. This is the case if and only if there is some $j$ such that $j\in J$ and $j\in F_{i'}$ (i.e.\  $J\cap F_{i'} \neq \emptyset$). Thus, we can answer such a Step 3 query after only two more updates to the data structure.
\end{proof}

A straight-forward adaptation of this proof yields the following.
\begin{lemma}
For any incremental data structure $D$ for Klee's Measure Problem with Squares with per-operation runtime guarantees, Scenario 3 in Theorem \ref{thm:general_3sum} applies (with $t_{uq} = t_u + t_q$).
\end{lemma}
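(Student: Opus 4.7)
The plan is to mirror the reduction from the proof of the previous lemma, but to move the $b_j$ squares out of the preprocessing phase and insert only those that are needed, so that no deletion is ever required. Concretely, I would fix a family $\mathcal{F}=\{F_1,\dots,F_k\}$ of subsets of $\{1,\dots,m\}$ and, in Step 1, insert only the unit squares with lower-left corners $(kj,i)$ for $j\in F_i$. The large squares $b_j$ of side-length $k$, which in the fully-dynamic reduction were preinserted and then deleted in Step 2, are simply omitted here. This preserves $n=O(s_\mathcal{F})$, as needed for Scenario~3.

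In Step 2, after receiving $J\subset\{1,\dots,m\}$, I would insert a square $b_j$ of side-length $k$ with lower-left corner $(kj,1)$ for every $j\notin J$. This yields $u_J=O(m)$ updates, in contrast to the $O(|J|)$ of the fully-dynamic version, which is exactly the swap that moves us from Scenario~4 to Scenario~3. Crucially, only insertions are performed, so the construction remains valid in the incremental setting. After Step~2, the union of the inserted squares covers every unit square whose column index lies outside $J$, while the unit squares in columns $j\in J$ remain exposed, exactly as in the fully-dynamic proof.

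Step 3 can then be carried out verbatim: given $i'$, insert two squares of side-length $km$ with lower-left corners $(0,i'+1)$ and $(0,i'-km)$, and read the area of the union. The same comparison as before decides whether $J\cap F_{i'}\neq\emptyset$, and only $O(1)$ further updates and queries are used. Since the lemma assumes per-operation runtime guarantees, the standard augmentation recalled earlier in the paper supplies an undo operation at no asymptotic cost, so the structure fits the formal requirements of the Multiphase problem.

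I do not foresee a technical obstacle: the main (and essentially only) subtlety is the accounting showing that the modified reduction indeed matches Scenario~3 rather than Scenario~4, i.e.\ verifying that $n=O(s_\mathcal{F})$ is preserved while $u_J$ becomes $O(m)$, and checking that per-operation guarantees are what is needed to legitimately invoke the undo augmentation. Once these bookkeeping points are in place, applying Theorem~\ref{thm:general_3sum} (Scenario~3) yields the claimed bound.
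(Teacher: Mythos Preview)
Your proposal is correct and is precisely the ``straight-forward adaptation'' the paper alludes to: omit the $b_\bullet$ squares in Step~1 and instead insert $b_j$ for $j\notin J$ in Step~2, trading $u_J=O(|J|)$ for $u_J=O(m)$ while keeping $n=O(s_\mathcal{F})$ (under the same WLOG assumption that every element appears in some $F_i$, so that $m\le s_\mathcal{F}$). The paper gives no further detail beyond this, so your write-up already contains more than the paper does.
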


As previously stated, Jin and Xu \cite{Jin2022} recently and independently gave lower bounds for the dynamic Klee's Measure Problem with unit (hyper)cubes in odd dimension, conditioned on a generalization of the OMv conjecture. 

Overmars and Yap \cite{Overmars1988} show how the more general problem for rectangles can be solved in the fully-dynamic setting with amortized update time $O(\sqrt{n}\log n)$, assuming that the set of all rectangles which will be inserted or deleted is of size $O(n)$ and that the set of their vertices is known \emph{a priori}. Chan \cite{Chan2010} improved the amortized update time to $\sqrt{n}2^{O(\log^*n)}$. Yıldız et al.\ \cite{Yildiz2011-prob} showed how to achieve $O(\sqrt{n}\log n)$ amortized update time without these additional assumptions (in fact, they show this for a more general problem where each rectangle has an associated probability of being present and we want to compute the expected volume of their union).
The bounds we obtain in this setting from the OMv conjecture are thus tight up to a factor of $n^{o(1)}$.

For the fully-dynamic version of Klee's measure problem and arbitrary axis-aligned rectangles, Chan \cite{Chan2010} also gives an unconditional $\Omega(\sqrt{n})$ lower bound on the worst-case update time by a reduction from the Dynamic Matrix–Vector Multiplication Problem \cite{Frandsen2001}. While this bound is certainly more powerful in the sense that it is unconditional, it does not apply when restricted to squares, nor for amortized runtime or partially-dynamic data structures. We could however adapt this lower bound for amortized runtime in the incremental setting by an argument similar to the one used earlier in the paper for the Hypervolume Indicator Problem.

By a similar proof we also get a lower bound for a discrete version of the problem.
\subparagraph*{Discrete version of Klee's Measure Problem with Squares:}
Maintain a dynamic set of $O(n)$ points $P$ and $O(n)$ axis-aligned squares $S$ in the plane and support queries that return the number of points in $P$ covered by squares in $S$.

\begin{lemma}
For any fully-dynamic data structure $D$ for the discrete version of Klee's Measure Problem with Squares, Scenario 4 in Theorem \ref{thm:general_3sum} applies (with $t_{uq} = t_u + t_q$).
\end{lemma}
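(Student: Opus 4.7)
The plan is to mirror the proof of Lemma \ref{lemma:kmp}, replacing the small unit squares representing the pairs $(i,j)$ with $j\in F_i$ by \emph{points} of $P$, while keeping the large covering squares $b_j$ (and the two Step 3 squares) as members of $S$. The discrete count of covered points then plays the role that the area of the union played in the continuous version, and everything else in the reduction carries over essentially unchanged.

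Concretely, in Step 1 I would initialize $D$ with, for each $1\leq i\leq k$ and each $j\in F_i$, a point $p_{i,j}\in P$ at coordinates $(kj, i)$, together with, for each $1\leq j\leq m$, a square $b_j\in S$ of side-length $k$ whose lower-left corner is $(kj, 1)$. This gives $n=\Theta(s_\mathcal{F})$ objects, and by construction every point $p_{i,j}$ is initially covered by $b_j$. In Step 2, given $J$, I delete $b_j$ for every $j\in J$, using $|J|$ updates; afterwards the uncovered points are exactly the $p_{i,j}$ with $j\in J$, while every $p_{i,j}$ with $j\notin J$ is still covered by the surviving $b_j$.

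In Step 3, given $i'$, I insert two squares of side-length $km$ whose lower-left corners lie at $(0, i'+1)$ and $(0, i'-km)$; together these cover every point of $P$ whose row index differs from $i'$, and nothing in row $i'$. One query then returns the number $c$ of covered points. The only points that can still be uncovered are those $p_{i',j}$ with $j\in J$, and such a point exists iff $J\cap F_{i'}\neq\emptyset$. Since $s_\mathcal{F}=|P|$ is known from the preprocessing, comparing $c$ with $s_\mathcal{F}$ answers the Multiphase query after $O(1)$ updates and a single query.

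The parameters fit Scenario 4 of Theorem \ref{thm:general_3sum}: $n=O(s_\mathcal{F})$ and $u_J=O(|J|)$, with $t_{uq}=t_u+t_q$ because Step 3 performs both updates and a query. I do not expect any real obstacle here; the only thing worth double-checking is that the two Step 3 squares really miss row $i'$ (which is why their corners are placed at $(0,i'+1)$ and $(0,i'-km)$ rather than flush against row $i'$), and that all coordinates remain polynomially bounded integers so the construction is valid on the Word RAM.
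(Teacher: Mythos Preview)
Your approach is essentially the same as the paper's: points encode the incidences, the squares $b_j$ cover whole columns, Step~2 deletes $b_j$ for $j\in J$, and Step~3 adds two large squares to cover all rows but $i'$ before comparing the covered-count with $|P|$. That is exactly what the paper does.

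One small technical point to fix: with your coordinates the constructions touch on boundaries. The square $b_j=[kj,kj+k]\times[1,k+1]$ also contains $p_{i,j+1}=(k(j+1),i)$ on its right edge, so after Step~2 a point $p_{i,j}$ is uncovered only when \emph{both} $j$ and $j-1$ lie in $J$, which breaks the argument. Likewise, the second Step~3 square with lower-left corner $(0,i'-km)$ has top edge at $y=i'$ and so (if closed) covers row $i'$. The paper avoids this by spacing columns at distance $k+2$, shifting points to $((k+2)j+1,i+1)$, and using half-integer offsets $(0,i'+\tfrac12)$ and $(0,i'-(k+2)m-\tfrac12)$ for the Step~3 squares; any similar perturbation (e.g.\ spacing columns at $k+1$ and nudging the big squares by $\tfrac12$) repairs your version.
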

\begin{proof}
Let $\mathcal{F}=\{F_1, \ldots, F_k\}$ be a family of $k$ subsets of $\{1,2,\ldots m\}$. Suppose without loss of generality that no set in $\mathcal{F}$ is empty and every element in $\{1,2,\ldots m\}$ appears in at least one set.

We perform Step 1 by initializing $D$ with the following points.
\begin{itemize}
    \item For each $1\leq i\leq k$ and $1\leq j \leq m$ for which $j \in F_i$, we put a point $p_{i,j}$ at coordinates $((k+2)j+1,i+1)$.
    \item For each $1\leq j \leq m$, a square $b_j$ of side-length $k+2$ whose lower-left corner has coordinates $((k+2)j,1)$.
\end{itemize}

The total number of points and squares is $O(s_\mathcal{F})$. 

To perform Step 2 when given $J \subset \{1,2,\ldots m\}$, we delete the square $b_j$ for all $j \not\in J$. This requires $O(|J|)$ updates on $D$. Now the uncovered points are exactly the points $p_{i,j}$ such that $j\in J$.

In Step 3, when given an index $1\leq i'\leq k$, we insert two squares of side-length $(k+2)m$ whose lower-left corners lie at coordinates $(0,i'+ 1/2)$ and $(0,i'-(k+2)m-1/2)$ respectively.
Now there is an uncovered point if and only if there is some point which is not covered by these two new squares or the squares $\{b_j|j\not\in J\}$. This is the case if and only if there is some $j$ such that $j\in J$ and $j\in F_{i'}$ (i.e.\  $J\cap F_{i'} \neq \emptyset$). We can easily test this by counting the number of covered points and comparing it to the total number of points. Thus, we can answer such a Step 3 query after only two more updates to the data structure.
\end{proof}

A straight-forward adaptation of this proof yields the following.
\begin{lemma}
For any incremental data structure $D$ for the Discrete version of Klee's Measure Problem with Squares with per-operation runtime guarantees, Scenario 3 in Theorem \ref{thm:general_3sum} applies (with $t_{uq} = t_u + t_q$).
\end{lemma}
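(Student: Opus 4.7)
The plan is to adapt the construction used in the fully-dynamic reduction so that no deletions are required. In the fully-dynamic proof the squares $b_j$ were all inserted during preprocessing and then removed for $j\in J$, which gave $u_J = O(|J|)$ updates and hence fit Scenario 4. In the incremental setting with per-operation guarantees we cannot delete, so instead we omit the $b_j$ from Step~1 entirely and insert only the squares $b_j$ for $j\notin J$ during Step~2. This flips the polarity of the construction but preserves the essential property: after Step~2, the set of uncovered points is exactly $\{p_{i,j} : j\in J\}$. The cost is that Step~2 now performs $O(m-|J|)=O(m)$ insertions rather than $O(|J|)$ deletions, which is precisely what distinguishes Scenario 3 from Scenario 4.

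Concretely, in Step~1 we insert only the points $p_{i,j}$ at coordinates $((k+2)j+1,\,i+1)$ for every $1\le i\le k$ and every $j\in F_i$, so that $n = O(s_\mathcal{F})$. In Step~2, upon reading $J\subset\{1,\ldots,m\}$, for each $j\notin J$ we insert the square $b_j$ of side length $k+2$ whose lower-left corner lies at $((k+2)j,\,1)$; by construction this covers precisely the $p_{i,j}$ with $j\notin J$, using $O(m)$ insertions. Step~3 is then identical to the fully-dynamic proof: given $i'$, insert the two large squares of side length $(k+2)m$ whose lower-left corners sit at $(0,\,i'+1/2)$ and $(0,\,i'-(k+2)m-1/2)$ and ask the data structure for the number of covered points. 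Comparing this count with the total number of points decides whether an uncovered $p_{i',j}$ with $j\in J$ exists, which is equivalent to $J\cap F_{i'}\neq \emptyset$, after a constant number of updates and a single query.

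The main thing to verify is that the argument remains sound for a data structure with only per-operation (non-amortized) runtime guarantees, rather than one that natively supports deletions: by the lemma recalled in Section~\ref{sec:general_reductions}, any such structure can be augmented to support undo with the same per-operation cost, so the $O(1)$ operations of Step~3 can be rolled back and the same structure can serve multiple query indices $i'$ as required by the Multiphase framework. I do not expect any serious obstacle here; the only delicate point is making sure that the two large squares inserted in Step~3 do not themselves cover any $p_{i',j}$ with $j \in J$, which is an immediate consequence of the spacing $(k+2)$ between consecutive columns of points and the choice of offsets $\pm 1/2$ in the $y$-direction. Since $n = O(s_\mathcal{F})$ and $u_J = O(m)$, the reduction fits exactly the hypotheses of Scenario 3 of Theorem~\ref{thm:general_3sum}, completing the proof.
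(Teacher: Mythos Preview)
Your proposal is correct and is exactly the ``straight-forward adaptation'' the paper has in mind: omit the covering squares $b_j$ from Step~1, insert $b_j$ only for $j\notin J$ in Step~2 (giving $u_J=O(m)$ instead of $O(|J|)$), and keep Step~3 unchanged, relying on the undo lemma for per-operation structures. This is the same pattern the paper spells out explicitly for Counting Maximal Points and for Square Range Marking, so there is nothing to add.
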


Yıldız et al.\ \cite{Yildiz2011} give a data structure for this problem in the fully-dynamic setting with $O(\sqrt{n})$ worst-case time per update, even when considering arbitrary axis-aligned rectangles instead of squares. This matches the lower bound obtained from the OMv conjecture up to a $n^{o(1)}$ term.

\subparagraph*{Depth Problem with Squares}
Maintain a dynamic set of $O(n)$ axis-aligned squares $S$ in the plane and support queries that return their depth.

\medskip
Recall that given a set $S$ of subsets of $\R^d$, the depth of a point $p\in \R^d$ with respect to $S$ is the number of sets of $S$ which contain $p$. The depth of $S$ is the maximum depth with respect to $S$ over all points in $\R^d$.

\begin{lemma}
For any fully-dynamic (or incremental with per-operation runtime guarantees) data structure $D$ for the Depth Problem with Squares, Scenario 4 in Theorem \ref{thm:general_3sum} applies (with $t_{uq} = t_u + t_q$).
\end{lemma}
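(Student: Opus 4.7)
The plan is to fit a data structure $D$ for the Depth Problem with Squares into Scenario~4 of Theorem~\ref{thm:general_3sum}, via a reduction that uses only insertions so that it also applies in the incremental setting with per-operation runtime guarantees. Assume without loss of generality that no $F_i$ is empty and every $j \in \{1,\ldots,m\}$ appears in some $F_i$, so that $k, m \leq s_\mathcal{F}$ and thus $n = \Theta(s_\mathcal{F})$.

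In Step~1, I would place, for every $(i,j)$ with $j \in F_i$, a small unit square $s_{i,j}$ centred at the grid point $((k+1)j,\, i)$. The horizontal spacing $(k+1)$ between consecutive columns is chosen to be strictly larger than the vertical extent $k$ of a full column; the total number of squares is $n = \Theta(s_\mathcal{F})$. In Step~2, for each $j \in J$ I would insert a single ``column square'' of side length $k$ positioned to cover exactly the $k$ squares $s_{1,j},\ldots,s_{k,j}$ of column~$j$; the wider-than-tall spacing guarantees that this column square touches no other column. This makes $|J|$ insertions, matching $u_J = O(|J|)$. After Step~2, the depth at each $s_{i,j}$ equals $1 + [j \in J]$.

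In Step~3, given $i'$, the key geometric idea is to isolate row~$i'$ as the intersection of two large squares rather than by a single square (the latter being impossible, since any square wide enough to span all $m$ columns is also tall enough to cover many rows). I would insert a square $A$ whose bottom edge lies at $y = i' - \tfrac12$ and a square $B$ whose top edge lies at $y = i' + \tfrac12$, both wide enough to span the entire horizontal extent of the configuration. Then $A \cap B$ is a thin horizontal strip of height $1$ containing exactly the row of $s_{i',\bullet}$ squares: every point at row $i'$ gains $2$ in depth, while every point at any other row gains only $1$. Consequently the depth at $s_{i,j}$ becomes $1 + [j \in J] + 1 + [i = i']$, and the global maximum depth reaches $4$ if and only if there exists $j \in J \cap F_{i'}$; otherwise it is at most $3$. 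A single depth query therefore decides the Multiphase query, using a total of $O(1)$ operations in Step~3.

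The main obstacle is exactly the one sidestepped above: an axis-aligned square cannot be simultaneously thin in $y$ (to pick out a single row) and wide in $x$ (to reach all columns), so the naive ``insert a row selector'' approach fails. The two-square intersection trick replaces this impossible object by a pair of large squares whose overlap is a horizontal strip, which the depth functional detects via a $+2$ boost on the target row versus a $+1$ boost elsewhere. The remaining verification is routine bookkeeping: checking that no stray position in the arrangement ever reaches depth $4$ (only the $s_{i',j}$ with $j\in J\cap F_{i'}$ can, because they accumulate contributions from themselves, the column square, $A$, and $B$), and then invoking Theorem~\ref{thm:general_3sum}. Since Step~2 only ever inserts, the same construction carries over to the incremental setting.
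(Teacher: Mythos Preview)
Your proposal is correct and is essentially the same construction as in the paper: encode $\mathcal{F}$ by a grid of unit squares, insert a column square for each $j\in J$ in Step~2, and in Step~3 insert two large squares whose overlap forms a horizontal strip isolating row~$i'$, then test whether the depth equals~$4$. The only differences are cosmetic (centred squares and spacing $k+1$ versus corner-positioned squares and spacing $k$).
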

\begin{proof}
Let $\mathcal{F}=\{F_1, \ldots, F_k\}$ be a family of $k$ subsets of $\{1,2,\ldots m\}$. Assume without loss of generality that no set in $\mathcal{F}$ is empty and every element in $\{1,2,\ldots m\}$ appears in at least one set.

We perform Step 1 by initializing $D$ with all the following squares.
\begin{itemize}
    \item For each $1\leq i\leq k$ and $1\leq j \leq m$ for which $j \in F_i$, a unit square $s_{i,j}$ whose lower-left corner has coordinates $(kj,i)$.
\end{itemize}

The total number of squares is $O(s_\mathcal{F})$. It is easy to see that at this point, the area of the union of all squares is $k^2m$.

To perform Step 2 when given $J \subset \{1,2,\ldots m\}$, we put a square $b_j$ of side-length $k$ whose lower-left corner has coordinates $(kj,1)$ for all $j\in J$. This requires $O(|J|)$ updates on $D$.

In Step 3, when given an index $1\leq i'\leq k$, we insert two squares of side-length $km$ whose lower-left corners lie at coordinates $(k,i')$ and $(k,i'+1-km)$ respectively. Now the depth of $S$ is $4$ if and only if there is some $j$ such that $j\in J$ and $j\in F_{i'}$ (i.e.\  $J\cap F_{i'} \neq \emptyset$). Thus, we can answer an intersection query after only two more updates to the data structure in Step 3.
\end{proof}

For this specific problem, we can also get a lower bound from the OMv conjecture in the case of incremental data structure with amortized runtime bounds, by simulating the deletion of squares by inserting more squares.
\begin{lemma}
Let $D$ be an incremental data structure for the Depth Problem with Squares with $t_u$ expected amortized update time.
If the OMv problem on matrices of size $N\times N$ requires expected time $\Omega(N^{3-\delta})$, then
\[t_u + t_{q} = \Omega\left(n^{(1-\delta)/2}\right).\]
In particular, if the OMv conjecture holds,
\[t_u + t_q = \Omega\left( n^{1/2-o(1)}\right).\]
\end{lemma}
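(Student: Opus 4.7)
The plan is to reduce OMv on $N\times N$ matrices to the incremental Depth Problem with Squares using $n = O(N^2)$ squares in total, and answer each of the $N^2$ output bits $(Mv_\ell)_i$ (for $\ell,i \in [N]$) with a constant number of amortized operations on $D$. If this succeeds, the total work over the entire OMv computation is $O(N^2 \cdot (t_u+t_q)) + O(\mathrm{poly}(N))$; combined with the hypothesized OMv lower bound $\Omega(N^{3-\delta})$ this forces $t_u + t_q = \Omega(N^{1-\delta})$, which is $\Omega(n^{(1-\delta)/2})$ since $n = \Theta(N^2)$.

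I would follow the template of Theorem~\ref{thm:general_omv} and adapt the fully-dynamic reduction of the previous lemma. The crucial observation to exploit is that Steps~1--3 of that reduction already use only insertions, so the fully-dynamic construction can in principle be ``replayed'' once per OMv vector. During preprocessing I would insert the $O(|M|) \leq O(N^2)$ unit squares $s_{i,j}$ encoding the ones of $M$, placing them on a widely spaced grid so that the different columns cannot cross-contaminate. For each online vector $v_\ell$ I would then insert $O(N)$ vector-encoding squares (playing the role of the $b_j$'s) into a fresh y-strip that is disjoint from the strips used in phases $1,\ldots,\ell-1$, and for each row $i$ I would insert a constant number of ``bridge'' squares (the analogue of the two large row-squares) and issue a max-depth query.

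The main obstacle is simulating deletions in the incremental setting: since depth is monotone under insertion, the depth reported in phase $\ell$ also includes contributions from the squares inserted in phases $1,\ldots,\ell-1$. I would handle this by two ingredients acting together. First, spatial separation: each phase's vector-squares and query-bridges live in a dedicated y-strip, so earlier phases' depth peaks cannot be misattributed to phase $\ell$'s cells. Second, a ``depth baseline'' bookkeeping: before the bridge insertions for the current row-query, I record the current max depth $D^{\text{before}}$, and after inserting the bridge I read $D^{\text{after}}$; the bit of interest is then read off by comparing $D^{\text{after}}-D^{\text{before}}$ with a precomputable increment that depends only on $\ell$, $i$ and the (externally tracked) count of past insertions. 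Because $M$ is shared across phases and cannot be replicated ($N$ copies would cost $\Omega(N^3)$ squares), the tricky part of the layout is making the bridges reach back into the single $M$-region without inflating $n$ beyond $O(N^2)$; this is what makes the construction genuinely delicate, as opposed to a trivial reuse of the fully-dynamic proof.

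Having established such a reduction, the OMv algorithm runs in time $O(\mathrm{poly}(N)) + O(N^2 \cdot (t_u + t_q))$. If $t_u + t_q$ were $o(N^{1-\delta})$, we would contradict the hypothesized $\Omega(N^{3-\delta})$ lower bound on OMv. Hence $t_u + t_q = \Omega(N^{1-\delta}) = \Omega(n^{(1-\delta)/2})$, and the special case $\delta = o(1)$ (i.e.\ the OMv conjecture) recovers the bound $t_u + t_q = \Omega(n^{1/2 - o(1)})$ announced in the statement.
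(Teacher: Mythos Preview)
Your high-level accounting is right: an incremental reduction with $n=\Theta(N^2)$ squares and $O(N^2)$ total operations would give the claimed bound. The gap is in the construction itself. Spatial separation cannot do the work you assign to it. For a single depth query to detect whether some $j$ has $M_{i,j}=1$ and $(v_\ell)_j=1$, one \emph{point} must be covered simultaneously by $s_{i,j}$ and by the phase-$\ell$ vector square for column $j$. If those vector squares live in a $y$-strip disjoint from the $M$-region, no such point exists; if instead they are made tall enough to reach the $M$-region, then column $j$ there is covered by the vector squares of \emph{all} past phases $\ell'\le\ell$ with $(v_{\ell'})_j=1$, and this accumulated count differs across columns. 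Your fallback of reading $D^{\text{after}}-D^{\text{before}}$ does not rescue this: both numbers are global maxima whose locations and values depend on the whole history, and whether inserting a row bridge increases the global max is governed by where the max sat before, not by $(Mv_\ell)_i$.

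The paper does something different and this is the idea you are missing: it keeps all squares in the same region and maintains a \emph{known, uniform} depth baseline by inserting extra squares. Within a phase, after querying row $i$ with two large squares, it inserts two more large squares that raise every row's big-square depth to the same level; at the end of a phase it inserts vector-squares for the columns $j$ with $(v_\ell)_j=0$, so that after phase $\ell$ every column has been hit by exactly $\ell$ vector squares. The effect is that at each query the threshold separating ``yes'' from ``no'' is an explicit integer that increases by a fixed amount between consecutive queries (the paper tests depth $=4,6,8,\ldots$ within the first phase, then adds $2(N+1)$ per completed phase). No undo, no separate strips, and $n=\Theta(N^2)$ throughout.
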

\begin{proof}
We will use this data structure to solve the OMv problem. Let $M$ be a boolean matrix of size $N\times N$, and let $v^1, v^2,\ldots, v^N$ be boolean vectors of size $N$. 
We start by inserting in $D$ all the following squares.
\begin{itemize}
    \item For each $1\leq i\leq N$ and $1\leq j \leq M$ for which $M_{i,j} = 1$, we put a unit square $s_{i,j}$ whose lower-left corner has coordinates $(Nj,i)$.
\end{itemize}
This takes $O(N^2)$ updates on $D$.

Then we are given $v^1$ and compute $Mv^1$ as follows. Insert in $D$ squares of side-length $N$ whose lower-left corner have coordinates $(Nj,1)$ for all $j$ such that $v^1_j = 1$. At this point the depth of any point in some square $s_{i,j}$ is $2$ if  $v^1_j = 1$ and $1$ otherwise. Any point which does not lie in a square $s_{\bullet,\bullet}$ has depth at most $1$. To compute $(Mv^1)_1$, insert in $D$ two squares with side-length $N^2$ and lower-left corner at $(N,1)$ and $(N,2-N^2)$ respectively. Then $(Mv^1)_1=1$ if and only if the depth of $S$ is $4$ and we can thus compute this entry. Next, we insert in $D$ two squares with side-length $N^2$ and lower-left corner at $(N,2)$ and $(N,1-N^2)$ respectively. At this point, all points which lie in a square $s_{\bullet,j}$ with $v^1_j = 1$ have depth $4$ while all other points have depth at most $3$. 

To compute $(Mv^1)_2$, insert in $D$ two squares with side-length $N^2$ and lower-left corner at $(N,2)$ and $(N,3-N^2)$ respectively. Then $(Mv^1)_2=1$ if and only if the depth of $S$ is $6$ and we can thus compute this entry. Again insert in $D$ two squares with side-length $N^2$ and lower-left corner at $(N,3)$ and $(N,2-N^2)$ respectively. Now, all points which lie in a square $s_{\bullet,j}$ with $v^1_j = 1$ have depth $6$ while all other points have depth at most $5$. 

We continue this way to compute $(Mv^1)_j$ for all $1\leq j\leq N$. Once this is done, insert in $D$ squares of side-length $N$ whose lower-left corner have coordinates $(Nj,1)$ for all $j$ such that $v^1_j = 0$. Now, all points which lie in a square $s_{\bullet,\bullet}$ have depth $2(N+1)$ while all other points have depth at most $2(N+1)-1$. When given $v^2$ we can restart this whole procedure for $v^2$ (adding $2(N+1)-1$ to the depth threshold for every test).

We repeat this with $v^i$ for all $3\leq i \leq N$, thus solving the OMv problem.

All in all, we have performed $\Theta(N^2)$ updates and queries on $D$ (and the total number of squares is also $n = \Theta(N^2)$). Thus, we can solve the OMv problem on matrices of size $N\times N$ in expected time \[O\left((t_q+t_u)\cdot N^2\right).\]

Assuming this is $\Omega(N^{3-\delta})$ and rewriting in terms of $n$ we get the result.
\end{proof}

A slight adaptation of the proof by Overmars and Yap \cite{Overmars1988}  for Klee's Measure Problem again gives an upper bound of $O(\sqrt{n}\log n)$ for the amortized update time in the fully-dynamic setting when the set of rectangle vertices is known \emph{a priori} (Chan \cite{Chan2010} improves this to $O(\sqrt{n/\log n}\log^{3/2}\log n)$). The techniques by Yıldız et al.\ \cite{Yildiz2011-prob} allow the $O(\sqrt{n}\log n)$ result to carry over to the setting without this additional assumption, thus almost matching the lower bound we get from the OMv conjecture.

\subparagraph*{Square Covering with Squares}
Given some fixed square in the plane $C$, maintain a set of $O(n)$ squares $S$ all lying inside $C$ together with the answer to the answer to the question ``is the union of all squares in $S$ equal to $C$?''.

\medskip
This problem is a special case of both Klee's Measure Problem and the Depth Problem (with rectangles).
The reduction to Klee's Measure problem is immediate. To reduce Square Covering to the Depth Problem with Rectangles, replace each square with its complement in the square $C$ (which can be decomposed into at most $4$ rectangles) and test if the depth is $n$.

\begin{lemma}
For any fully-dynamic data structure $D$ for Square Covering with Squares, Scenario 2 in Theorem \ref{thm:general_3sum} applies (with $t_{uq} = t_u + t_q$).
\end{lemma}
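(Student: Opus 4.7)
The plan is to fit such a data structure into Scenario~2 of Theorem~\ref{thm:general_3sum} by essentially the same grid-of-holes construction used in Lemma~\ref{lemma:kmp}, but now embedded inside a fixed big square $C$ that is fully covered after Step~1 except at the ``potential hole'' positions. After Steps~2 and~3, the only remaining uncovered points of $C$ will be exactly the hole cells indexed by $(i', j)$ with $j \in F_{i'} \cap J$, so a single ``is $\bigcup S = C$?'' query will decide whether $F_{i'} \cap J = \emptyset$.

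Define hole cells $c_{i,j} = [(k+2)j, (k+2)j+1] \times [T+i, T+i+1]$ for $1 \le i \le k$ and $1 \le j \le m$, with a vertical offset $T = (k+2)m + 2$ chosen so that the big squares inserted in Step~3 fit inside $C = [0, L]^2$ for $L = 2(k+2)m + k + 5$. In Step~1 we populate $D$ with $O(mk)$ squares that together cover $C$ minus the hole cells: (i)~a constant number of squares of side $T+1$ tile the region $[0,L] \times [0, T+1]$ below the hole band, and symmetrically above; (ii)~$O(m)$ squares of side $k+1$ tile the horizontal gaps of the hole band between consecutive cell columns, covering the full vertical extent of the band in one shot; (iii)~a unit square placed at each non-hole position $c_{i,j}$ with $j \notin F_i$, using at most $mk$ squares. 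Finally we pre-insert, for each $1 \le j \le m$, a \emph{column square} $B_j$ of side $k+1$ placed at $[(k+2)j, (k+2)j + k + 1] \times [T+1, T+k+2]$; the column spacing of $k+2$ ensures that $B_j$ covers every cell of column $j$ and no cell of any other column. At this point $\bigcup S = C$ and $|S| = O(mk)$.

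In Step~2, given $J \subseteq \{1, \dots, m\}$, we delete $B_j$ for every $j \in J$; this re-exposes the hole cells in the chosen columns, leaving $C \setminus \bigcup S = \{c_{i,j} : j \in F_i \cap J\}$, and uses $|J|$ updates. In Step~3, given $i'$, we insert two squares of side $(k+2)m + 2$: one with top edge at $y = T + i'$ that covers the part of the hole band in rows $i < i'$, and one with bottom edge at $y = T + i' + 1$ that covers the rows $i > i'$. Together these leave the horizontal slab $y \in (T + i', T + i' + 1)$ untouched inside the cell region, so $C \setminus \bigcup S = \{c_{i',j} : j \in F_{i'} \cap J\}$. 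A single query then decides whether this set is empty, i.e., whether $F_{i'} \cap J = \emptyset$.

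The only place where care is really required is simultaneously hitting the $O(mk)$ square bound and avoiding the holes in Step~1; the column spacing $k+2$ and the vertical offset $T$ are tuned precisely to make this bookkeeping straightforward. Given the construction above, with $n = \Theta(mk)$, $u_J = |J|$, and a constant number of updates and queries in Step~3, the data structure satisfies the conditions of Scenario~2 of Theorem~\ref{thm:general_3sum} with $t_{uq} = t_u + t_q$, as claimed.
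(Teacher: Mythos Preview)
Your proposal is correct and follows essentially the same approach as the paper: a grid of unit ``hole'' cells encoding $\mathcal{F}$ inside a fixed square $C$, column-covering squares $B_j$ that are deleted in Step~2 for each $j\in J$, and two large row-covering squares inserted in Step~3 to isolate row $i'$ before a single coverage query. The paper's construction differs only in bookkeeping details (column spacing $k+1$ rather than $k+2$, and it covers the non-band part of $C$ with four large squares rather than your two strips), but the reduction is the same. One small point to tighten: your ``symmetrically above'' tiling with squares of side $T+1$ does not quite fit, since the region above the band has height exactly $T$; placing such squares either overlaps row $k$ or protrudes from $C$. This is easily repaired (use side $T$ above, or enlarge $L$ by one), and does not affect the argument.
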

\begin{proof}
Let $\mathcal{F}=\{F_1, \ldots, F_k\}$ be a family of $k$ subsets of $\{1,2,\ldots m\}$.

Assume without loss of generality that $C$ is the square of side-length $2R + k$ whose lower-left corner lies at the origin, where $R=m(k+1)$.

We perform Step 1 by initializing $D$ with all the following squares.
\begin{itemize}
    \item Two squares of side-length $R$, one of which has its lower-left corner lies at the origin and the other having lower-left corner at coordinates $(0,R+k)$.
    \item Two squares of side-length $R+k$, one of which has its lower-left corner at coordinates $(R,0)$ and the other having lower-left corner at coordinates $(R,R)$.
    \item For each $1\leq j \leq m$, a square of side-length $k$ whose lower-left corner has coordinates $((k+1)(j-1),R)$.
    \item For each $1\leq i\leq k$ and $1\leq j \leq m$ for which $j \not\in F_i$, a unit square whose lower-left corner has coordinates $((k+1)j-1,R+i-1)$.
    \item For each $1\leq j \leq m$, a square $b_j$ of side-length $k$ whose lower-left corner has coordinates $((k+1)j-1,R)$.
\end{itemize}

The total number of squares is $O(m\cdot k)$. It is easy to see that at this point, the union of all squares in $S$ is indeed equal to $C$.

To perform Step 2 when given $J \subset \{1,2,\ldots m\}$, we delete the square $b_{j}$ for all $j\in J$. This requires $O(|J|)$ updates on $D$. At this point, the only parts of $C$ which are not in the union of the squares in $S$ correspond to the unit squares whose lower left corner is $((k+1)j-1,R+i-1)$ for all $i,j$ such that $j \in J$ and $j\in F_i$.

In Step 3, when given an index $1\leq i'\leq k$, we insert two squares of side-length $R$ whose lower-left corners lie at coordinates $(0,i'-1)$ and $(0,R+i')$ respectively. Now the union of all squares in $S$ is different from $C$ if and only if there is some $j$ such that $j\in J$ and $j\in F_{i'}$ (i.e.\  $J\cap F_{i'} \neq \emptyset$). Thus, we can answer such a query after only two updates to the data structure. 
\end{proof}

A straight-forward adaptation of this proof yields the following.
\begin{lemma}
For any incremental data structure $D$ for Square Coverage by Squares with per-operation runtime guarantees, Scenario 1 in Theorem \ref{thm:general_3sum} applies (with $t_{uq} = t_u + t_q$).
\end{lemma}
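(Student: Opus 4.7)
The plan is to adapt the reduction used for the fully-dynamic case in the immediately preceding lemma, turning its deletion-based step into an insertion-only step so that the setup works for an incremental data structure and so that the hypotheses of Scenario~1 (rather than Scenario~2) of Theorem~\ref{thm:general_3sum} are satisfied. Recall that in the fully-dynamic reduction the squares $b_j$ of side-length $k$ were pre-inserted in Step~1 and then deleted in Step~2 for every $j\in J$; the role of these squares is solely to mask (until Step~2) the ``holes'' created by the missing unit squares for pairs $(i,j)$ with $j\in F_i$. The key observation is that this masking can equivalently be achieved by not inserting the $b_j$'s at all during Step~1 and inserting them in Step~2 for the indices $j\notin J$.

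More concretely, I would perform Step~1 by initializing $D$ with exactly the squares from the fully-dynamic reduction except for the $b_j$'s; the total number of squares is still $O(m\cdot k)$, matching the constraint $n=O(m\cdot k)$ in Scenario~1. In Step~2, when $J\subseteq\{1,\ldots,m\}$ is revealed, I would insert the square $b_j$ (of side-length $k$, with lower-left corner at $((k+1)j-1,R)$) for every $j\notin J$. This requires $O(m)$ insertions, matching $u_J=O(m)$ from Scenario~1. By the exact same analysis as before, after Step~2 the only portions of $C$ left uncovered are the unit squares at positions $((k+1)j-1,R+i-1)$ for pairs $(i,j)$ with $j\in J$ and $j\in F_i$. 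Step~3 is unchanged: insert the two large squares of side-length $R$ with lower-left corners $(0,i'-1)$ and $(0,R+i')$, which cover every row except row $i'$ of the central strip, and then query; the union equals $C$ if and only if $J\cap F_{i'}=\emptyset$.

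The correctness proof therefore reduces to checking that after Step~2 the configuration of covered and uncovered points in $C$ is identical to the configuration obtained after Step~2 in the fully-dynamic version, which is immediate by construction since in both cases exactly the $b_j$'s for $j\notin J$ are present among the masking squares. The number of objects maintained, the number of updates in Step~2, and the constant number of updates and queries in Step~3 all satisfy the requirements of Scenario~1 (with $t_{uq}=t_u+t_q$, since we perform updates before the final query in Step~3). Because we are in the purely incremental setting with per-operation runtime guarantees, the assumed undo capability (which can be added without overhead in this setting by the standard technique cited earlier) suffices to carry the argument through.

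I do not anticipate a genuine obstacle here: the adaptation is essentially mechanical and parallels the incremental adaptations already made in the paper for Counting Maximal Points in $\R^3$, Counting Extremal Points in $\R^3$, Klee's Measure Problem with Squares, and the Discrete version of Klee's Measure Problem. The only point that merits a brief check is that none of the squares inserted in Step~2 in this new order of operations accidentally disturb the validity of the subsequent Step~3 insertions; but since the Step~2 squares $b_j$ lie strictly within the central strip $[0,R]\times[R,R+k]$ and the Step~3 squares only add coverage to rows outside $\{R+i'-1\}$ of that strip, the invariant is preserved.
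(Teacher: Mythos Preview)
Your proposal is correct and is exactly the ``straight-forward adaptation'' the paper alludes to: omit the $b_j$'s in Step~1 and insert $b_j$ for every $j\notin J$ in Step~2, yielding $n=O(m\cdot k)$ and $u_J=O(m)$ as required by Scenario~1. The paper does not spell out the argument beyond noting that it is a direct adaptation, and your write-up matches that adaptation precisely.
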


For the analogous problem with arbitrary axis-aligned rectangles instead of squares (which we call Square Covering with Rectangles), we can get a stronger condition on the relation between update and query time using Theorem \ref{thm:general_oumv}.

\begin{lemma}
Let $D$ be an incremental data structure for Square Covering with Rectangles with $t_p$ expected preprocessing time, $t_u$ expected time per update and $t_q$ expected time per query. If the OMv conjecture is true and $t_p$ is at most polynomial then
\[t_u\cdot \sqrt{n} + t_q \geq n^{1-o(1)}.\]
The same holds for any fully-dynamic data structure with amortized runtime guarantees.
\end{lemma}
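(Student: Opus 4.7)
The plan is to apply Theorem~\ref{thm:general_oumv}, in the same spirit as the Rectangle Range Marking reduction from Section~\ref{sec:square_range_marking}, but with the geometric gadgets adapted to the covering setting. Given a family $\mathcal{F}=\{F_1,\ldots,F_N\}$ of subsets of $\{1,\ldots,N\}$, I will encode membership ``$j\in F_i$'' as a small uncovered hole at grid position $(j,i)$ inside a fixed square $C$. The total number of rectangles used will be $O(N^2)$, so $n=\Theta(N^2)$, which is what Theorem~\ref{thm:general_oumv} requires.

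For Step 1, fix $C$ to be a sufficiently large axis-aligned square and carve out a regular $N\times N$ grid of cells inside it, where each cell has room for one unit ``hole''. I will insert $O(N^2)$ axis-aligned rectangles covering all of $C$ \emph{except} the unit squares at positions $(j,i)$ for which $j\in F_i$: for each cell one can write its complement inside $C$'s grid as a bounded number of rectangles (e.g., the four L-shaped surrounds of the hole), and then one uses a handful of additional large rectangles to pad out the boundary rows/columns. This is all done in the preprocessing phase (or via $O(N^2)$ insertions in the fully-dynamic case, which is within the polynomial-preprocessing budget).

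For Step 2, given subsets $I,J\subset\{1,\ldots,N\}$, I insert one long thin horizontal rectangle for each $i\notin I$, sized so it covers every hole in row $i$, and one long thin vertical rectangle for each $j\notin J$, sized so it covers every hole in column $j$. This is $O(N)$ insertions total, well within the $O(N)$ update budget of Theorem~\ref{thm:general_oumv}, and after Step 2 the only remaining uncovered holes are precisely those $(j,i)$ with $j\in F_i$, $i\in I$, $j\in J$. For Step 3 I simply issue the single query ``does the union of the rectangles equal $C$?''; the answer is negative iff such a pair $(i,j)$ exists. Since all gadgets are insertions, the same reduction works in the incremental setting; in the fully-dynamic setting with amortized runtime, the (trivially satisfied) undo requirement in Theorem~\ref{thm:general_oumv} is inherited from the insert/delete interface.

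The only slightly delicate point is making Step 1 truly a preprocessing phase using only rectangle insertions; this is the ``main obstacle'' but it is purely a gadget-construction issue, handled by the explicit cell-by-cell decomposition above. Once this is in place, Theorem~\ref{thm:general_oumv} directly yields $t_u\cdot\sqrt{n}+t_q=\Omega(n^{1-o(1)})$, which is the claimed bound.
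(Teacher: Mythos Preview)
Your approach is correct and essentially the same as the paper's: apply Theorem~\ref{thm:general_oumv} by encoding ``$j\in F_i$'' as an uncovered unit hole at grid position $(j,i)$ inside $C$, then in Step~2 insert one horizontal strip per $i\notin I$ and one vertical strip per $j\notin J$, and in Step~3 issue a single coverage query. The only difference is that the paper's Step~1 is simpler than your cell-surround gadget: since $C$ is exactly tiled by the $N^2$ unit grid cells, one just inserts the unit square at $(j,i)$ for every pair with $j\notin F_i$ and nothing else, leaving precisely the desired holes.
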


\begin{proof}
We show that such a data structure fits the conditions of Theorem \ref{thm:general_oumv}. Let $\mathcal{F}=\{F_1, \ldots, F_N\}$ be a family of $N$ subsets of $\{1,2,\ldots N\}$.

Assume without loss of generality that $C$ is the square of side-length $N$ whose lower-left corner lies at coordinate $(1,1)$.

We perform Step 1 by initializing $D$ with the following rectangles.
\begin{itemize}
    \item For each $1\leq i\leq N$ and $1\leq j \leq N$ for which $j \not\in F_i$, we put a unit square $s_{i,j}$ whose lower-left corner lies at $(j,i)$.
\end{itemize}

The total number of squares is $n = O(N^2)$. 

To perform Step 2 when given subsets  $I,J \subset \{1,2,\ldots N\}$, we insert for each $i \not\in I$ a long horizontal rectangle of height $1$ and width $N$ whose lower-left corner lies at $(1,i)$. For each $j \not\in J$ we insert a long vertical rectangle of height $N$ and width $1$ whose lower-left corner lies at $(j,1)$. This requires $O(N)$ updates on $D$.

Now the union of all rectangles is different from $C$ if and only if there is $i\in I$ and $j\in J$ such that $j\in F_{i}$. Thus, we can answer such a Step 3 query after a single query to $D$.

By applying Theorem \ref{thm:general_oumv} we get the result.
\end{proof}

As this problem is a special case of Klee's Measure Problem and the Depth Problem with Rectangles, we get the same lower bound for those. This also implies that just like these problems, Square Covering with Rectangles can be solved in the fully-dynamic setting in amortized time $O(\sqrt{n}\log n)$ \cite{Yildiz2011-prob}.

\subsection{Largest empty disk}\label{subsec:disks}

\subparagraph*{Largest Empty Disk in Query Region:} Maintain a set of points in $\R^2$ to support queries for the radius of the largest empty disk whose center lies in a query axis-aligned rectangle.

\begin{lemma}\label{lemma:empty_disk}
For any decremental (or fully-dynamic) data structure $D$ for Largest Empty Disk in Query Region, Scenario 2 in Theorem \ref{thm:general_3sum} applies (with $t_{uq} = t_q$).
\end{lemma}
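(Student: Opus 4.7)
The plan is to show that any decremental (or fully-dynamic) data structure for Largest Empty Disk in Query Region fits Scenario~2 of Theorem~\ref{thm:general_3sum} with $t_{uq}=t_q$. Given an instance of the Multiphase problem with family $\mathcal{F} = \{F_1, \ldots, F_k\}$, in Step~1 I would preprocess a planar point set of size $n = \Theta(mk)$ arranged on a widely separated grid: pick a large separation constant $C \gg k$, and for each pair $(i, j)$ with $j \in F_i$, insert a marker $p_{i,j}$ at $(Cj, i)$. For each column $j \in \{1, \ldots, m\}$, insert exactly one blocker $b_j$ at an auxiliary position close to the column line $x = Cj$ but away from the grid of markers (for instance at $(Cj, -1)$). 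A constant number of outer ``wall'' points are placed far from the grid in each of the four cardinal directions so that no empty disk may escape to infinity. The total count is $O(mk)$, matching $n = O(m\cdot k)$ from Scenario~2.

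For Step~2, given $J \subseteq \{1, \ldots, m\}$, delete $b_j$ for each $j \in J$, which is $|J|$ updates as required ($u_J = O(|J|)$). For Step~3, upon receiving $i'$, issue a single query on a thin axis-aligned rectangle $R_{i'}$ that encodes row~$i'$; the natural candidate is $R_{i'} = [C/2,\, Cm + C/2] \times \{i'\}$, and the returned radius is then compared to a threshold $\tau$ precomputed from $C$ and $k$. The positions and $\tau$ are to be tuned so that the largest empty disk with center in $R_{i'}$ strictly exceeds $\tau$ if and only if some column $j$ simultaneously satisfies $j \in J$ (so that $b_j$ has been deleted, releasing a nearby anchor) and $j \in F_{i'}$ (so that $p_{i', j}$ is placed at $(Cj, i')$, which geometrically certifies that a disk of the appropriate radius actually fits in that neighborhood of the row-$i'$ line). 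When both conditions hold for some $j$, the disk can expand past the former $b_j$ location; otherwise either a surviving $b_{j'}$ or one of the $p_{i, j}$'s in the relevant column keeps the maximum radius in $R_{i'}$ at most $\tau$.

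The main technical obstacle is the geometric tuning for arbitrary $\mathcal{F}$: the distances from every candidate center in $R_{i'}$ to all remaining points (neighboring columns, other-row markers in the same column, the walls, and the still-alive $b_{j'}$'s) must be controlled uniformly so that the threshold $\tau$ cleanly separates the ``$J \cap F_{i'} \neq \emptyset$'' case from the complementary one. Achieving this may require a more elaborate placement than a plain integer grid, such as offsetting the $p_{i, j}$'s slightly off the vertical line $x = Cj$, stretching the $y$-axis by a factor $D$ so that row spacing dominates column spacing, or permitting a constant number of queries instead of a single one (still compatible with $t_{uq} = t_q$). Once the construction is verified, a direct application of Theorem~\ref{thm:general_3sum} under Scenario~2 with $t_{uq} = t_q$ yields the stated lemma.
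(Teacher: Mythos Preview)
Your high-level plan matches the paper's (a grid layout indexed by $(i,j)$, blocker points $b_j$ removed in Step~2, a single axis-aligned rectangle query on row $i'$ in Step~3), but the geometric encoding is inverted and this is a genuine gap, not just a matter of tuning constants. You insert a marker $p_{i,j}$ at $(Cj,i)$ whenever $j\in F_i$ and then assert that the presence of $p_{i',j}$ ``geometrically certifies that a disk of the appropriate radius actually fits.'' This is backwards: points \emph{constrain} empty disks. With $p_{i',j}$ sitting exactly on the query line $y=i'$, any empty disk centered near $(Cj,i')$ has radius close to $0$, so the cells with $j\in F_{i'}$ are precisely the cells where the empty disk is \emph{small}. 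No choice of $C$, $D$, offsets, or walls rescues this, because the obstacle is the marker you deliberately placed there.

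The paper fixes this by encoding membership through the \emph{absence} of points. It first lays down a complete scaffold: for every cell $(i,j)$ it places frame points on the left and right boundaries of the cell, so that the maximum empty disk centered in cell $(i,j)$ has a known baseline radius independent of $\mathcal{F}$. Then, for each pair with $j\notin F_i$, it inserts a few \emph{extra} points inside that cell, shrinking its maximum empty disk strictly below the baseline. The blocker $b_j$ is placed in column $j$ at a height that caps the radius in every cell of that column below the baseline; deleting $b_j$ (i.e.\ $j\in J$) releases the column. After Step~2 the disk in cell $(i,j)$ attains the baseline radius iff $j\in J$ and $j\in F_i$, and a single query on the horizontal strip at row $i'$ reads this off. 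The idea you are missing is that for ``largest empty'' problems, the yes-instance must correspond to \emph{fewer} obstructions; set membership should be encoded by omitting points, not by adding markers.
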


\begin{figure}
    \centering
    \includegraphics[scale=0.5]{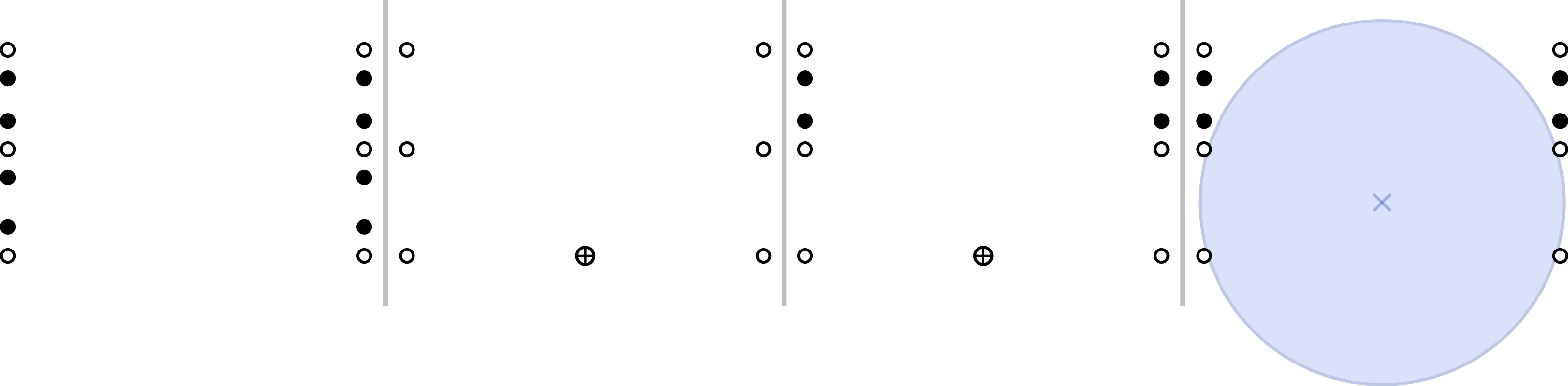}
    \caption{Illustration (not to scale)of a set of points obtained after Step 2 in the proof of Lemma \ref{lemma:empty_disk}. The illustrated instance has $m=4$, $k=2$, $\mathcal{F} = \{\{2,3,4\}, \{2\} \}$ and $J = \{1,4\}$. The points $b_{\bullet}$ are represented by a cross in a circle. The largest empty disk is also represented.}
    \label{fig:empty_disk}
\end{figure}

\begin{proof}
Let $\mathcal{F}=\{F_1, \ldots, F_k\}$ be a family of $k$ subsets of $\{1,2,\ldots m\}$.

We perform Step 1 by initializing $D$ with all the following points.
\begin{itemize}
    \item For all $1 \leq i \leq k+1$ and $1\leq j \leq m$, put points at coordinates $((10k+1)j,4i)$ and $((10k+1)j+10k,4i)$.
    \item For all $1 \leq i \leq k$ and $1\leq j \leq m$ such that $j \not\in F_i$ put points at coordinates $((10k+1)j,4i+1)$,$((10k+1)j,4i+3)$, $((10k+1)j+10k,4i+1)$ and $((10k+1)j+10k,4i+3)$.
    \item For all $1\leq j \leq m$, we put a point $b_j$ at coordinates $((10k+1)j+5k, 4)$.
\end{itemize}

The total number of points is $O(m\cdot k)$. Note that before we insert the points $b_\bullet$, all largest empty disks with center laying inside the convex hulls of the points have their center at $((10k+1)j+5k,4i+2)$ for some $1 \leq i \leq k$ and $1\leq j \leq m$. Moreover, if $j \in F_i$, then such a disk has squared radius $25k^2 + 4$ whereas it has squared radius  $25k^2 + 1$ if $j \not\in F_i$. The point $b_j$ ensures that the squared radius of the largest empty disk with center $((10k+1)j+5k,4i+2)$ is strictly smaller than $25k^2 + 4$ (as it is at most $(4k-1)^2$).

To perform Step 2 when given $J \subset \{1,2,\ldots m\}$, we delete the points $b_{j}$ for all $j\in J$. This requires $O(|J|)$ updates on $D$. See Figure \ref{fig:empty_disk} for an illustration.

In Step 3, when given an index $1\leq i'\leq k$, we query the data structure to get the radius $r$ of the largest empty disk whose center lies in the region $[10k+1; (10k+11)k] \times [4i'; 4(i'+1)]$. To test whether there is some $j$ such that $j\in J$ and $j \in F_{i'}$, it suffices to test if $r^2 = 25k^2 + 4$. Thus we can do Step 3 with a single query to our data structure.
\end{proof}
A straight-forward adaptation of this proof yields the following.
\begin{lemma}
For any incremental data structure $D$ for Largest Empty Disk in Query Region with per-operation runtime guarantees, Scenario 1 in Theorem \ref{thm:general_3sum} applies (with $t_{uq} = t_q$).
\end{lemma}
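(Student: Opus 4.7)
The plan is to mimic the geometric construction from Lemma~\ref{lemma:empty_disk} while reshuffling when the blocker points $b_j$ are inserted, so that only insertions are needed. Concretely, I would reuse the same grid of ``frame'' points (those at $((10k+1)j,4i)$, $((10k+1)j+10k,4i)$ and the extra points created for pairs with $j \not\in F_i$), with the same coordinates and the same number $O(m\cdot k)$ of them, giving $n = O(m\cdot k)$.

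For Step~1, the data structure $D$ is preprocessed with exactly these frame points, and crucially \emph{none} of the $b_\bullet$ points are inserted yet. This is precisely the place where the original decremental/fully-dynamic proof inserted the $b_\bullet$, so we delay them. For Step~2, given $J\subset\{1,\ldots,m\}$, I would insert the blocker point $b_j$ at coordinates $((10k+1)j+5k,4)$ for every $j\notin J$. This uses $u_J = O(m)$ insertions rather than $O(|J|)$ deletions, which is exactly the difference between Scenario~2 and Scenario~1 of Theorem~\ref{thm:general_3sum}. The resulting point set after Step~2 is identical to the one appearing in the decremental construction after its Step~2, so the geometric analysis of largest empty disks carries over verbatim.

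For Step~3, given $i'$, I would issue the same single query as before, asking $D$ for the radius of the largest empty disk whose center lies in $[10k+1;(10k+11)k]\times[4i';4(i'+1)]$. By the same case analysis as in Lemma~\ref{lemma:empty_disk}, the squared radius equals $25k^2+4$ exactly when there exists $j\in J\cap F_{i'}$, so a single query to $D$ suffices, giving $t_{uq}=t_q$. Putting it together, $n=O(m\cdot k)$, $u_J=O(m)$, Step~3 uses $O(1)$ queries and no updates, which are exactly the requirements of Scenario~1 of Theorem~\ref{thm:general_3sum}.

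No step here looks genuinely hard: the main (very mild) worry is to double-check that none of the blocker points that are now inserted in Step~2, rather than deleted, create a spurious empty-disk witness inside the query region for some other $i\neq i'$. This is immediate from the construction because each $b_j$ only affects the largest empty disk centered in the row $i=1$ strip, whereas the Step~3 query region lies strictly above that strip, so the radius computation reduces to the same local configuration as in the original proof.
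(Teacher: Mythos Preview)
Your proposal is correct and is exactly the ``straight-forward adaptation'' the paper alludes to: delay the blocker points $b_\bullet$ to Step~2 and insert $b_j$ for $j\notin J$ instead of deleting $b_j$ for $j\in J$, which trades $u_J=O(|J|)$ for $u_J=O(m)$ and hence Scenario~2 for Scenario~1. Your final paragraph's worry is unnecessary (and its justification slightly off): since the point set after Step~2 is literally identical to that of the decremental construction, there is nothing further to check.
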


In the fully-dynamic setting, Chan \cite{Chan2020} gives a data structure for this problem with $O(n^{1+\epsilon})$ preprocessing time and $O(n^{11/12+\epsilon})$ amortized time per operation, for an arbitrary $\epsilon>0$ (here we mean amortized over all operations, and not amortizing queries only over previous queries and updates only over previous updates). In the more restricted semi-online setting (which generalizes the incremental case), another paper by the same author \cite{Chan2003} gives a data structure with $O(n^{7/8+\epsilon})$ worst-case time per update and query. These data structures even support querying arbitrary triangular ranges in the plane (rather than only axis-aligned rectangles).

\subparagraph*{Largest Empty Disk in a Set of Disks:}
Given a fixed axis-aligned rectangle $B$, maintain a set $S$ of disjoint disks in $\R^2$ together with the radius of the largest disk whose center lies in $R$ and which does not intersect any disk in $S$.

\begin{lemma}
For any fully-dynamic data structure $D$ for Largest Empty Disk in a Set of Disks, Scenario 2 in Theorem \ref{thm:general_3sum} applies (with $t_{uq} = t_u + t_q$).
\end{lemma}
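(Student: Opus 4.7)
My plan is to adapt the construction from the proof of Lemma~\ref{lemma:empty_disk}, so that the reduction fits Scenario~2 of Theorem~\ref{thm:general_3sum}. I keep the same $k\times m$ grid of cells at positions $((10k{+}1)j{+}5k,4i{+}2)$, with the corner structure encoding $\mathcal{F}$ and the $b_j$ disks below row~$1$ encoding column blockers, but I replace each point by a disjoint disk of radius $\epsilon \ll 1$; the local geometry is essentially unchanged, so a cell $(i,j)$ with $j\in F_i$ still admits a candidate empty disk of squared radius $25k^2+4$ once $b_j$ is deleted, while all other cells remain bounded away from that value. I choose $B$ to be the axis-aligned bounding rectangle of the whole construction. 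Step~1 consists of inserting all of these disks, and Step~2 deletes $b_j$ for every $j\in J$ exactly as in Lemma~\ref{lemma:empty_disk}.

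The new difficulty is Step~3, since the query here has no free parameter restricting the centre of the empty disk to row~$i'$. I would simulate that restriction by inserting two ``masking'' disks $U$ and $L$ whose centres are placed very far above and below the grid, with radii so large that their boundaries inside $B$ approximate horizontal lines passing just above $y=4(i'+1)$ and just below $y=4i'$. Any query disk centred outside row~$i'$'s strip then has its centre inside $U$ or $L$ and is therefore inadmissible, so a single query returns $\sqrt{25k^2+4}$ if and only if $J\cap F_{i'}\ne\emptyset$, exactly as in the previous lemma. Together with the one deletion per element of $J$ in Step~2 and the two insertions plus one query in Step~3, these operations realise the interface of Scenario~2 with $t_{uq}=t_u+t_q$.

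The step I expect to be the main obstacle is verifying that $U$ and $L$ can be chosen disjoint from every tiny grid disk. A disk whose boundary approximates a long horizontal chord of $B$ unavoidably covers a half-plane-like region and will, a priori, contain the tiny grid disks of the rows above (respectively below) row~$i'$. To make the construction go through, I would introduce a large vertical gap between consecutive rows of the grid and confine each row's tiny disks to a narrow vertical band, and then tune the radii of $U$ and $L$ so that the ``sag'' of their boundaries over the $x$-span of $B$ stays strictly inside those row-gaps: small enough that $U$ lies entirely below any tiny disk of row $i>i'$, yet large enough that at $y=4(i'+1)$ it already spans the whole $x$-range of $B$. The arithmetic of these constraints forces only a polynomial rescaling of the coordinates, which is compatible with the Word~RAM model by the same argument used at the end of the proof of Theorem~\ref{thm:extremal_pts_ds}. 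Once disjointness is established, applying Theorems~\ref{thm:general_3sum} and~\ref{thm:general_omv} via Scenario~2 yields the stated lower bound.
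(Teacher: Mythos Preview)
Your outline matches the paper's through Step~2, but your Step~3 has a genuine obstruction that the proposed fix does not remove. If $U$ is a disk with centre far above the grid and its lower boundary inside $B$ lies near $y=4(i'+1)$, then $U$ is convex and therefore contains \emph{every} tiny disk in rows $i'+1,\ldots,k$; this violates the disjointness hypothesis of the problem regardless of how small the sag of the boundary arc is. If instead you shrink $U$ so that it genuinely fits inside a single row-gap (and is thus disjoint from all tiny disks), a short distance calculation shows that such a $U$ can block candidate empty disks of the target radius coming from at most the one adjacent row, never from all rows $i''>i'$ simultaneously. Under either reading of your fix, Step~3 fails.

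The paper sidesteps this entirely via a numerical observation you are not exploiting: the target radius $\sqrt{25k^2+4}\approx 5k$ already exceeds the total vertical extent $\approx 4k$ of the grid of tiny disks. Hence one can place $C_1$ with its lowest point at height roughly $4(i'+1)+\sqrt{25k^2+4}$, which is above every tiny disk (so disjointness is automatic), yet still close enough that any candidate empty disk of radius $\sqrt{25k^2+4}$ centred in a row $i''>i'$ reaches up and intersects $C_1$; symmetrically for $C_2$ below. No rescaling of the row gaps is needed, and two insertions plus one query then decide whether $J\cap F_{i'}\neq\emptyset$ exactly as you intended.
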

\begin{proof}[Proof sketch]
The proof is similar to the previous one (where we use small enough disks instead of points and we let $B$ be the smallest rectangle bounding these points) up to Step 3. Here, instead of using a query to constrain the center of any empty disk of interest to have a $y$-coordinate in $[4i'; 4(i'+1)]$ we use two big disjoint disks $C_1$ and $C_2$, one above and one below all other disks. We place them in such a way that any potential empty disk of squared radius $25k^2 + 4$ whose center lies in $R$ must necessarily have a $y$-coordinate in $[4i'; 4(i'+1)]$ in order to be disjoint from $C_1$ and $C_2$. We then have that the squared radius of a largest empty disk whose center lies inside $B$ is $r^2 = 25k^2 + 4$ if and only if $J\cap F_{i'} \neq \emptyset$.
\end{proof}

Note that one can easily adapt this lower bound to variants where $B$ is some other shape, such as a square, a triangle or a disk.

As usual, we can adapt the proof to the incremental case and show the following.
\begin{lemma}
For any incremental data structure $D$ for Largest Empty Disk in a Set of Disks with per-operation runtime guarantees, Scenario 1 in Theorem \ref{thm:general_3sum} applies (with $t_{uq} = t_u + t_q$).
\end{lemma}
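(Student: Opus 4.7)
The plan is to mirror the standard incremental-adaptation trick used throughout the paper, transporting the fully-dynamic reduction of the preceding lemma into the purely insertion-only setting by pre-committing the ``blocker'' objects only when needed rather than pre-installing them and then deleting them.

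Concretely, I would reuse verbatim the geometric gadget from the fully-dynamic proof: the grid of tiny disks encoding $\mathcal{F}$ arranged so that an empty disk of squared radius $25k^2+4$ centered at $((10k+1)j+5k, 4i+2)$ exists exactly when $j\in F_i$, together with the two large bounding disks $C_1, C_2$ (above and below the grid) which, once inserted, force the center of any witness empty disk to have $y$-coordinate in $[4i';4(i'+1)]$ for the row $i'$ of interest. In Step 1, I initialize $D$ by inserting \emph{only} the encoding disks of $\mathcal{F}$ (all small disks corresponding to pairs $(i,j)$ with $j\in F_i$, and the fixed boundary disks of the grid), for a total of $n = O(m\cdot k)$ objects; crucially, I do \emph{not} insert any of the ``blocker'' disks $b_j$ at this stage. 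In Step 2, given $J\subset\{1,\ldots,m\}$, I insert the blocker disk $b_j$ for every $j\notin J$; this costs $u_J = O(m)$ insertions, which matches the requirement $u_J = O(m)$ of Scenario 1. The effect is identical to the fully-dynamic construction after deletion: the witness disks of squared radius $25k^2+4$ survive exactly at positions $(i,j)$ with $j\in F_i$ and $j\in J$.

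In Step 3, given the query index $i'$, I insert the two large constraining disks $C_1, C_2$ (placed as in the sketch of the previous lemma so that an empty disk of the target radius centered in $B$ must have $y$-coordinate in $[4i';4(i'+1)]$), then perform a single query for the radius $r$ of the largest empty disk with center in $B$. As before, $r^2 = 25k^2+4$ if and only if $J\cap F_{i'}\neq\emptyset$. Step 3 thus uses $O(1)$ insertions and one query, so the cost is $O(t_u + t_q) = O(t_{uq})$.

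Since the preprocessing phase only uses the insertion operation of $D$, we may also absorb $t_p$ into $O(t_u\cdot n)$; and because updates carry per-operation (non-amortized) runtime guarantees, the undo operation comes for free by the lemma cited in Section~\ref{sec:general_reductions}. With $n = O(m\cdot k)$ and $u_J = O(m)$, the hypotheses of Scenario 1 in Theorem~\ref{thm:general_3sum} are fully met, with $t_{uq} = t_u + t_q$ because Step 3 performs both updates and queries. The main (and essentially only) thing to double-check is that the geometric argument of the fully-dynamic sketch still goes through when the $b_j$'s are inserted \emph{after} the rest of the grid rather than being selectively deleted from it; but since the final static configuration at the start of Step 3 is identical in both reductions, the analysis of witness empty disks carries over unchanged.
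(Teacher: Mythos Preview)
Your proposal is correct and follows exactly the ``as usual'' adaptation the paper invokes: omit the blockers $b_\bullet$ in Step~1, insert $b_j$ for $j\notin J$ in Step~2 (giving $u_J=O(m)$), and keep Step~3 unchanged, landing in Scenario~1 with $n=O(m\cdot k)$ and $t_{uq}=t_u+t_q$. One small slip in your write-up: the encoding gadget places extra disks at pairs $(i,j)$ with $j\notin F_i$ (to suppress the large empty disk there), not $j\in F_i$; your parenthetical has this reversed, though your description of the gadget's \emph{effect} is right and the argument is unaffected.
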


We can give an upper bound as follows using the techniques of Chan \cite{Chan2020}. Map each input disk with center $(a,b)\in \R^2$ and radius $r$ to the plane in $\R^3$ of equation $z=-2ax-2by + a^2 + b^2-r$. Add 4 near-vertical planes along the edges of $B$. Then finding the largest empty disk with center in $B$ reduces to finding the vertex of the lower envelope of these planes which maximizes $x^2+y^2+z$. Chan gives a data structure for this problem with $O(n^{11/12+\epsilon})$ amortized update time.
In the more restricted semi-online setting (which generalizes the incremental case), another paper by the same author \cite{Chan2003} gives a data structure with $O(n^{7/8+\epsilon})$ worst-case time per update.

\subsection{Rectangle Covering with Disks}\label{subsec:rectangle_cover_with_disks}

\subparagraph*{Rectangle Covering with Disks:}Given some fixed rectangle in the plane $B$, maintain a set of $O(n)$ disks $S$ together with the answer to the answer to the question ``is $R$ contained in the union of all disks in $S$?''.

\begin{lemma}
For any fully-dynamic data structure $D$ for Rectangle Covering with Disks, Scenario 2 in Theorem \ref{thm:general_3sum} applies (with $t_{uq} = t_u + t_q$).
\end{lemma}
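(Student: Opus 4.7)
The plan is to reduce the Multiphase problem under the parameters of Scenario 2 (namely $n = O(m\cdot k)$ and $u_J = O(|J|)$) to Rectangle Covering with Disks. The strategy mirrors the reduction of Lemma \ref{lemma:empty_disk} to Largest Empty Disk in Query Region, except that now a failure of $B$ to be covered (rather than the existence of a large empty disk) will signal that $J \cap F_{i'}\neq \emptyset$. I encode the family $\mathcal{F} = \{F_1,\ldots,F_k\}$ by placing ``hole positions'' $p_{i,j} = (jL, iH)$ on a grid, with column spacing $L$ much larger than row spacing $H$ (for concreteness, $L = (k+1)H$), and let $B$ be a large rectangle enclosing these positions with a small margin.

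In Step 1 I insert $O(mk)$ \emph{fixed} disks whose union equals $B$ minus a tiny neighborhood of each $p_{i,j}$ with $j\in F_i$, together with, for each $1\le j\le m$, one \emph{column disk} $b_j$ centered at $(jL,(k+1)H/2)$ with radius slightly larger than $(k-1)H/2$. The choice $L\gg kH$ ensures that $b_j$ contains every hole position in column $j$ but no hole position in any other column, so after Step 1 the union of all disks equals $B$. In Step 2, upon receiving $J$, I delete $b_j$ for each $j \in J$ (using $O(|J|)$ updates); the resulting uncovered region of $B$ is exactly the union of the small neighborhoods of $p_{i,j}$ with $j\in J$ and $j\in F_i$. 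In Step 3, upon receiving $1 \le i' \le k$, I insert two large masking disks: one tangent to the line $y = i'H + H/2$ from above at an $x$-coordinate offset from every $jL$ and large enough to contain all of $B\cap\{y\ge i'H + H/2\}$, and analogously one tangent to $y = i'H - H/2$ from below. These two disks together cover every hole position outside row $i'$ but avoid every hole position in row $i'$. I then query whether $B$ is covered; the answer is yes iff $J\cap F_{i'} = \emptyset$. Since Step 3 uses $O(1)$ updates followed by one query and the total number of objects is $n = O(mk)$, the conditions of Scenario 2 of Theorem \ref{thm:general_3sum} are met with $t_{uq}=t_u+t_q$.

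The main technical content lies in two geometric constructions. Constructing the $O(mk)$ fixed disks is done by partitioning $B$ into cells around each $p_{i,j}$: a cell with $j\notin F_i$ is covered by a single disk that includes $p_{i,j}$, while a cell with $j\in F_i$ is covered by a constant number of disks whose union leaves only a tiny neighborhood of $p_{i,j}$ uncovered (an elementary local construction, since a disk whose center is pulled sufficiently far outside the cell can cover a large portion of the cell while staying strictly on one side of $p_{i,j}$). Verifying that the column disks and the two masking disks have the claimed containment properties reduces to solving a quadratic for the radius of a disk tangent to a horizontal line that covers a given rectangle on one side, and is where the choice $L\gg kH$ is used. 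The main obstacle is essentially bookkeeping: one must check that no spurious coverage or lack thereof outside the small hole neighborhoods can change the outcome of the final query, so that the equivalence ``$B$ is covered $\Leftrightarrow J\cap F_{i'}=\emptyset$'' holds throughout all three steps.
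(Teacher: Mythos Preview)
Your approach is correct in outline and genuinely different from the paper's. The paper does not build the covering from scratch: it simply reuses the reduction for Largest Empty Disk in a Set of Disks and observes that if every disk's squared radius is increased additively by the threshold $25k^2+4$, then ``$B$ is not covered'' becomes equivalent to ``the original configuration admits an empty disk of that squared radius,'' which by the previous lemma encodes $J\cap F_{i'}\neq\emptyset$. This buys the paper an almost one-line argument piggy-backing on earlier work. Your direct grid construction is self-contained and closer in spirit to the square-based reductions earlier in the paper (e.g.\ Square Covering with Squares), but it carries more geometric overhead that you have not fully discharged.

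Two points in your write-up need tightening. First, a disk tangent to $y=i'H+H/2$ from above cannot contain all of $B\cap\{y\ge i'H+H/2\}$ (it meets that line only at the tangent point); what you actually need, and what is achievable, is that it contain every \emph{hole position} in rows $>i'$, while the fixed disks already cover the remainder of that half of $B$. Second, the ``elementary local construction'' of the fixed disks is more delicate than you suggest, precisely because you chose $L=(k+1)H$. The cells are then wide and short, so a naive top or bottom disk covering the upper or lower half of a cell must have radius $\Theta(L^2/H)\gg H$ and will swallow the hole positions in adjacent rows of the same column, spoiling the invariant that the fixed disks miss every $p_{i',j'}$ with $j'\in F_{i'}$. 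One valid fix is a two-stage cover: first use left/right disks of radius below $L/2$ (hence avoiding neighbouring columns' holes) to cover the cell except a thin vertical strip of width $\Theta(H^2/L)$ through $p_{i,j}$, and then cover that strip minus the hole with small top/bottom disks of radius below $H/2$ (hence avoiding neighbouring rows' holes). With these adjustments the construction goes through with $O(1)$ disks per cell and the parameters of Scenario~2 are met.
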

\begin{proof}[Proof Sketch]
We can use almost the same proof as for Largest Empty Disk in a Set of Disks, where we increase the squared radius of every disk by $25k^2 + 4$ (additively). Then there is a point in $B$ which is not covered by these disks with increased radius if and only if there is an empty disk of radius $25k^2 +4$ in the original set of disks. The rest of the proof follows similarly.
\end{proof}

Again, one can easily adapt this lower bound to variants where $B$ is some other shape to cover, such as a square, a triangle or a disk.

Adapting the proof to the incremental setting gives the following.
\begin{lemma}
For any incremental data structure $D$ for Rectangle Covering with Disks with per-operation runtime guarantees, Scenario 1 in Theorem \ref{thm:general_3sum} applies (with $t_{uq} = t_u + t_q$).
\end{lemma}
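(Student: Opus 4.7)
The plan is to adapt the construction from the previous lemma (the fully-dynamic case for Rectangle Covering with Disks, whose geometry inherits from Largest Empty Disk in a Set of Disks via the trick of inflating every squared radius by $25k^2+4$) by applying exactly the same insertion-only template used earlier in the paper for Counting Maximal Points and Counting Extremal Points. The key change is in what is placed when, so that Step 2 only requires insertions.

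First, I would perform Step 1 by inserting the ``grid'' disks that encode the family $\mathcal{F}$ together with the two large bounding disks $C_1$ and $C_2$ placed above and below the grid to enforce the row constraint on candidate empty disks; crucially, the blocker disks $b_j$ are \emph{not} pre-inserted. The total number of disks stored is still $n = O(m\cdot k)$, matching the first hypothesis of Scenario~1. To carry out Step 2 when given $J\subset\{1,\ldots,m\}$, I would insert a blocker disk $b_j$ (with the same inflated radius as in the fully-dynamic proof) for every $j\notin J$. This uses $u_J = O(m)$ insertions, which is exactly what Scenario~1 allows, as opposed to the $O(|J|)$ deletions used by Scenario~2 in the fully-dynamic reduction. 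After this step, the configuration of disks is identical to the one obtained after Step 2 in the fully-dynamic proof: in both cases the blocker disks present are precisely $\{b_j : j\notin J\}$.

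Step 3 then proceeds identically to the fully-dynamic case. Given $1\leq i'\leq k$, I insert the two auxiliary disks that restrict any uncovered point in $B$ to lie in the $i'$-th row of the grid and then issue a single covering query; the rectangle $B$ fails to be covered if and only if $J\cap F_{i'}\neq\emptyset$. The whole Step 3 consists of $O(1)$ updates and queries. Since only insertions are performed throughout, the procedure runs on a purely incremental data structure, and because we are assuming per-operation (non-amortized) runtime guarantees, the standard augmentation lemma of Section~\ref{sec:general_reductions} lets us assume $D$ supports undo at no asymptotic cost, so Theorem~\ref{thm:general_3sum} applies with $t_{uq}=t_u+t_q$.

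I do not expect a real obstacle here. The only point worth double-checking is that the geometric analysis sketched for Largest Empty Disk in a Set of Disks, and inherited by Rectangle Covering with Disks through the radius-inflation trick, does not depend on the temporal order in which blockers and grid disks are added. This is clear, since those arguments only concern the final static configuration and the fact that the inflated blockers exactly cancel the uncovered spots at positions $(i,j)$ with $j\notin J$ while leaving those with $j\in J\cap F_{i'}$ visible.
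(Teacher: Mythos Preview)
Your approach is correct and matches the paper's: the paper's own proof is literally the one-line remark ``Adapting the proof to the incremental setting gives the following,'' i.e., exactly the template you describe of omitting the blocker disks $b_j$ in Step~1 and inserting $b_j$ for each $j\notin J$ in Step~2, yielding $u_J=O(m)$ and hence Scenario~1.

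One small slip worth fixing: you place the two large disks $C_1,C_2$ in Step~1 ``to enforce the row constraint,'' but in the construction you are inheriting from (Largest Empty Disk in a Set of Disks), these disks depend on the index $i'$, which is only revealed in Step~3. Indeed, you correctly describe in your Step~3 that you ``insert the two auxiliary disks that restrict any uncovered point in $B$ to lie in the $i'$-th row''; those \emph{are} $C_1$ and $C_2$. So just drop the mention of $C_1,C_2$ from Step~1. This does not affect the object count ($n=O(m\cdot k)$ still holds) or the validity of the argument.
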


As this problem reduces easily to Largest Empty Disk in a Set of Disks, we get the same upper bounds here.

\subsection{Convex Layer Size in \texorpdfstring{$\R^2$}{R2}}\label{subsec:convex_layer_size}
\subparagraph*{Convex Layer Size in $\R^2$:} Maintain a set $S$ of $n$ points in the plane to support queries asking for the number of vertices on the $i$'th convex layer (for any $1\leq i \leq n$).

\begin{lemma}\label{thm:convex_layers}
For any fully-dynamic data structure $D$ for Convex Layer Size in $\R^2$, Scenario 2 in Theorem \ref{thm:general_3sum} applies (with $t_{uq} = t_q$).
\end{lemma}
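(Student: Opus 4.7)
The plan is to show that any fully-dynamic data structure $D$ for Convex Layer Size in $\R^2$ fits the hypotheses of Scenario 2 of Theorem \ref{thm:general_3sum} with $t_{uq} = t_q$. Given the input family $\mathcal{F} = \{F_1, \ldots, F_k\}$, we first pad it into $\mathcal{F}' = \{F'_1, \ldots, F'_{2k}\}$ by setting $F'_{2i-1} = F_i$ and $F'_{2i} = \emptyset$; the even-indexed ``spacer'' sets will let us read off the desired quantity from a single query. We then place $\Theta(mk)$ points on a nested family of near-regular $m$-gons centered at the origin. Let $\theta_j = 2\pi j/m$, choose radii $R_1 > R_2 > \cdots > R_{2k}$ with consecutive ratio $\rho = R_{\ell-1}/R_\ell$ in the range $(1, 1/\cos(2\pi/m))$, an offset $0 < \delta < R_{\ell-1} - R_\ell$, and a much smaller angular perturbation $\phi > 0$. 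In Step 1, for every $\ell$ and every $j$ we insert an anchor $v_{\ell,j}$ at $(\theta_j, R_\ell)$, and for every $\ell$ and every $j \in F'_\ell$ we additionally insert a tooth $q_{\ell,j}$ at $(\theta_j + \phi, R_\ell + \delta)$. Each tooth lies just outside $P_\ell$ but inside $P_{\ell-1}$, so it becomes a vertex of the $\ell$-th convex layer; the initial layer sizes are therefore $m + |F'_\ell|$. We also precompute $|F'_\ell|$ for every $\ell$.

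In Step 2, for each $j \in J$ we insert a single ``cover'' point $c_j$ at $(\theta_j, R_0)$ with $R_0 = \rho R_1$; this uses exactly $|J|$ updates. Because $c_j$ and every anchor $v_{\ell,j}$ in column $j$ lie on a common ray through the origin and $c_j$ sits just outside $P_1$, the insertion triggers a clean cascade along that ray: $c_j$ replaces $v_{1,j}$ on layer $1$, the displaced $v_{1,j}$ replaces $v_{2,j}$ on layer $2$, and so on down to $v_{2k-1,j}$ displacing $v_{2k,j}$. The tiny angular offset $\phi$ is chosen so that the layer number of a tooth $q_{\ell,j}$ is determined essentially only by the anchor at angle $\theta_j$ on that layer: if column $j$ cascaded then that anchor has radius $R_{\ell-1}$, which hides $q_{\ell,j}$ from layer $\ell$ and pushes it to layer $\ell+1$; otherwise the anchor has radius $R_\ell$ and $q_{\ell,j}$ stays on layer $\ell$. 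A chord-radius calculation confirms that taking $\phi \cdot m \cdot (R_0 - R_{2k}) \ll \delta$ makes each cascade localized to its own column, so the $|J|$ cascades are independent. Counting the vertices of layer $\ell$ after Step 2 gives $m + |F'_\ell \setminus J| + |F'_{\ell-1} \cap J|$; because $F'_{2i}$ is empty and $F'_{2i-1} = F_i$, the size of layer $2i'$ simplifies to exactly $m + |F_{i'} \cap J|$.

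Step 3 is then a single query: ask $D$ for the size $s$ of the $2i'$-th convex layer and answer ``$J \cap F_{i'} \neq \emptyset$'' if and only if $s > m$. No updates are performed, so $t_{uq} = t_q$, and the whole reduction satisfies $n = \Theta(mk)$ and $u_J = O(|J|)$, fitting Scenario 2 of Theorem \ref{thm:general_3sum}. The main obstacle will be to carefully verify that the cascade behaves exactly as described even when many columns cascade simultaneously; this reduces to checking the relevant chord-radius inequalities in all four local configurations of two adjacent columns with respect to $J$. Finite-precision issues are handled by scaling all coordinates to an integer grid as in the proof of Theorem \ref{thm:extremal_pts_ds}, so the construction is realizable on the Word RAM with $O(\log n)$-bit coordinates.
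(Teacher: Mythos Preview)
Your proposal is correct and follows essentially the same approach as the paper: both constructions place $\Theta(mk)$ points on a nested family of near-regular $m$-gons so that a single layer-size query reveals whether $J\cap F_{i'}$ is empty, yielding Scenario~2 with $t_{uq}=t_q$. The only noteworthy differences are cosmetic: you \emph{insert} outer ``cover'' points $c_j$ in Step~2 whereas the paper \emph{deletes} outer ``blocker'' points $b_j$, and you obtain the needed layer separation by padding $\mathcal{F}$ with empty even-indexed sets while the paper instead places two anchor radii $q_{i,\bullet},q'_{i,\bullet}$ per index~$i$; both devices serve the same purpose and the remaining verification (the local chord--radius inequalities you flag as the main obstacle) is of the same nature in either variant.
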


\begin{figure}
    \centering
    \includegraphics[scale=0.59]{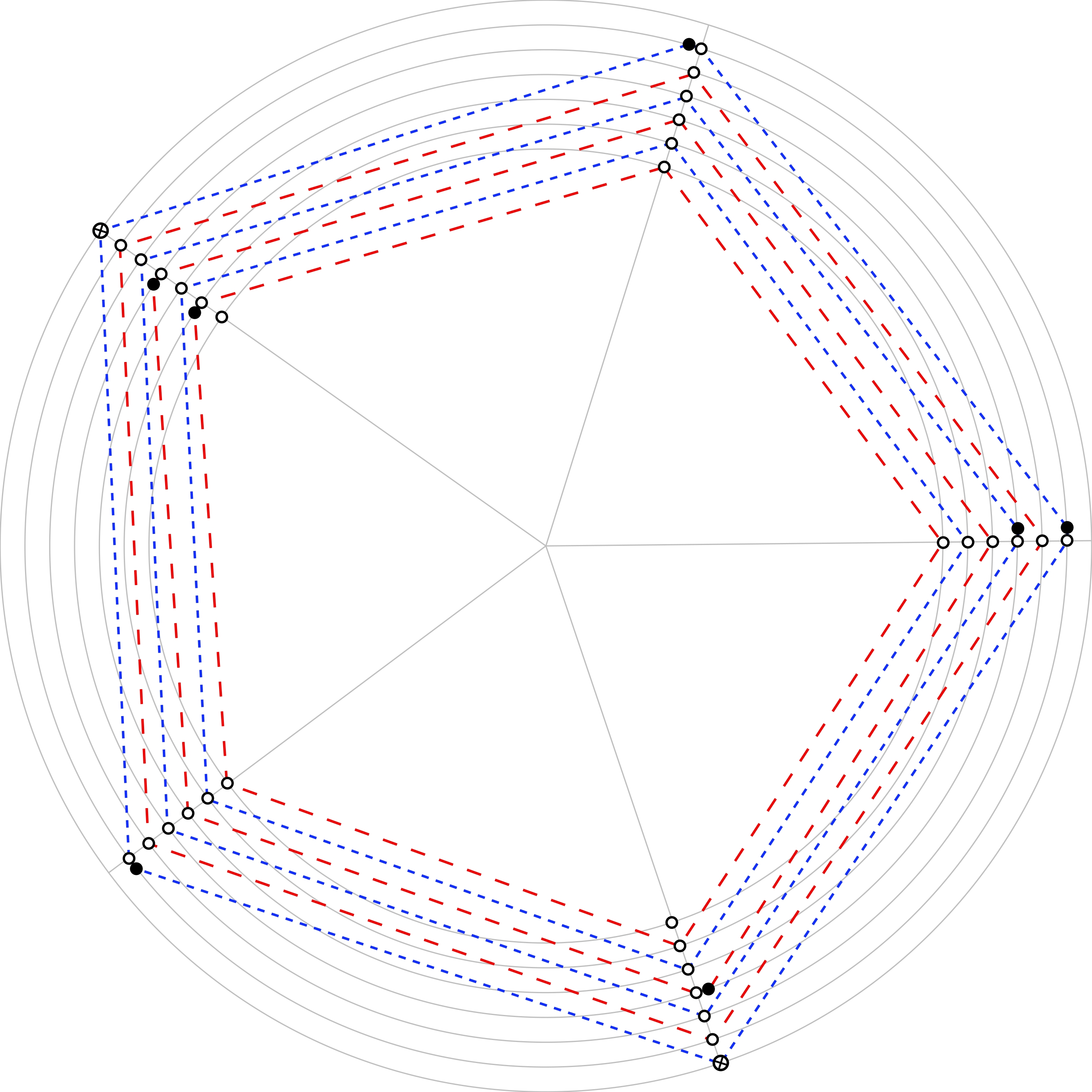}
    \caption{Illustration (not to scale) of a set of points obtained after Step 2 in the proof of Theorem \ref{thm:convex_layers}. The illustrated instance has $m=5$, $k=3$, $\mathcal{F} = \{\{1,3,5\}, \{2,4,5\}, \{2\} \}$ and $J = \{1,3,5\}$. The points $q_{\bullet, \bullet}$ and $q'_{\bullet, \bullet}$ are represented in white, the points $p_{\bullet, \bullet}$ are represented in black. The points $b_{\bullet}$ are represented with a cross in a white circle. The convex layers are represented by dotted lines (except for the inner most degenerate layer consisting of two points).}
    \label{fig:layers}
\end{figure}

\begin{proof}[Proof Sketch]
Let $\mathcal{F}=\{F_1, \ldots, F_k\}$ be a family of $k$ subsets of $\{1,2,\ldots m\}$. Suppose without loss of generality that $m\geq 3$.

Here it will be more convenient to work in polar coordinates $(r, \theta)$ (where this for example would denote the point $(r\cos\theta, r\sin\theta)$ in Cartesian coordinates). We will stick with this convention for the whole proof. 

Let $\epsilon = \frac{1}{km^2}$ and $\alpha = \frac{1}{km^3}$. We perform Step 1 by initializing $D$ with all the following points.
\begin{itemize}
    \item For all $1 \leq i \leq k$ and $1\leq j \leq m$, put a point $q_{i,j}$ at polar coordinates $(1+2(k-i)\epsilon, \frac{2\pi}{m}j)$ and a point $q'_{i,j}$ at polar coordinates $(1+2(k-i)\epsilon + \epsilon, \frac{2\pi}{m}j)$.
    \item For all $1 \leq i \leq k$ and $1\leq j \leq m$ such that $j\in F_i$ put a point $p_{i,j}$ at polar coordinates $(1+2(k-i)\epsilon + \epsilon, \frac{2\pi}{m}j+\alpha)$.
    \item For all $1\leq j \leq m$, put a point $b_j$ at polar coordinates $(1+2k\epsilon , \frac{2\pi}{m}j)$.
\end{itemize}

The total number of points is $O(m\cdot k)$.

To perform Step 2 when given $J \subset \{1,2,\ldots m\}$, we delete the points $b_{j}$ for all $j\in J$. This requires $O(|J|)$ updates on $D$. See Figure \ref{fig:layers} for an illustration.

At this point, assuming $J \neq \{1,2,\ldots m\}$, the set of points $S$ has exactly $2k+1$ convex layers. Every convex layer except for the last (the innermost) has $m+1$ vertices which are not of the form $p_{\bullet, \bullet}$. Moreover, every point of the type $p_{i,j}$ appears on layer $2i-1$ if $b_j \not\in S$ (i.e.\  $j \in J$) or layer $2i$ if $b_j \in S$ (i.e.\  $j \not\in J$). Thus, layer $2i-1$ is of size $m+1$ if and only if there is no point $p_{i,j}$ such that $j \in S$. In other words, layer $2i-1$ is of size $m+1$ if and only if $J\cap F_i = \emptyset$. We can thus answer an intersection query in Step 3 by a single query to $D$.

Just as for the proof of Theorem \ref{thm:extremal_pts_ds}, one can show that this still works if we limit the precision of the coordinates to $O(\log n)$ bits of precision. 
\end{proof}

Adapting the proof to the incremental setting gives the following.
\begin{lemma}
For any incremental data structure $D$ for Convex Layer Size in $\R^2$ with per-operation runtime guarantees, Scenario 1 in Theorem \ref{thm:general_3sum} applies (with $t_{uq} = t_q$).
\end{lemma}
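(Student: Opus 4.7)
The plan is to adapt the construction used in the proof of the preceding fully-dynamic lemma to the incremental setting in essentially the same way that earlier incremental variants (e.g., Counting Maximal Points, Klee's Measure with Squares) were derived from their fully-dynamic counterparts in this paper. Concretely, I would keep the family of points $q_{i,j}$, $q'_{i,j}$ and $p_{i,j}$ defined with the same polar coordinates, spacing parameters $\epsilon = 1/(km^2)$ and $\alpha=1/(km^3)$, and the same rounding argument at the end to confine all coordinates to $O(\log n)$ bits. The only structural change is to shift the insertion of the $b_j$ points from Step 1 to Step 2, so that no deletion is ever required.

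More precisely, in Step 1 I would initialize $D$ with all points $q_{i,j}$, $q'_{i,j}$, and with $p_{i,j}$ for every pair $(i,j)$ satisfying $j\in F_i$, but not insert any $b_j$. This still uses $n=O(m\cdot k)$ insertions. In Step 2, on input $J\subset\{1,\ldots,m\}$, I would insert $b_j$ at polar coordinates $(1+2k\epsilon,\frac{2\pi}{m}j)$ for every $j\notin J$, which is at most $m$ insertions and hence fits the $u_J = O(m)$ requirement of Scenario~1. After these insertions, the configuration of points present in $S$ is identical to the one reached at the end of Step 2 in the fully-dynamic proof: the $b_j$ present are exactly those with $j\notin J$, so every point $p_{i,j}$ with $j\in F_i$ lies on convex layer $2i-1$ precisely when $j\in J$, and on layer $2i$ otherwise. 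Step 3 is then completely unchanged: a single query for the size of layer $2i'-1$ decides whether $J\cap F_{i'}=\emptyset$, since the layer has $m+1$ vertices iff no $p_{i',j}$ was promoted to it.

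To conclude, I would verify that the bookkeeping indeed matches Scenario~1 of Theorem~\ref{thm:general_3sum}: preprocessing and Step~1 together cost $O(t_p+t_u\cdot n)$, Step~2 costs $O(t_u\cdot m)$ with $u_J=O(m)$, and Step~3 uses $O(1)$ queries only, so we may take $t_{uq}=t_q$ exactly as in the fully-dynamic lemma. The finite-precision argument from the proof of Theorem~\ref{thm:extremal_pts_ds} carries over verbatim because the minimal separation between layers is a fixed polynomial in $1/(km)$, well above any rounding error we need to introduce.

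I do not foresee a real obstacle here: the analysis of which points land on which convex layer depends only on the final point set after Step~2, not on the order of insertion, so the correctness argument from the fully-dynamic proof transfers without modification. The only thing to double-check is that the intermediate configurations occurring during the $O(m)$ insertions in Step~2 remain valid inputs to $D$ (they are, since at every moment the point set is a finite planar point set), and that per-operation (non-amortized) guarantees justify summing the costs directly without amortization worries, which is exactly the setting in which Scenario~1 is stated.
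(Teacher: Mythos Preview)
Your proposal is correct and follows exactly the adaptation pattern the paper uses throughout: omit the $b_\bullet$ points in Step~1 and insert $b_j$ for $j\notin J$ in Step~2, yielding $u_J=O(m)$ and hence Scenario~1, with Step~3 unchanged (a single query, so $t_{uq}=t_q$). The paper's own proof is just the one-line remark that the fully-dynamic argument adapts in this standard way, so your write-up is in fact more detailed than the paper's.
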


For this problem we could use a similar approach to the one we used for the Depth Problem with Squares to show lower bounds for incremental data structures with amortized runtime, conditioned on the hardness of the OMv problem (instead of deleting the points $b_j$ to go back to the state of the data structure before Step 3, we continue adding new convex layers and adapt the queries in Step 3 accordingly).

Note that in the static case, Chazelle \cite{Chazelle1985} showed how to compute all convex layers in $O(n\log n)$ time.

\bibliography{refs}

\begin{thebibliography}{10}

\bibitem{AbboudD2016}
Amir Abboud and S{\o}ren Dahlgaard.
\newblock Popular conjectures as a barrier for dynamic planar graph algorithms.
\newblock In Irit Dinur, editor, {\em {IEEE} 57th Annual Symposium on
  Foundations of Computer Science, {FOCS} 2016, 9-11 October 2016, Hyatt
  Regency, New Brunswick, New Jersey, {USA}}, pages 477--486. {IEEE} Computer
  Society, 2016.
\newblock \href {https://doi.org/10.1109/FOCS.2016.58}
  {\path{doi:10.1109/FOCS.2016.58}}.

\bibitem{AbboudW2014}
Amir Abboud and Virginia~Vassilevska Williams.
\newblock Popular conjectures imply strong lower bounds for dynamic problems.
\newblock In {\em 55th {IEEE} Annual Symposium on Foundations of Computer
  Science, {FOCS} 2014, Philadelphia, PA, USA, October 18-21, 2014}, pages
  434--443. {IEEE} Computer Society, 2014.
\newblock \href {https://doi.org/10.1109/FOCS.2014.53}
  {\path{doi:10.1109/FOCS.2014.53}}.

\bibitem{AbboudWY2018}
Amir Abboud, Virginia~Vassilevska Williams, and Huacheng Yu.
\newblock Matching triangles and basing hardness on an extremely popular
  conjecture.
\newblock {\em {SIAM} J. Comput.}, 47(3):1098--1122, 2018.
\newblock \href {https://doi.org/10.1137/15M1050987}
  {\path{doi:10.1137/15M1050987}}.

\bibitem{AgarwalGM2002_coloredSearching}
Pankaj~K. Agarwal, Sathish Govindarajan, and S.~Muthukrishnan.
\newblock Range searching in categorical data: Colored range searching on grid.
\newblock In Rolf~H. M{\"{o}}hring and Rajeev Raman, editors, {\em Algorithms -
  {ESA} 2002, 10th Annual European Symposium, Rome, Italy, September 17-21,
  2002, Proceedings}, volume 2461 of {\em Lecture Notes in Computer Science},
  pages 17--28. Springer, 2002.
\newblock \href {https://doi.org/10.1007/3-540-45749-6\_6}
  {\path{doi:10.1007/3-540-45749-6\_6}}.

\bibitem{AlmanMW2017_finegrained}
Josh Alman, Matthias Mnich, and Virginia~Vassilevska Williams.
\newblock Dynamic parameterized problems and algorithms.
\newblock In Ioannis Chatzigiannakis, Piotr Indyk, Fabian Kuhn, and Anca
  Muscholl, editors, {\em 44th International Colloquium on Automata, Languages,
  and Programming, {ICALP} 2017, July 10-14, 2017, Warsaw, Poland}, volume~80
  of {\em LIPIcs}, pages 41:1--41:16. Schloss Dagstuhl - Leibniz-Zentrum
  f{\"{u}}r Informatik, 2017.
\newblock \href {https://doi.org/10.4230/LIPIcs.ICALP.2017.41}
  {\path{doi:10.4230/LIPIcs.ICALP.2017.41}}.

\bibitem{AmirCLL2014_finegrained}
Amihood Amir, Timothy~M. Chan, Moshe Lewenstein, and Noa Lewenstein.
\newblock On hardness of jumbled indexing.
\newblock In Javier Esparza, Pierre Fraigniaud, Thore Husfeldt, and Elias
  Koutsoupias, editors, {\em Automata, Languages, and Programming - 41st
  International Colloquium, {ICALP} 2014, Copenhagen, Denmark, July 8-11, 2014,
  Proceedings, Part {I}}, volume 8572 of {\em Lecture Notes in Computer
  Science}, pages 114--125. Springer, 2014.
\newblock \href {https://doi.org/10.1007/978-3-662-43948-7\_10}
  {\path{doi:10.1007/978-3-662-43948-7\_10}}.

\bibitem{AmirKLPPS2019-finegrained}
Amihood Amir, Tsvi Kopelowitz, Avivit Levy, Seth Pettie, Ely Porat, and B.~Riva
  Shalom.
\newblock Mind the gap! - online dictionary matching with one gap.
\newblock {\em Algorithmica}, 81(6):2123--2157, 2019.
\newblock \href {https://doi.org/10.1007/s00453-018-0526-2}
  {\path{doi:10.1007/s00453-018-0526-2}}.

\bibitem{Baran2008}
Ilya Baran, Erik~D. Demaine, and Mihai Patrascu.
\newblock Subquadratic algorithms for {3SUM}.
\newblock {\em Algorithmica}, 50(4):584--596, 2008.
\newblock \href {https://doi.org/10.1007/s00453-007-9036-3}
  {\path{doi:10.1007/s00453-007-9036-3}}.

\bibitem{BaswanaCC02016-finegrained}
Surender Baswana, Shreejit~Ray Chaudhury, Keerti Choudhary, and Shahbaz Khan.
\newblock Dynamic {DFS} in undirected graphs: breaking the o(\emph{m}) barrier.
\newblock In Robert Krauthgamer, editor, {\em Proceedings of the Twenty-Seventh
  Annual {ACM-SIAM} Symposium on Discrete Algorithms, {SODA} 2016, Arlington,
  VA, USA, January 10-12, 2016}, pages 730--739. {SIAM}, 2016.
\newblock \href {https://doi.org/10.1137/1.9781611974331.ch52}
  {\path{doi:10.1137/1.9781611974331.ch52}}.

\bibitem{Bentley2015}
Jon~Louis Bentley.
\newblock Multidimensional binary search trees used for associative searching.
\newblock {\em Commun. {ACM}}, 18(9):509--517, 1975.
\newblock URL: \url{http://doi.acm.org/10.1145/361002.361007}, \href
  {https://doi.org/10.1145/361002.361007} {\path{doi:10.1145/361002.361007}}.

\bibitem{BerkholzKS2017-finegrained}
Christoph Berkholz, Jens Keppeler, and Nicole Schweikardt.
\newblock Answering conjunctive queries under updates.
\newblock In Emanuel Sallinger, Jan~Van den Bussche, and Floris Geerts,
  editors, {\em Proceedings of the 36th {ACM} {SIGMOD-SIGACT-SIGAI} Symposium
  on Principles of Database Systems, {PODS} 2017, Chicago, IL, USA, May 14-19,
  2017}, pages 303--318. {ACM}, 2017.
\newblock \href {https://doi.org/10.1145/3034786.3034789}
  {\path{doi:10.1145/3034786.3034789}}.

\bibitem{BerkholzKS2018-finegrained}
Christoph Berkholz, Jens Keppeler, and Nicole Schweikardt.
\newblock Answering {UCQ}s under updates and in the presence of integrity
  constraints.
\newblock In Benny Kimelfeld and Yael Amsterdamer, editors, {\em 21st
  International Conference on Database Theory, {ICDT} 2018, March 26-29, 2018,
  Vienna, Austria}, volume~98 of {\em LIPIcs}, pages 8:1--8:19. Schloss
  Dagstuhl - Leibniz-Zentrum f{\"{u}}r Informatik, 2018.
\newblock \href {https://doi.org/10.4230/LIPIcs.ICDT.2018.8}
  {\path{doi:10.4230/LIPIcs.ICDT.2018.8}}.

\bibitem{Bringmann2013}
Karl Bringmann.
\newblock Bringing order to special cases of {K}lee's measure problem.
\newblock In Krishnendu Chatterjee and Jir{\'{\i}} Sgall, editors, {\em
  Mathematical Foundations of Computer Science 2013 - 38th International
  Symposium, {MFCS} 2013, Klosterneuburg, Austria, August 26-30, 2013.
  Proceedings}, volume 8087 of {\em Lecture Notes in Computer Science}, pages
  207--218. Springer, 2013.
\newblock \href {https://doi.org/10.1007/978-3-642-40313-2\_20}
  {\path{doi:10.1007/978-3-642-40313-2\_20}}.

\bibitem{Bringmann2019}
Karl Bringmann.
\newblock Fine-grained complexity theory (tutorial).
\newblock In Rolf Niedermeier and Christophe Paul, editors, {\em 36th
  International Symposium on Theoretical Aspects of Computer Science, {STACS}
  2019, March 13-16, 2019, Berlin, Germany}, volume 126 of {\em LIPIcs}, pages
  4:1--4:7. Schloss Dagstuhl - Leibniz-Zentrum f{\"{u}}r Informatik, 2019.
\newblock \href {https://doi.org/10.4230/LIPIcs.STACS.2019.4}
  {\path{doi:10.4230/LIPIcs.STACS.2019.4}}.

\bibitem{Bringmann2021}
Karl Bringmann.
\newblock Fine-grained complexity theory: Conditional lower bounds for
  computational geometry.
\newblock In Liesbeth~De Mol, Andreas Weiermann, Florin Manea, and David
  Fern{\'{a}}ndez{-}Duque, editors, {\em Connecting with Computability - 17th
  Conference on Computability in Europe, CiE 2021, Virtual Event, Ghent, July
  5-9, 2021, Proceedings}, volume 12813 of {\em Lecture Notes in Computer
  Science}, pages 60--70. Springer, 2021.
\newblock \href {https://doi.org/10.1007/978-3-030-80049-9\_6}
  {\path{doi:10.1007/978-3-030-80049-9\_6}}.

\bibitem{CardinalIK2021}
Jean Cardinal, John Iacono, and Grigorios Koumoutsos.
\newblock Worst-case efficient dynamic geometric independent set.
\newblock In Petra Mutzel, Rasmus Pagh, and Grzegorz Herman, editors, {\em 29th
  Annual European Symposium on Algorithms, {ESA} 2021, September 6-8, 2021,
  Lisbon, Portugal (Virtual Conference)}, volume 204 of {\em LIPIcs}, pages
  25:1--25:15. Schloss Dagstuhl - Leibniz-Zentrum f{\"{u}}r Informatik, 2021.
\newblock \href {https://doi.org/10.4230/LIPIcs.ESA.2021.25}
  {\path{doi:10.4230/LIPIcs.ESA.2021.25}}.

\bibitem{Chan2003}
Timothy~M. Chan.
\newblock Semi-online maintenance of geometric optima and measures.
\newblock {\em {SIAM} J. Comput.}, 32(3):700--716, 2003.
\newblock \href {https://doi.org/10.1137/S0097539702404389}
  {\path{doi:10.1137/S0097539702404389}}.

\bibitem{Chan2010}
Timothy~M. Chan.
\newblock A (slightly) faster algorithm for {K}lee's measure problem.
\newblock {\em Computational Geometry}, 43(3):243--250, 2010.
\newblock Special Issue on 24th Annual Symposium on Computational Geometry
  (SoCG'08).
\newblock URL:
  \url{https://www.sciencedirect.com/science/article/pii/S0925772109000595},
  \href {https://doi.org/https://doi.org/10.1016/j.comgeo.2009.01.007}
  {\path{doi:https://doi.org/10.1016/j.comgeo.2009.01.007}}.

\bibitem{Chan2020}
Timothy~M. Chan.
\newblock Dynamic geometric data structures via shallow cuttings.
\newblock {\em Discret. Comput. Geom.}, 64(4):1235--1252, 2020.
\newblock \href {https://doi.org/10.1007/s00454-020-00229-5}
  {\path{doi:10.1007/s00454-020-00229-5}}.

\bibitem{Chan2020-3SUM}
Timothy~M. Chan.
\newblock More logarithmic-factor speedups for {3SUM}, (median, +)-convolution,
  and some geometric {3SUM}-hard problems.
\newblock {\em {ACM} Trans. Algorithms}, 16(1):7:1--7:23, 2020.
\newblock \href {https://doi.org/10.1145/3363541} {\path{doi:10.1145/3363541}}.

\bibitem{Chan2022}
Timothy~M. Chan, Qizheng He, Subhash Suri, and Jie Xue.
\newblock Dynamic geometric set cover, revisited.
\newblock In {\em Proceedings of the 2022 Annual ACM-SIAM Symposium on Discrete
  Algorithms (SODA)}, pages 3496--3528, 2022.
\newblock URL:
  \url{https://epubs.siam.org/doi/abs/10.1137/1.9781611977073.139}, \href
  {https://doi.org/10.1137/1.9781611977073.139}
  {\path{doi:10.1137/1.9781611977073.139}}.

\bibitem{ChanH2021_coloredSearching}
Timothy~M. Chan and Zhengcheng Huang.
\newblock Dynamic colored orthogonal range searching.
\newblock In Petra Mutzel, Rasmus Pagh, and Grzegorz Herman, editors, {\em 29th
  Annual European Symposium on Algorithms, {ESA} 2021, September 6-8, 2021,
  Lisbon, Portugal (Virtual Conference)}, volume 204 of {\em LIPIcs}, pages
  28:1--28:13. Schloss Dagstuhl - Leibniz-Zentrum f{\"{u}}r Informatik, 2021.
\newblock \href {https://doi.org/10.4230/LIPIcs.ESA.2021.28}
  {\path{doi:10.4230/LIPIcs.ESA.2021.28}}.

\bibitem{ChanN2020_coloredSearching}
Timothy~M. Chan and Yakov Nekrich.
\newblock Better data structures for colored orthogonal range reporting.
\newblock In Shuchi Chawla, editor, {\em Proceedings of the 2020 {ACM-SIAM}
  Symposium on Discrete Algorithms, {SODA} 2020, Salt Lake City, UT, USA,
  January 5-8, 2020}, pages 627--636. {SIAM}, 2020.
\newblock \href {https://doi.org/10.1137/1.9781611975994.38}
  {\path{doi:10.1137/1.9781611975994.38}}.

\bibitem{ChanWX2022}
Timothy~M. Chan, Virginia~Vassilevska Williams, and Yinzhan Xu.
\newblock Hardness for triangle problems under even more believable hypotheses:
  reductions from real {APSP}, real {3SUM}, and {OV}.
\newblock In Stefano Leonardi and Anupam Gupta, editors, {\em {STOC} '22: 54th
  Annual {ACM} {SIGACT} Symposium on Theory of Computing, Rome, Italy, June 20
  - 24, 2022}, pages 1501--1514. {ACM}, 2022.
\newblock \href {https://doi.org/10.1145/3519935.3520032}
  {\path{doi:10.1145/3519935.3520032}}.

\bibitem{ChanHopcroft2022}
Timothy~M. Chan and Da~Wei Zheng.
\newblock Hopcroft's problem, log-star shaving, 2d fractional cascading, and
  decision trees.
\newblock In {\em Proceedings of the 2022 Annual ACM-SIAM Symposium on Discrete
  Algorithms (SODA)}, pages 190--210, 2022.
\newblock URL: \url{https://epubs.siam.org/doi/abs/10.1137/1.9781611977073.10},
  \href {https://doi.org/10.1137/1.9781611977073.10}
  {\path{doi:10.1137/1.9781611977073.10}}.

\bibitem{Chazelle1985}
Bernard Chazelle.
\newblock On the convex layers of a planar set.
\newblock {\em {IEEE} Trans. Inf. Theory}, 31(4):509--517, 1985.
\newblock \href {https://doi.org/10.1109/TIT.1985.1057060}
  {\path{doi:10.1109/TIT.1985.1057060}}.

\bibitem{ChenDGWXY2018-finegrained}
Lijie Chen, Erik~D. Demaine, Yuzhou Gu, Virginia~Vassilevska Williams, Yinzhan
  Xu, and Yuancheng Yu.
\newblock Nearly optimal separation between partially and fully retroactive
  data structures.
\newblock In David Eppstein, editor, {\em 16th Scandinavian Symposium and
  Workshops on Algorithm Theory, {SWAT} 2018, June 18-20, 2018, Malm{\"{o}},
  Sweden}, volume 101 of {\em LIPIcs}, pages 33:1--33:12. Schloss Dagstuhl -
  Leibniz-Zentrum f{\"{u}}r Informatik, 2018.
\newblock \href {https://doi.org/10.4230/LIPIcs.SWAT.2018.33}
  {\path{doi:10.4230/LIPIcs.SWAT.2018.33}}.

\bibitem{CLRS}
Thomas~H. Cormen, Charles~E. Leiserson, Ronald~L. Rivest, and Clifford Stein.
\newblock {\em Introduction to Algorithms, 3rd Edition}.
\newblock {MIT} Press, 2009.
\newblock URL: \url{http://mitpress.mit.edu/books/introduction-algorithms}.

\bibitem{Dahlgaard2016_finegrained}
S{\o}ren Dahlgaard.
\newblock On the hardness of partially dynamic graph problems and connections
  to diameter.
\newblock In Ioannis Chatzigiannakis, Michael Mitzenmacher, Yuval Rabani, and
  Davide Sangiorgi, editors, {\em 43rd International Colloquium on Automata,
  Languages, and Programming, {ICALP} 2016, July 11-15, 2016, Rome, Italy},
  volume~55 of {\em LIPIcs}, pages 48:1--48:14. Schloss Dagstuhl -
  Leibniz-Zentrum f{\"{u}}r Informatik, 2016.
\newblock \href {https://doi.org/10.4230/LIPIcs.ICALP.2016.48}
  {\path{doi:10.4230/LIPIcs.ICALP.2016.48}}.

\bibitem{Fowler1981}
Robert~J. Fowler, Mike Paterson, and Steven~L. Tanimoto.
\newblock Optimal packing and covering in the plane are np-complete.
\newblock {\em Inf. Process. Lett.}, 12(3):133--137, 1981.
\newblock \href {https://doi.org/10.1016/0020-0190(81)90111-3}
  {\path{doi:10.1016/0020-0190(81)90111-3}}.

\bibitem{Frandsen2001}
Gudmund~Skovbjerg Frandsen, Johan~P Hansen, and Peter~Bro Miltersen.
\newblock Lower bounds for dynamic algebraic problems.
\newblock {\em Information and Computation}, 171(2):333--349, 2001.
\newblock URL:
  \url{https://www.sciencedirect.com/science/article/pii/S0890540101930469},
  \href {https://doi.org/https://doi.org/10.1006/inco.2001.3046}
  {\path{doi:https://doi.org/10.1006/inco.2001.3046}}.

\bibitem{Gajentaan1995}
Anka Gajentaan and Mark~H. Overmars.
\newblock On a class of o(n2) problems in computational geometry.
\newblock {\em Comput. Geom.}, 5:165--185, 1995.
\newblock \href {https://doi.org/10.1016/0925-7721(95)00022-2}
  {\path{doi:10.1016/0925-7721(95)00022-2}}.

\bibitem{Gronlund2014}
Allan Gr{\o}nlund and Seth Pettie.
\newblock Threesomes, degenerates, and love triangles.
\newblock In {\em 55th {IEEE} Annual Symposium on Foundations of Computer
  Science, {FOCS} 2014, Philadelphia, PA, USA, October 18-21, 2014}, pages
  621--630. {IEEE} Computer Society, 2014.
\newblock \href {https://doi.org/10.1109/FOCS.2014.72}
  {\path{doi:10.1109/FOCS.2014.72}}.

\bibitem{GuerreiroFP2021}
Andreia~P. Guerreiro, Carlos~M. Fonseca, and Lu{\'{\i}}s Paquete.
\newblock The hypervolume indicator: Computational problems and algorithms.
\newblock {\em {ACM} Comput. Surv.}, 54(6):119:1--119:42, 2021.
\newblock \href {https://doi.org/10.1145/3453474} {\path{doi:10.1145/3453474}}.

\bibitem{Gupta2018_coloredSearching}
Prosenjit Gupta, Ravi Janardan, Saladi Rahul, and Michiel H.~M. Smid.
\newblock Computational geometry: Generalized (or colored) intersection
  searching.
\newblock In Dinesh~P. Mehta and Sartaj Sahni, editors, {\em Handbook of Data
  Structures and Applications}, chapter~67, page 1042–1057. CRC Press, 2nd
  edition, 2018.
\newblock URL:
  \url{https://www-users.cs.umn.edu/~sala0198/Papers/ds2-handbook.pdf}.

\bibitem{GuptaJS1995_coloredSearching}
Prosenjit Gupta, Ravi Janardan, and Michiel H.~M. Smid.
\newblock Further results on generalized intersection searching problems:
  Counting, reporting, and dynamization.
\newblock {\em J. Algorithms}, 19(2):282--317, 1995.
\newblock \href {https://doi.org/10.1006/jagm.1995.1038}
  {\path{doi:10.1006/jagm.1995.1038}}.

\bibitem{Henzinger2015}
Monika Henzinger, Sebastian Krinninger, Danupon Nanongkai, and Thatchaphol
  Saranurak.
\newblock Unifying and strengthening hardness for dynamic problems via the
  online matrix-vector multiplication conjecture.
\newblock In Rocco~A. Servedio and Ronitt Rubinfeld, editors, {\em Proceedings
  of the Forty-Seventh Annual {ACM} on Symposium on Theory of Computing, {STOC}
  2015, Portland, OR, USA, June 14-17, 2015}, pages 21--30. {ACM}, 2015.
\newblock \href {https://doi.org/10.1145/2746539.2746609}
  {\path{doi:10.1145/2746539.2746609}}.

\bibitem{HenzingerL0W2017_finegrained}
Monika Henzinger, Andrea Lincoln, Stefan Neumann, and Virginia~Vassilevska
  Williams.
\newblock Conditional hardness for sensitivity problems.
\newblock In Christos~H. Papadimitriou, editor, {\em 8th Innovations in
  Theoretical Computer Science Conference, {ITCS} 2017, January 9-11, 2017,
  Berkeley, CA, {USA}}, volume~67 of {\em LIPIcs}, pages 26:1--26:31. Schloss
  Dagstuhl - Leibniz-Zentrum f{\"{u}}r Informatik, 2017.
\newblock \href {https://doi.org/10.4230/LIPIcs.ITCS.2017.26}
  {\path{doi:10.4230/LIPIcs.ITCS.2017.26}}.

\bibitem{JanardanL1993_coloredSearching}
Ravi Janardan and Mario~Alberto L{\'{o}}pez.
\newblock Generalized intersection searching problems.
\newblock {\em Int. J. Comput. Geom. Appl.}, 3(1):39--69, 1993.
\newblock \href {https://doi.org/10.1142/S021819599300004X}
  {\path{doi:10.1142/S021819599300004X}}.

\bibitem{Jin2022}
Ce~Jin and Yinzhan Xu.
\newblock Tight dynamic problem lower bounds from generalized {BMM} and {OMv}.
\newblock {\em To appear in {STOC'22}. CoRR}, abs/2202.11250, 2022.
\newblock URL: \url{https://arxiv.org/abs/2202.11250}, \href
  {http://arxiv.org/abs/2202.11250} {\path{arXiv:2202.11250}}.

\bibitem{KarczmarzL2015_finegrained}
Adam Karczmarz and Jakub Lacki.
\newblock Fast and simple connectivity in graph timelines.
\newblock In Frank Dehne, J{\"{o}}rg{-}R{\"{u}}diger Sack, and Ulrike Stege,
  editors, {\em Algorithms and Data Structures - 14th International Symposium,
  {WADS} 2015, Victoria, BC, Canada, August 5-7, 2015. Proceedings}, volume
  9214 of {\em Lecture Notes in Computer Science}, pages 458--469. Springer,
  2015.
\newblock \href {https://doi.org/10.1007/978-3-319-21840-3\_38}
  {\path{doi:10.1007/978-3-319-21840-3\_38}}.

\bibitem{Ko2020}
Young~Kun Ko and Min~Jae Song.
\newblock Hardness of approximate nearest neighbor search under l-infinity.
\newblock {\em CoRR}, abs/2011.06135, 2020.
\newblock URL: \url{https://arxiv.org/abs/2011.06135}, \href
  {http://arxiv.org/abs/2011.06135} {\path{arXiv:2011.06135}}.

\bibitem{KopelowitzK2016-finegrained}
Tsvi Kopelowitz and Robert Krauthgamer.
\newblock Color-distance oracles and snippets.
\newblock In Roberto Grossi and Moshe Lewenstein, editors, {\em 27th Annual
  Symposium on Combinatorial Pattern Matching, {CPM} 2016, June 27-29, 2016,
  Tel Aviv, Israel}, volume~54 of {\em LIPIcs}, pages 24:1--24:10. Schloss
  Dagstuhl - Leibniz-Zentrum f{\"{u}}r Informatik, 2016.
\newblock \href {https://doi.org/10.4230/LIPIcs.CPM.2016.24}
  {\path{doi:10.4230/LIPIcs.CPM.2016.24}}.

\bibitem{Kopelowitz2016}
Tsvi Kopelowitz, Seth Pettie, and Ely Porat.
\newblock Higher lower bounds from the {3SUM} conjecture.
\newblock In Robert Krauthgamer, editor, {\em Proceedings of the Twenty-Seventh
  Annual {ACM-SIAM} Symposium on Discrete Algorithms, {SODA} 2016, Arlington,
  VA, USA, January 10-12, 2016}, pages 1272--1287. {SIAM}, 2016.
\newblock \href {https://doi.org/10.1137/1.9781611974331.ch89}
  {\path{doi:10.1137/1.9781611974331.ch89}}.

\bibitem{LarsenW2013_coloredSearching}
Kasper~Green Larsen and Freek van Walderveen.
\newblock Near-optimal range reporting structures for categorical data.
\newblock In Sanjeev Khanna, editor, {\em Proceedings of the Twenty-Fourth
  Annual {ACM-SIAM} Symposium on Discrete Algorithms, {SODA} 2013, New Orleans,
  Louisiana, USA, January 6-8, 2013}, pages 265--276. {SIAM}, 2013.
\newblock \href {https://doi.org/10.1137/1.9781611973105.20}
  {\path{doi:10.1137/1.9781611973105.20}}.

\bibitem{Larsen2017}
Kasper~Green Larsen and R.~Ryan Williams.
\newblock Faster online matrix-vector multiplication.
\newblock In Philip~N. Klein, editor, {\em Proceedings of the Twenty-Eighth
  Annual {ACM-SIAM} Symposium on Discrete Algorithms, {SODA} 2017, Barcelona,
  Spain, Hotel Porta Fira, January 16-19}, pages 2182--2189. {SIAM}, 2017.
\newblock \href {https://doi.org/10.1137/1.9781611974782.142}
  {\path{doi:10.1137/1.9781611974782.142}}.

\bibitem{LauR2021}
Joshua Lau and Angus Ritossa.
\newblock Algorithms and hardness for multidimensional range updates and
  queries.
\newblock In James~R. Lee, editor, {\em 12th Innovations in Theoretical
  Computer Science Conference, {ITCS} 2021, January 6-8, 2021, Virtual
  Conference}, volume 185 of {\em LIPIcs}, pages 35:1--35:20. Schloss Dagstuhl
  - Leibniz-Zentrum f{\"{u}}r Informatik, 2021.
\newblock \href {https://doi.org/10.4230/LIPIcs.ITCS.2021.35}
  {\path{doi:10.4230/LIPIcs.ITCS.2021.35}}.

\bibitem{Mortensen2003_coloredSearching}
Christian~W. Mortensen.
\newblock Generalized static orthogonal range searching in less space.
\newblock Technical Report TR-2003-33, IT University of Copenhagen, Copenhagen,
  Denmark, September 2003.

\bibitem{Nekrich2014_coloredSearching}
Yakov Nekrich.
\newblock Efficient range searching for categorical and plain data.
\newblock {\em {ACM} Trans. Database Syst.}, 39(1):9:1--9:21, 2014.
\newblock \href {https://doi.org/10.1145/2543924} {\path{doi:10.1145/2543924}}.

\bibitem{OvermarsPast1981}
Mark~H. Overmars.
\newblock Searching in the past {II}: General transforms.
\newblock Technical Report RUU-CS-81-9, Department of Computer Science,
  University of Utrecht, Utrecht, The Netherlands, May 1981.

\bibitem{Overmars1981}
Mark~H. Overmars and Jan van Leeuwen.
\newblock Maintenance of configurations in the plane.
\newblock {\em J. Comput. Syst. Sci.}, 23(2):166--204, 1981.
\newblock \href {https://doi.org/10.1016/0022-0000(81)90012-X}
  {\path{doi:10.1016/0022-0000(81)90012-X}}.

\bibitem{Overmars1988}
M.H. Overmars and Chee-Keng Yap.
\newblock New upper bounds in {K}lee's measure problem.
\newblock In {\em [Proceedings 1988] 29th Annual Symposium on Foundations of
  Computer Science}, pages 550--556, 1988.
\newblock \href {https://doi.org/10.1109/SFCS.1988.21971}
  {\path{doi:10.1109/SFCS.1988.21971}}.

\bibitem{Probst2018-finegrained}
Maximilian Probst.
\newblock On the complexity of the (approximate) nearest colored node problem.
\newblock In Yossi Azar, Hannah Bast, and Grzegorz Herman, editors, {\em 26th
  Annual European Symposium on Algorithms, {ESA} 2018, August 20-22, 2018,
  Helsinki, Finland}, volume 112 of {\em LIPIcs}, pages 68:1--68:14. Schloss
  Dagstuhl - Leibniz-Zentrum f{\"{u}}r Informatik, 2018.
\newblock \href {https://doi.org/10.4230/LIPIcs.ESA.2018.68}
  {\path{doi:10.4230/LIPIcs.ESA.2018.68}}.

\bibitem{Patrascu2010}
Mihai P\u{a}tra\c{s}cu.
\newblock Towards polynomial lower bounds for dynamic problems.
\newblock In {\em Proceedings of the Forty-Second ACM Symposium on Theory of
  Computing}, STOC '10, page 603–610, New York, NY, USA, 2010. Association
  for Computing Machinery.
\newblock \href {https://doi.org/10.1145/1806689.1806772}
  {\path{doi:10.1145/1806689.1806772}}.

\bibitem{Rubinstein2018}
Aviad Rubinstein.
\newblock Hardness of approximate nearest neighbor search.
\newblock In Ilias Diakonikolas, David Kempe, and Monika Henzinger, editors,
  {\em Proceedings of the 50th Annual {ACM} {SIGACT} Symposium on Theory of
  Computing, {STOC} 2018, Los Angeles, CA, USA, June 25-29, 2018}, pages
  1260--1268. {ACM}, 2018.
\newblock \href {https://doi.org/10.1145/3188745.3188916}
  {\path{doi:10.1145/3188745.3188916}}.

\bibitem{ShiJ2005_coloredSearching}
Qingmin Shi and Joseph~F. J{\'{a}}J{\'{a}}.
\newblock Optimal and near-optimal algorithms for generalized intersection
  reporting on pointer machines.
\newblock {\em Inf. Process. Lett.}, 95(3):382--388, 2005.
\newblock \href {https://doi.org/10.1016/j.ipl.2005.04.008}
  {\path{doi:10.1016/j.ipl.2005.04.008}}.

\bibitem{BrandNS2019-finegrained}
Jan van~den Brand, Danupon Nanongkai, and Thatchaphol Saranurak.
\newblock Dynamic matrix inverse: Improved algorithms and matching conditional
  lower bounds.
\newblock In David Zuckerman, editor, {\em 60th {IEEE} Annual Symposium on
  Foundations of Computer Science, {FOCS} 2019, Baltimore, Maryland, USA,
  November 9-12, 2019}, pages 456--480. {IEEE} Computer Society, 2019.
\newblock \href {https://doi.org/10.1109/FOCS.2019.00036}
  {\path{doi:10.1109/FOCS.2019.00036}}.

\bibitem{Williams2018}
R.~Ryan Williams.
\newblock Faster all-pairs shortest paths via circuit complexity.
\newblock {\em {SIAM} J. Comput.}, 47(5):1965--1985, 2018.
\newblock \href {https://doi.org/10.1137/15M1024524}
  {\path{doi:10.1137/15M1024524}}.

\bibitem{Williams2019}
Virginia~Vassilevska Williams.
\newblock {\em On some fine-grained questions in algorithms and complexity},
  pages 3447--3487.
\newblock WORLD SCIENTIFIC, 2019.
\newblock URL:
  \url{https://www.worldscientific.com/doi/abs/10.1142/9789813272880_0188},
  \href
  {http://arxiv.org/abs/https://www.worldscientific.com/doi/pdf/10.1142/9789813272880_0188}
  {\path{arXiv:https://www.worldscientific.com/doi/pdf/10.1142/9789813272880_0188}},
  \href {https://doi.org/10.1142/9789813272880_0188}
  {\path{doi:10.1142/9789813272880_0188}}.

\bibitem{Williams2020}
Virginia~Vassilevska Williams and Yinzhan Xu.
\newblock Monochromatic triangles, triangle listing and {APSP}.
\newblock In Sandy Irani, editor, {\em 61st {IEEE} Annual Symposium on
  Foundations of Computer Science, {FOCS} 2020, Durham, NC, USA, November
  16-19, 2020}, pages 786--797. {IEEE}, 2020.
\newblock \href {https://doi.org/10.1109/FOCS46700.2020.00078}
  {\path{doi:10.1109/FOCS46700.2020.00078}}.

\bibitem{Yildiz2011-prob}
Hakan Yıldız, Luca Foschini, John Hershberger, and Subhash Suri.
\newblock The union of probabilistic boxes: Maintaining the volume.
\newblock In Camil Demetrescu and Magn{\'{u}}s~M. Halld{\'{o}}rsson, editors,
  {\em Algorithms - {ESA} 2011 - 19th Annual European Symposium,
  Saarbr{\"{u}}cken, Germany, September 5-9, 2011. Proceedings}, volume 6942 of
  {\em Lecture Notes in Computer Science}, pages 591--602. Springer, 2011.
\newblock \href {https://doi.org/10.1007/978-3-642-23719-5\_50}
  {\path{doi:10.1007/978-3-642-23719-5\_50}}.

\bibitem{Yildiz2011}
Hakan Yıldız, John Hershberger, and Subhash Suri.
\newblock A discrete and dynamic version of {K}lee's measure problem.
\newblock In {\em Proceedings of the 23rd Annual Canadian Conference on
  Computational Geometry, Toronto, Ontario, Canada, August 10-12, 2011}, 2011.
\newblock URL: \url{http://www.cccg.ca/proceedings/2011/papers/paper28.pdf}.

\end{thebibliography}

\end{document}